\def\submission{0}
\def\cameraready{0}

\ifnum\submission=0
\documentclass[11pt,letterpaper]{article}
\else
\documentclass[envcountsect]{llncs}
\pagestyle{plain}
\makeatletter
\renewcommand{\@Opargbegintheorem}[4]{%
  #4\trivlist\item[\hskip\labelsep{#3#2\@thmcounterend}]}
\makeatother
\fi
\usepackage[utf8]{inputenc}
\usepackage{amsmath}
\usepackage{amsfonts}
\usepackage{xcolor}
\usepackage{breakcites}
\ifnum\submission=0
\usepackage{amsthm}
\fi
\usepackage{braket}
\usepackage{authblk}
\usepackage[colorlinks=true,allcolors=blue]{hyperref}
\ifnum\submission=0
\usepackage[margin=1in]{geometry}
\fi
\usepackage{xcolor}
\usepackage{algorithm}
\usepackage{mdframed} 
\mdfdefinestyle{figstyle}{ %
  linecolor=black!7, %
  backgroundcolor=black!7, %
  innertopmargin=10pt, %
  innerleftmargin=\textwidth, %
  innerrightmargin=\textwidth, %
  innerbottommargin=5pt %
}

\usepackage{comment}

\ifnum\submission=0
\newtheorem{theorem}{Theorem}[section]

\newtheorem{lemma}[theorem]{Lemma}

\newtheorem{definition}[theorem]{Definition}

\newtheorem{remark}{Remark}

\fi

\newtheorem{myclaim}[theorem]{Claim}

\def \sample { \overset{\hspace{0.1em}\mathsf{\scriptscriptstyle\$}}{\leftarrow} }

\newcommand{\ra}{\rightarrow}

\newcommand{\pro}{P}
\newcommand{\ver}{V}
\newcommand{\secpar}{\lambda}
\newcommand{\negl}{\mathsf{negl}}
\newcommand{\lang}{L}
\newcommand{\rel}{R}

\newcommand{\st}{\mathsf{st}}
\newcommand{\out}{\mathsf{out}}
\newcommand{\bit}{\{0,1\}}

\newcommand{\poly}{\mathsf{poly}}
\newcommand{\hil}{\mathcal{H}}

\newcommand{\defeq}{:=}

\newcommand{\Bad}{\mathsf{Bad}}
\newcommand{\bad}{\mathsf{bad}}

\newcommand{\redunderline}[1]{\textcolor{red}{\underline{\textcolor{black}{#1}}}}

\newcommand{\A}{\mathcal{A}}
\newcommand{\B}{\mathcal{B}}

\newcommand{\ext}{\mathsf{Ext}}
\newcommand{\Samp}{\mathsf{Samp}}
\newcommand{\SimSamp}{\mathsf{SimSamp}}

\newcommand{\calX}{\mathcal{X}}
\newcommand{\calY}{\mathcal{Y}}

\newcommand{\calD}{\mathcal{D}}

\newcommand{\ot}{\otimes}
\newcommand{\fail}{\mathsf{Fail}}



\newcommand{\NP}{\mathbf{NP}}
\newcommand{\QMA}{\mathbf{QMA}}
\newcommand{\BQP}{\mathbf{BQP}}

\newcommand{\rela}{\mathcal{R}}

\newcommand{\resp}{\mathtt{resp}}

\newcommand{\wipok}{\mathsf{WIPoK}}


\newcommand{\setup}{\mathsf{Setup}}



\newcommand*{\ipro}[2]{\langle #1|#2\rangle}
\newcommand{\TD}{\mathsf{TD}}

\newcommand{\reginp}{\mathsf{Inp}}

\newcommand*{\regR}{\mathbf{R}}
\newcommand*{\regX}{\mathbf{X}}
\newcommand*{\regY}{\mathbf{Y}}
\newcommand*{\regZ}{\mathbf{Z}}
\newcommand{\regW}{\mathbf{W}}

\newcommand*{\regM}{\mathbf{M}}
\newcommand*{\regout}{\mathbf{Out}}
\newcommand*{\reganc}{\mathbf{Anc}}
\newcommand*{\regaux}{\mathbf{Aux}}

\newcommand*{\regA}{\mathbf{A}}
\newcommand*{\regst}{\mathbf{ST}}
\newcommand*{\regB}{\mathbf{B}}
\newcommand*{\regD}{\mathbf{D}}
\newcommand{\regother}{\mathbf{Other}}
\newcommand{\regV}{\mathbf{V}}
\newcommand{\kext}{\mathcal{K}}
\newcommand{\test}{\mathsf{test}}
\newcommand{\siml}{\mathsf{Sim}}
\newcommand{\abort}{\mathsf{a}}
\newcommand{\nonabort}{\mathsf{na}}
\newcommand{\comb}{\mathsf{comb}}

\newcommand{\simlnaone}{\mathsf{Sim}_{\nonabort,1}(x,w,1^{\epsilon^{-1}},\ver^*_\secpar,\rho_\secpar)}
\newcommand{\simlnatwo}{\mathsf{Sim}_{\nonabort,2}(x,w,1^{\epsilon^{-1}},\ver^*_\secpar,\rho_\secpar)}
\newcommand{\simlnathree}{\mathsf{Sim}_{\nonabort,3}(x,w,1^{\epsilon^{-1}},\ver^*_\secpar,\rho_\secpar)}
\newcommand{\simlnafour}{\mathsf{Sim}_{\nonabort,4}(x,w,1^{\epsilon^{-1}},\ver^*_\secpar,\rho_\secpar)}
\newcommand{\simlnafive}{\mathsf{Sim}_{\nonabort,5}(x,w,1^{\epsilon^{-1}},\ver^*_\secpar,\rho_\secpar)}
\newcommand{\simlnai}{\mathsf{Sim}_{\nonabort,i}(x,w,1^{\epsilon^{-1}},\ver^*_\secpar,\rho_\secpar)}

\newcommand{\compind}{\stackrel{comp}{\approx}}
\newcommand{\statind}{\stackrel{stat}{\approx}}

\newcommand{\execution}[2]{\langle #1, #2 \rangle}

\newcommand{\psuc}{p_{\mathsf{real}}^{\mathsf{suc}}(x,w,1^{\epsilon^{-1}},\ver^*_\secpar,\rho_\secpar)}
\newcommand{\pasuc}{p_{\abort}^{\mathsf{suc}}(x,1^{\epsilon^{-1}},\ver^*_\secpar,\rho_\secpar)}
\newcommand{\pnasuc}{p_{\nonabort}^{\mathsf{suc}}(x,1^{\epsilon^{-1}},\ver^*_\secpar,\rho_\secpar)}
\newcommand{\pcombsuc}{p_{\comb}^{\mathsf{suc}}(x,1^{\epsilon^{-1}},\ver^*_\secpar,\rho_\secpar)}

\newcommand{\Exp}{\mathsf{Exp}}
\newcommand{\pp}{\mathsf{pp}}
\newcommand{\com}{\mathsf{com}}
\newcommand{\commit}{\mathtt{com}}
\newcommand{\open}{\mathtt{open}}
\newcommand{\Com}{\mathsf{Com}}
\newcommand{\CBCom}{\mathsf{CBCom}}

\newcommand{\Setup}{\mathsf{Setup}}
\newcommand{\Commit}{\mathsf{Commit}}

\newcommand{\CBSetup}{\CBCom.\mathsf{Setup}}

\newcommand{\CBCommit}{\CBCom.\mathsf{Commit}}
\newcommand{\SBCom}{\mathsf{SBCom}}
\newcommand{\SBSetup}{\SBCom.\mathsf{Setup}}
\newcommand{\SBCommit}{\SBCom.\mathsf{Commit}}

\newcommand{\COM}{\mathcal{C}\mathcal{O}\mathcal{M}}
\newcommand{\ppspace}{\mathcal{PP}}
\newcommand{\calM}{\mathcal{M}}
\newcommand{\calR}{\mathcal{R}}
\newcommand{\hiding}{\mathsf{hiding}}
\newcommand{\binding}{\mathsf{binding}}
\newcommand{\clbinding}{\mathsf{cl}\text{-}\binding}
\newcommand{\rand}{\mathsf{rand}}
\newcommand{\randspace}{\calR}
\newcommand{\OUT}{\mathsf{OUT}}
\newcommand{\sfQ}{\mathsf{Q}}
\newcommand{\sfR}{\mathsf{R}}
\newcommand{\real}{\mathsf{real}}
\newcommand{\final}{\mathsf{final}}
\newcommand{\hyb}{\mathsf{Hyb}}
\newcommand{\win}{\mathsf{win}}
\newcommand{\amp}{\mathsf{amp}}
\newcommand{\Amp}{\mathsf{Amp}}
\newcommand{\sequence}[1]{\{#1_\secpar\}_{\secpar\in\mathbb{N}}}

\newenvironment{boxfig}[2]{\begin{figure}[#1]\fbox{\begin{minipage}{0.97\linewidth}
                        \vspace{0.2em}
                        \makebox[0.025\linewidth]{}
                        \begin{minipage}{0.95\linewidth}
            {{
                        #2 }}
                        \end{minipage}
                        \vspace{0.2em}
                        \end{minipage}}}{\end{figure}}

\newcommand{\protocol}[4]{
\begin{boxfig}{h!}{\footnotesize 
\begin{center}
\centering{\textbf{#1}}
\end{center}
    #4
\vspace{0.2em} } \caption{\label{#3} #2}
\end{boxfig}
}

\newcommand{\nai}[1]{}
\newcommand{\km}[1]{}
\newcommand{\takashi}[1]{}
\newcommand{\revise}[1]{#1}

\ifnum\submission=0
\title{A Black-Box Approach to Post-Quantum Zero-Knowledge\\ in Constant Rounds}
\else
\title{A Black-Box Approach to Post-Quantum Zero-Knowledge in Constant Rounds}
\fi
\begin{document}

\ifnum\submission=0
\newcommand*{\email}[1]{\normalsize\href{mailto:#1}{#1}}
\author[1]{Nai-Hui Chia}
\author[2]{Kai-Min Chung}
\author[3]{Takashi Yamakawa}
\affil[1]{QuICS, University of Maryland \email{nchia@umd.edu}}
\affil[2]{Institute of Information Science, Academia Sinica \email{kmchung@iis.sinica.edu.tw}}
\affil[3]{NTT Secure Platform Laboratories \email{ takashi.yamakawa.ga@hco.ntt.co.jp}}
\else
\author{Nai-Hui Chia\inst{1,2} \and
Kai-Min Chung\inst{3} \and
Takashi Yamakawa\inst{4}}
\authorrunning{N. Chia et al.}
%
\institute{
QuICS, University of Maryland
\and 
Luddy School of Informatics, Computing, and Engineering, Indiana University Bloomington
\email{naichia@iu.edu}
\and
Institute of Information Science, Academia Sinica 
\email{kmchung@iis.sinica.edu.tw}
\and
NTT Secure Platform Laboratories 
\email{takashi.yamakawa.ga@hco.ntt.co.jp}}
\fi

\maketitle

\vspace{-7mm} 

\begin{abstract}
In a recent seminal work, Bitansky and Shmueli (STOC '20) gave the first construction of a constant round zero-knowledge argument for $\NP$ secure against quantum attacks. 
However, their construction has several drawbacks compared to the classical counterparts. 
Specifically, their construction only achieves computational soundness, requires strong assumptions of quantum hardness of learning with errors (QLWE assumption) and the existence of quantum fully homomorphic encryption (QFHE), and relies on non-black-box simulation.    

In this paper, we resolve these issues at the cost of weakening the notion of zero-knowledge to what is called $\epsilon$-zero-knowledge.
Concretely, we construct the following protocols:
\begin{itemize}
    \item We construct a constant round interactive proof for $\NP$ that satisfies \emph{statistical} soundness and  \emph{black-box} $\epsilon$-zero-knowledge against quantum attacks assuming the existence of \emph{collapsing hash functions}, which is a quantum counterpart of collision-resistant hash functions.
   Interestingly, this construction is just an adapted version of the classical protocol by Goldreich and Kahan (JoC '96) though the proof of $\epsilon$-zero-knowledge  property against quantum adversaries requires novel ideas.
      \item We construct a constant round interactive argument for $\NP$ that satisfies computational soundness and  \emph{black-box} $\epsilon$-zero-knowledge against quantum attacks only assuming the existence of post-quantum one-way functions.
\end{itemize}
At the heart of our results is a new quantum rewinding technique that enables a simulator to extract a committed message of a malicious verifier while simulating verifier's internal state in an appropriate sense.
\end{abstract}
\ifnum\submission=0
\thispagestyle{empty}
 \newpage
 \setcounter{page}{1}    
\fi

\ifnum\submission=0 
\section{Introduction}
\paragraph{Zero-Knowledge Proof.}
Zero-knowledge (ZK) proof \cite{GolMicRac89} is a fundamental cryptographic primitive, which enables a prover to convince  a verifier of a statement without giving any additional ``knowledge" beyond that the statement is true. 
In the classical setting, there have been many feasibility results on ZK proofs for specific languages including quadratic residuosity \cite{GolMicRac89}, graph isomorphism \cite{JACM:GMW91}, statistical difference problem \cite{SahVad03} etc., and for all $\NP$ languages assuming the existence of one-way functions (OWFs) \cite{JACM:GMW91,Blum86}.
On the other hand, van de Graaf \cite{Thesis:VanDeGraaf} pointed out that there is a technical difficulty to prove security of these protocols against quantum attacks. 
Roughly, the difficulty comes from the fact that security proofs of these results are based on a technique called \emph{rewinding}, which cannot be done when an adversary is  quantum due to the no-cloning theorem. Watrous \cite{SIAM:Watrous09} considered \textsl{post-quantum ZK proof}, which means a classical interactive proof that satisfies (computational) zero-knowledge property against quantum malicious verifiers, and showed that some of the classical constructions above are also post-quantum ZK. Especially, he introduced a new \emph{quantum rewinding technique} which is also applicable to quantum adversaries and proved that 3-coloring protocol of Goldreich, Micali, and Wigderson \cite{JACM:GMW91} is secure against quantum attacks assuming that the underlying OWF is post-quantum secure, i.e.,  uninvertible in quantum polynomial-time (QPT).\footnote{Strictly speaking, Watrous' assumption is a  statistically binding and post-quantum computationally hiding commitment scheme, and he did not claim that this can be constructed under the existence of post-quantum OWFs. However, we can see that  such a commitment scheme can be obtained by instantiating the construction of \cite{JC:Naor91,SICOMP:HILL99} with a post-quantum OWF.}
Since the 3-coloring problem is $\NP$-complete, this means that there exists a post-quantum ZK proof for all $\NP$ languages  assuming the existence of post-quantum OWFs.  

\paragraph{Round Complexity.}
An important complexity measure of ZK proofs is \emph{round complexity}, which is the number of interactions between a prover and verifier. 
In this aspect, the 3-coloring protocol \cite{JACM:GMW91} (and its quantumly secure version \cite{SIAM:Watrous09}) is not satisfactory since that requires super-constant number of rounds.\footnote{3-round suffices for achieving a constant soundness error, but super-constant times sequential repetitions are needed for achieving negligible soundness error (i.e., a cheating prover can let a  verifier accept on a false statement only with a negligible probability).
Negligible soundness error is a default requirement in this paper.} 
Goldreich and Kahan \cite{JC:GolKah96} gave the first construction of a constant round ZK proof for $\NP$ assuming the existence of collision-resistant hash function in the classical setting. 
However, Watrous' rewinding technique does not seem to work for this construction (as explained in Sec. \ref{sec:overview}), and it has been unknown if their protocol is secure against quantum attacks.   

Recently, Bitansky and Shmueli \cite{STOC:BitShm20} gave the first construction of post-quantum ZK \revise{\emph{argument} \cite{EC:BraCre89}} for $\NP$, which is a weakened version of post-quantum ZK proof where soundness holds only against computationally bounded adversaries.
In addition to weakening soundness to computational one, there are several drawbacks compared to classical counterparts.
First, they assume strong assumptions of quantum hardness of learning with erros (QLWE assumption) \cite{JACM:Regev09} and the existence of quantum fully homomorphic encryption (QFHE) \cite{FOCS:Mahadev18b,C:Brakerski18}.
Though the QLWE assumption is considered fairly standard due to reductions to worst-case lattice problems \cite{JACM:Regev09,STOC:Peikert09,STOC:BLPRS13}, a construction of  QFHE requires circular security of an QLWE-based encryption scheme, which has no theoretical evidence.
In contrast, a constant round \emph{classical} ZK argument for $\NP$ is known to exist under the minimal assumption of the existence of OWFs \cite{C:FeiSha89,TCC:PasWee09}. 
Second, their security proof of quantum ZK property relies on a novel \emph{non-black-box} simulation technique, which makes use of the actual description of malicious verifier instead of using it as a black-box.
In contrast, classical counterparts can be obtained by black-box simulation \cite{C:FeiSha89,JC:GolKah96,TCC:PasWee09}.
Therefore, it is of theoretical interest to ask if we can achieve constant round quantum ZK by black-box simulation.
Third, somewhat related to the second issue, their construction also uses building blocks in a non-black-box manner, which makes the actual efficiency of the protocol far from practical. 
Again, classical counterparts are known based on black-box constructions \cite{JC:GolKah96,TCC:PasWee09}. 

Given the state of affairs, it is natural to ask the following questions:
\begin{enumerate}
    \item Are there constant round post-quantum ZK proofs for $\NP$ instead of arguments?
    \item Are there constant round post-quantum ZK proofs/arguments for $\NP$ from weaker assumptions than those in \cite{STOC:BitShm20}?
    \item Are there constant round post-quantum ZK proofs/arguments for $\NP$ based on black-box simulation and/or black-box construction?
    \item Are known constructions of constant round classical ZK proofs/arguments for $\NP$ (e.g., \cite{C:FeiSha89,JC:GolKah96,TCC:PasWee09})  secure against quantum attacks if we instantiate them with post-quantum building blocks?
\end{enumerate}

\subsection{Our Results}
In this work, we partially answer the above questions affirmatively at the cost of weakening the quantum ZK property to \emph{quantum $\epsilon$-ZK}, which is the quantum version of $\epsilon$-ZK introduced in \cite{DNS04}.\footnote{$\epsilon$-ZK was originally called $\epsilon$-knowledge, but some later works \cite{STOC:BitKalPan18,EC:FleGoyJai18} call it $\epsilon$-ZK. We use $\epsilon$-ZK to clarify that this is a variant of ZK.}
\paragraph{Quantum $\epsilon$-Zero-Knowledge.}
The standard quantum ZK property roughly requires that for any QPT $\ver^*$, there exists a QPT simulator $\mathcal{S}$ that simulates the interaction between $\ver^*$ and an honest prover so that the simulation is indistinguishable from the real execution against any QPT distinguishers.
On the other hand, in quantum $\epsilon$-ZK, a simulator is allowed to depend on a  ``accuracy parameter" $\epsilon$.  
That is, it requires that for any QPT malicious verifier $\ver^*$ and a noticeable accuracy parameter $\epsilon$, there exists a QPT simulator $\mathcal{S}$ \emph{whose running time polynomially depends on $\epsilon^{-1}$}  
that simulates the interaction between $\ver^*$ and an honest prover so that no QPT distinguisher can distinguish it from real execution with advantage larger than $\epsilon$.
Though this is a significant relaxation of quantum ZK, this still captures meaningful security.
For example, we can see that quantum $\epsilon$-ZK implies both quantum versions of witness indistinguishability and witness hiding similarly to the analogous claims in the classical setting \cite{STOC:BitKhuPan19}.\footnote{Actually, \cite{STOC:BitKhuPan19} shows that even weaker notion called \emph{weak ZK} suffices for witness indistinguishability and witness hiding. See also Sec. \ref{sec:related_work}.}
Moreover, by extending  the observation in \cite{DNS04} to the quantum setting, we can see the following:
Suppose that a QPT malicious verifier solves some puzzle whose solution is efficiently checkable (e.g., finding a witness of an $\NP$ statement) after an interaction between an honest prover. Then, quantum $\epsilon$-ZK implies that if the verifier succeeds in solving the puzzle with noticeable probability $p$ after the interaction, then there is a QPT algorithm (whose running time polynomially depends on $p^{-1}$) that solves the same puzzle with noticeable probability (say, $p/2$) \emph{without interacting with the honest prover}. 
This captures the naive intuition of the ZK property that ``anything that can be done after the execution can be done without execution" in some sense, and this would be sufficient in many cryptographic applications.  Thus we believe that quantum $\epsilon$-ZK is conceptually a similar notion to the standard quantum ZK.
More discussion on (quantum) $\epsilon$-ZK and other related notions of ZK can be found in Sec. \ref{sec:related_work}.

\paragraph{Our Constructions.}
We give two constructions of constant round quantum $\epsilon$-ZK protocols.
\begin{itemize}
    \item We construct a constant round  quantum $\epsilon$-ZK \emph{proof} for $\NP$ assuming the existence of \emph{collapsing hash functions} \cite{EC:Unruh16,AC:Unruh16}, which is considered as a counterpart of collision-resistant hash functions in the quantum setting.
   Especially, we can instantiate the construction based on   the QLWE assumption. 
    Our construction is fully black-box in the sense that both simulation and construction rely on black-box usage of  building blocks and a malicious verifier.  
   Interestingly, this construction is just an adapted version of the classical protocol of \cite{JC:GolKah96} though the proof of quantum $\epsilon$-zero-knowledge  property  requires novel ideas.
      \item 
      We construct a constant round  quantum $\epsilon$-ZK argument for $\NP$ assuming the minimal assumption of the existence of \emph{post-quantum OWFs}. 
      This construction relies on black-box simulation, but the construction itself is non-black-box.
\end{itemize}
At the heart of our results is a new quantum rewinding technique that enables a simulator to extract a committed message of a malicious verifier while simulating verifier's internal state in some sense.
We formalize this technique as an \emph{extraction lemma}, which we believe is of independent interest.

\subsection{Technical Overview}\label{sec:overview}
Though we prove a general lemma which we call extraction lemma (Lemma \ref{lem:extraction}) and then prove quantum $\epsilon$-ZK  of our constructions based on that in the main body, we directly explain the proof of quantum $\epsilon$-ZK without going through such an abstraction in this overview. 
\paragraph{Known Classical Technique and Difficulty in Quantum Setting.}
First, we review a classical constant round ZK proof by Goldreich and Kahan \cite{JC:GolKah96} (referred to as GK protocol in the following), and explain why it is difficult to prove quantum ZK for this protocol by known techniques.
GK protocol is based on a special type of $3$-round proof system called $\Sigma$-protocol.\footnote{In this paper, we use $\Sigma$-protocol to mean a parallel repetition version where soundness error is reduced to negligible.}
In a $\Sigma$-protocol, a prover sends the first message $a$, a verifier sends the second message $e$ referred to as a \emph{challenge}, which is just a public randomness, and the prover sends the third message $z$.
A $\Sigma$-protocol satisfies a special type of honest-verifier ZK, which ensures that if a challenge $e$ is fixed, then one can simulate the transcript $(a,e,z)$ without using a witness. Though this may sound like almost the standard ZK property, a difficulty when proving ZK is that a malicious verifier may \emph{adaptively} choose $e$ depending on $a$, and thus we cannot fix $e$ at the beginning.
To resolve this issue, the idea of GK protocol is to let the verifier commit to a challenge $e$ at the beginning of the protocol.
That is, GK protocol roughly proceeds as follows:\footnote{We note that this construction is based on an earlier work of \cite{BCY91}.}
\begin{enumerate}
    \item A verifier sends a commitment $\com$ to a challenge $e$ of a $\Sigma$-protocol. \label{step:GK1}
    \item The prover sends the first message $a$ of the $\Sigma$-protocol.
    \label{step:GK2}
    \item The verifier opens $\com$ to open a challenge $e$ and its opening information $r$ (i.e., the randomness used for the commitment). \label{step:GK3}
    \item The prover aborts if the verifier's opening is invalid. Otherwise it sends the third message $z$ of the $\Sigma$-protocol.
\end{enumerate}
When proving the ZK property of GK protocol, they rely on a \emph{rewinding} argument. 
That is, a simulator first runs the protocol with a malicious verifier until Step \ref{step:GK3} to extract a committed message $e$ inside $\com$, and then rewind the verifier's state back to just after Step \ref{step:GK1}, and then simulates the transcript by using the extracted knowledge of $e$.  

On the other hand, this strategy does not work if we consider a quantum malicious verifier since  a quantum malicious verifier may perform measurements in Step \ref{step:GK3}, which is in general  not reversible. 
In other words, since we cannot copy the verifier's internal state after Step \ref{step:GK1} due to the no-cloning theorem, we cannot recover that state after running the protocol until Step \ref{step:GK3}.

Watrous \cite{SIAM:Watrous09} proved that we can apply a rewinding argument for quantum verifiers under a certain condition.
Roughly speaking, the condition is that there is a simulator that succeeds in simulation for quantum verifiers with a fixed (verifier-independent) and noticeable probability. 
For example, if the challenge space is polynomial size, then a simulator that simply guesses a challenge $e$ suffices.
However, for achieving negligible soundness error, the challenge space should be super-polynomial size, in which case it seems difficult to construct such a simulator.
Also, relaxing quantum ZK to quantum $\epsilon$-ZK does not seem to resolve the issue in any obvious way. 

\subsubsection{Quantum Analysis of GK Protocol.}
In spite of the above mentioned difficulty, we succeed in proving quantum $\epsilon$-ZK for a slight variant of GK protocol. In the following, we explain the idea for our results.
\paragraph{Simplified Goal: Simulation of Non-Aborting Case.}
First, we apply a general trick introduced in \cite{STOC:BitShm20}, which  simplifies the task of proving quantum ZK.
In GK protocol, we say that a verifier aborts if it fails to provide a valid opening to $\com$ in Step \ref{step:GK3}.
Then, for proving quantum ZK of the protocol, it suffices to construct two simulators $\siml_{\abort}$ and $\siml_{\nonabort}$ that work only when the verifier aborts and does not abort and they do not change the probability that the verifier aborts too much, respectively.
The reason is that if we randomly choose either of these two simulators and just run the chosen one, then the simulation succeeds with probability $1/2$ since the guess of if the verifier aborts is correct with probability $1/2$. Then, we can apply Watrous' rewinding technique to convert it to a full-fledged simulator.
Essentially the same trick also works for quantum $\epsilon$-ZK.

Moreover, it is easy to construct $\siml_{\abort}$ because the first message of a $\Sigma$-protocol can be simulated without witness, and one need not provide the third message to the verifier when it aborts. 
Therefore, the problem boils down to constructing a simulator $\siml_{\nonabort}$ that works only when the verifier does not abort.

\paragraph{Initial Observations.}
For explaining how to construct $\siml_{\nonabort}$, we start by considering the simplest case where a verifier never aborts.
Moreover, suppose that the commitment scheme used for committing to a challenge $e$ satisfies the strict-binding property \cite{EC:Unruh12}, i.e., for any commitment $\com$, there is at most one valid message and randomness. 
Then, a rewinding strategy similar to the classical case works since, in this case, the verifier's message in Step \ref{step:GK3} is information-theoretically determined, and such a deterministic computation does not collapse a quantum state in general.\footnote{This is also observed in \cite{STOC:BitShm20}.}
However, for ensuring statistical soundness, we have to use a statistically hiding commitment, which cannot be strict-binding. 
Fortunately, this problem can be resolved by using \emph{collapse-binding} commitments \cite{EC:Unruh16}, which roughly behave similarly to strict-binding commitments  for any 
\emph{computationally bounded} adversaries.\footnote{Strictly speaking, we need to use a slightly stronger variant of collapse-binding commitments which we call \emph{strong collapse-binding} commitments. Such commitments can be constructed under the QLWE assumption or the existence of collapsing hash functions in more general. See Sec. \ref{sec:commitment} 
\ifnum\submission=1 
\ifnum\cameraready=0
and \ref{sec:appendix_commitment}
\fi 
\fi 
for more details.} 
Since this is rather a standard technique, in the rest of this overview, we treat the commitment as if it satisfies the strict-binding property.

Next, we consider another toy example where a verifier sometimes aborts.
Suppose that a malicious verifier $\ver^*$ is given an initial state $\frac{1}{\sqrt{2}}(\ket{\psi_\abort}+\ket{\psi_{\nonabort}})$ in its internal register $\regV$
where $\ket{\psi_\abort}$ and $\ket{\psi_{\nonabort}}$ are orthogonal, 
and runs as follows:
\begin{enumerate}
    \item $\ver^*$ randomly picks $e$, honestly generates a commitment $\com$ to $e$, and sends it to the prover (just ignoring the initial state).
    \item After receiving $a$, $\ver^*$ performs a projective measurement $\{\ket{\psi_\abort}\bra{\psi_\abort},I-\ket{\psi_\abort}\bra{\psi_\abort}\}$ on $\regV$, and immediately aborts if $\ket{\psi_\abort}\bra{\psi_\abort}$ is applied, and otherwise honestly opens $(e,r)$.  
    \item After completing the protocol, $\ver^*$ outputs its internal state in $\regV$.
 \end{enumerate}
It is trivial to construct a simulator for this particular $\ver^*$ since it just ignores prover's messages. 
But for explaining our main idea,  we examine what happens if we apply the same rewinding strategy as the classical case to the above verifier.
After getting a commitment $\com$ from $\ver^*$, a simulator sends a random $a$ to  $\ver^*$ to extract $e$.
Since we are interested in constructing a simulator that works in the non-aborting case, suppose that $\ver^*$ does not abort, i.e., sends back a valid opening $(e,r)$.
At this point, $\ver^*$'s internal state collapses to $\ket{\psi_{\nonabort}}$.
Then the simulator cannot ``rewind" this state to the original verifier's state $\frac{1}{\sqrt{2}}(\ket{\psi_\abort}+\ket{\psi_{\nonabort}})$ in general, and thus the simulation seems to get stuck.
However, our key observation is that, 
conditioned on that $\ver^*$ does not abort, $\ver^*$'s state always collapses to $\ket{\psi_{\nonabort}}$ even in the real execution.
Since our goal is to construct $\siml_\nonabort$ that is only required to work for the non-aborting case, it does not matter if $\ver^*$'s state collapses to $\ket{\psi_{\nonabort}}$ when the simulator runs extraction.
More generally, extraction procedure may collapse verifier's internal state if a similar collapsing happens even in the real execution conditioned on that the verifier does not abort.  

\paragraph{Our Idea: Decompose Verifier's Space}
To generalize the above idea, we want to decompose verifier's internal state after Step \ref{step:GK1} into \emph{aborting part} and \emph{non-aborting part}. 
However, the definition of such a decomposition is non-trivial since a verifier may determine if it aborts depending on the prover's message $a$ in addition to its internal state.
Therefore, instead of decomposing it into always-aborting part and always-non-aborting part as in the example of the previous paragraph,  we set a noticeable threshold $t$ and decompose it into ``not-abort-with-probability $<t$ part" and ``not-abort-with-probability $\geq t$ part" over the randomness of $a$.

For implementing this idea, we rely on Jordan's lemma (e.g., see a lecture note by Regev \cite{Regev_lecture}) in a similar way to the work by Nagaj, Wocjan, and Zhang \cite{NWZ09} on the amplification theorem for $\QMA$.
Let $\Pi$ be a projection that corresponds to ``Step \ref{step:GK2} + Step \ref{step:GK3} + Check if the verifier does not abort" in GK protocol.
A little bit more formally, let $\regV$ be a register for verifier's internal state and $\regaux$ be an auxiliary register. 
Then  $\Pi$ is a projection over $\regV\ot \regaux$ that  works as follows:
\begin{enumerate}
    \item 
    Apply a unitary $U_{\mathsf{aux}}$ over $\regaux$ that maps $\ket{0}_\regaux$ to $\frac{1}{\sqrt{|\calR|}}\sum_{\rand \in \calR}\ket{\rand,a_{\rand}}_{\regaux}$ where $\calR$ is the randomness space to generate the first message of the $\Sigma$-protocol and $a_{\rand}$ is the first message derived from the randomness $\rand$.\footnote{$\regaux$ stores multiple qubits, but we denote by $\ket{0}_\regaux$ to mean $\ket{0^\ell}_\regaux$ for the appropriate length $\ell$  for notational simplicity.}
    \item Apply a unitary $U_V$ that corresponds to Step \ref{step:GK3} 
    for prover's message $a_{\rand}$ in $\regaux$ 
    except for measurement, \label{step:projection2}
    \item Apply a projection to the subspace spanned by states that contain  valid opening $(e,r)$ for $\com$ in designated output registers,
    \item 
    Apply $(U_V U_{\mathsf{aux}})^\dagger$.
\end{enumerate}
One can see that the probability that the verifier does not abort (i.e., sends a valid opening) is $\|\Pi\ket{\psi}_\regV\ket{0}_\regaux\|^2$ where $\ket{\psi}_\regV$ is verifier's internal state after Step \ref{step:GK1}.
Then  Jordan's lemma  gives 
an orthogonal decomposition of the Hilbert space of $\regV\ot\regaux$ into many one- or two-dimensional subspaces $S_1,...,S_N$ that are invariant under $\Pi$ and $\ket{0}_\regaux\bra{0}_\regaux$ such that we have the following:
\begin{enumerate}
    \item 
    For any 
    $j\in [N]$ and $\ket{\psi_j}_{\regV}\ket{0}_{\regaux}\in S_j$, the projection $\Pi$ succeeds with probability $p_j$, i.e., $\|\Pi\ket{\psi_j}_\regV\ket{0}_\regaux\|^2=p_j$.
    \item 
    A success probability of projection $\Pi$ is ``amplifiable" in each subspace.
    That is, there is an ``amplification procedure" $\Amp$ that maps any $\ket{\psi_j}_{\regV}\ket{0}_{\regaux}\in S_j$ to $\Pi\ket{\psi_j}_{\regV}\ket{0}_{\regaux}$ with overwhelming probability within $\poly(\secpar,p_j^{-1})$ times iteration of the same procedure (that does not depend on $j$) for any $j\in[N]$.
    Moreover, this procedure does not cause any interference between different subspaces.
\end{enumerate}


Then we define two subspaces 
\[
S_{<t}:=\bigoplus_{j:p_j<t}S_j,~~~ S_{\geq t}:=\bigoplus_{j:p_j\geq t}S_j.
\]
Then for any $\ket{\psi}_\regV$, we can decompose it as
\[
\ket{\psi}_\regV=\ket{\psi_{<t}}_\regV+ \ket{\psi_{\geq t}}_\regV
\]
by using (sub-normalized) states 
$\ket{\psi_{<t}}_\regV$ and $\ket{\psi_{\geq t}}_\regV$ such that 
$\ket{\psi_{<t}}_\regV\ket{0}_\regaux \in S_{<t}$ and $\ket{\psi_{\geq t}}_\regV\ket{0}_\regaux \in S_{\geq t}$. 
In this way, we can formally define a decomposition of verifier's internal state into 
``not-abort-with-probability $<t$ part" and ``not-abort-with-probability $\geq t$ part". 

\paragraph{Extraction and Simulation.}
Then we explain how we can use the above decomposition to implement  extraction of $e$ for simulation of non-aborting case.
First, we consider an easier case where the verifier's state after Step \ref{step:GK1} only has $S_{\geq t}$ component $\ket{\psi_{\geq t}}_\regV$.
In this case, we can use $\Amp$ to map  $\ket{\psi_{\geq t}}_\regV\ket{0}_\regaux$ onto the span of $\Pi$ within  $\poly(\secpar,t^{-1})$ times iteration. 
After mapped to $\Pi$, we can extract $(e,r)$ without collapsing the state by the definition of $\Pi$ and our assumption that the commitment is strict-binding.
This means that given $\ket{\psi_{\geq t}}_\regV$, we can extract $(e,r)$, which is information theoretically determined by $\com$, with overwhelming probability.
In general, such a deterministic computation can be implemented in a reversible manner, and thus we can extract $(e,r)$ from $\ket{\psi_{\geq t}}_\regV$ almost without damaging the state.

On the other hand, the same procedure does not work for $\ket{\psi_{< t}}_\regV$ since $\poly(\secpar,t^{-1})$ times iteration is not sufficient for amplifying the success probability of $\Pi$ to overwhelming in this subspace. 
Our idea is to let a simulator run the above extraction procedure in superposition even though $S_{<t}$ component may be damaged.

Specifically, our extraction procedure $\ext$ works as follows:
\begin{enumerate}
    \item Given a verifier's internal state $\ket{\psi}_\regV$ after Step \ref{step:GK1}, initialize $\regaux$ to $\ket{0}_\regaux$ and runs $\Amp$ for  $\poly(\secpar,t^{-1})$ times iteration.
    Abort if a mapping onto $\Pi$ does not succeed. Otherwise, proceed  to the next step.
    \label{Step_extraction_1}
    \item Apply $U_V U_{\mathsf{aux}}$, measure designated output registers to obtain $(e_\ext,r_\ext)$, and apply $(U_V U_{\mathsf{aux}})^\dagger$.
    We note that $(e_\ext,r_\ext)$ is always a valid opening of $\com$ since $\ext$ runs this step only if it succeeds in mapping the state onto $\Pi$ in the previous step.
    We also note that this step does not collapse the state at all by the strict-binding property of the commitment.
    \item Uncompute Step \ref{Step_extraction_1} and measure $\regaux$.
    Abort if  the measurement outcome is not $0$. Otherwise, proceed to the next step.
    \item Output the extracted opening $(e_\ext,r_\ext)$ along with a ``post-extraction state"  $\ket{\psi'}_\regV$ in register $\regV$.
    For convenience, we express $\ket{\psi'}_\regV$ as a sub-normalized state whose norm is the probability that $\ext$ does not abort and the post-extraction state conditioned on that the extraction succeeds is $\frac{\ket{\psi'}_\regV}{\|\ket{\psi'}_\regV\|}$.
\end{enumerate}

In the following, we analyze $\ext$. 
We consider the decomposition of $\ket{\psi}_\regV$ as defined in the previous paragraph:
\[
\ket{\psi}_\regV=\ket{\psi_{<t}}_\regV+ \ket{\psi_{\geq t}}_\regV.
\]
Suppose that $\ext$ does not abort, i.e., it outputs a valid opening $(e_\ext,r_\ext)$ along with a post-extraction state $\ket{\psi'}_\regV$.
Then, $\ket{\psi'}_\regV$ can be expressed as  
\[
\ket{\psi'}_\regV=\ket{\psi'_{<t}}_\regV+ \ket{\psi'_{\geq t}}_\regV
\]
for some $\ket{\psi'_{<t}}_\regV$ and $\ket{\psi'_{\geq t}}_\regV$ such that $\ket{\psi'_{<t}}_\regV\ket{0}_\regaux \in S_{<t}$,  $\ket{\psi'_{\geq t}}_\regV\ket{0}_\regaux \in S_{\geq t}$, and $\ket{\psi_{\geq t}}_\regV\approx \ket{\psi'_{\geq t}}_\regV$ since there is no interference between $S_{<t}$ and $S_{\geq t}$ when running $\Amp$ and $S_{\geq t}$ component hardly changes as observed above.
This is not even a close state to the original state $\ket{\psi}_\regV$ in general since the $S_{<t}$ component may be completely different. 
However, our key observation is that, conditioned on that the verifier does not abort, at most ``$t$-fraction" of $S_{<t}$ component survives even in the real execution by the definition of the subspace $S_{<t}$. 
That is, in the verifier's final output state conditioned on that it does not abort, the average squared norm of a portion that comes from $S_{<t}$ component is at most $t$.
Thus, even if a simulator fails to simulate this portion, this only impacts the accuracy of the simulation by a certain function of $t$, which is shown to be $O(\sqrt{t})$ in the main body.

With this observation in mind, 
 the non-aborting case simulator $\siml_{\nonabort}$ works as follows.
 \begin{enumerate}
     \item Run Step \ref{step:GK1} of the verifier to obtain $\com$ and let $\ket{\psi}_\regV$ be verifier's internal state at this point.
     \item Run $\ext$ on input $\ket{\psi}_\regV$.
     Abort     if $\ext$ aborts.
     Otherwise,  obtain an extracted opening $(e_\ext,r_\ext)$ and a post-extraction state $\ket{\psi'}_\regV$, and proceed to the next step.
     \item  Simulate a transcript $(a,e_\ext,z)$ by the honest-verifier ZK property of the $\Sigma$-protocol.
     \item Send $a$ to the verifier whose internal state is replaced with  $\ket{\psi'}_\regV$. Let $(e,r)$ be the verifier's response.
     Abort if $(e,r)$ is not a valid opening to $\com$. 
     Otherwise send $z$ to the verifier.
     \item Output the verifier's final output.
 \end{enumerate}
 By the above analysis, we can see that $\siml_\nonabort$'s output distribution is close to the real verifier's output distribution with an approximation error $O(\sqrt{t})$   conditioned on that the verifier does not abort. Furthermore, the probability that the verifier does not abort can only be changed by at most $O(\sqrt{t})$. 
If we could set $t$ to be a negligible function, then we would be able to achieve quantum ZK rather than quantum $\epsilon$-ZK.
However, since we have to ensure that $\Amp$'s running time $\poly(\secpar,t^{-1})$ is polynomial in $\secpar$, we can only set $t$ to be noticeable.
Since we can set $t$ to be an arbitrarily small noticeable function,  we can make the approximation error $O(\sqrt{t})$ be an arbitrarily small noticeable function.
This means that the protocol satisfies quantum $\epsilon$-ZK. 

\paragraph{Black-Box Simulation.}
So far, we did not pay attention to the black-box property of simulation.
We briefly explain the definition of black-box quantum ZK and that our simulator satisfies it. 
First, we define black-box quantum ZK by borrowing the definition of quantum oracle machine by Unruh \cite{EC:Unruh12}.
Roughly, we say that a simulator is black-box if it only accesses unitary part of a verifier and its inverse in a black-box manner, and does not directly act on the verifier's internal registers.
With this definition, one part where it is unclear if our simulator is black-box is the amplification procedure $\Amp$. 
However, by a close inspection, we can see that $\Amp$ actually just performs sequential measurements $\{\Pi,I_{\regV,\regaux}-\Pi\}$ and $\{\ket{0}_{\regaux}\bra{0}_\regaux,I_{\regV,\regaux}-\ket{0}_{\regaux}\bra{0}_\regaux\}$, which can be done by black-box access to the verifier as seen from the definition of $\Pi$. 
Therefore, we can see that our simulator is black-box.

\paragraph{A Remark on Underlying $\Sigma$-Protocol.}
In the original GK protocol, any $\Sigma$-Protocol can be used as a building block.
However, in our technique, we need to use \emph{delayed-witness} $\Sigma$-protocol where the first message $a$ can be generated without knowledge of a witness due to a technical reason.
An example of delayed-witness $\Sigma$-protocol is Blum's Graph Hamiltonicity protocol \cite{Blum86}. 
Roughly, the reason to require this additional property is for ensuring that a simulator can perfectly simulate the first message $a$ of the $\Sigma$-protocol when running the extraction procedure.
In the classical setting, a computationally indistinguishable simulation of $a$ works, but we could not prove an analogous claim in our setting. 

\subsubsection{OWF-based Construction.}
Next, we briefly explain our OWF-based quantum $\epsilon$-ZK argument. 
The reason why we need a stronger assumption in our first construction is that we need to implement the commitment for the challenge by a constant round statistically hiding commitment, which is not known to exist from OWF. 
Then, a natural idea is to relax it to computationally hiding one if we only need computational soundness.
We can show that the extraction technique as explained above also works for statistically binding commitments with a small tweak.
However, we cannot prove soundness of the protocol without any modification due to a malleability issue.
For explaining this, we recall that the first message $a$ of a $\Sigma$-protocol itself is also implemented as a commitment. 
Then, the computational hiding of commitment does not prevent a computationally bounded prover, which is given a commitment $\com$ to $e$, from generating a ``commitment" $a$ whose committed message depends on $e$.
Such a dependence leads to an attack against soundness.
To prevent this, an extractable commitment scheme is used to generate $a$ in the classical setting \cite{TCC:PasWee09}.
However, since it is unclear if the extractable commitment scheme used in \cite{TCC:PasWee09} is secure against quantum adversaries, we take an alternative approach that we let a prover prove that it knows a committed message inside $a$ by using a proof of knowledge before a verifier opens a challenge as is done in \cite[Sec.4.9]{Foundation:Volume1} (see also
\cite[App.C.3]{Foundation:Volume2}).
A naive approach to implement this idea would be to use ZK proof of knowledge, but this does not work since a constant round ZK argument is what we are trying to construct.
Fortunately, we can instead use witness indistinguishable proof of knowledge (WIPoK) with a simple OR proof trick.
Specifically, we let a prover prove that ``I know committed message in $a$" OR ``I know witness $w$ for $x$" where $x$ is the statement being proven in the protocol.
In the proof of soundness, since we assume $x$ is a false statement, a witness for the latter statement does not exist.
Then we can extract a committed message inside $a$ to break the hiding property of the commitment scheme used by the verifier if the committed message depends on $e$. 
On the other hand, in the proof of $\epsilon$-ZK property, we can use the real witness $w$ in an intermediate hybrid to simulate WIPoK without using knowledge of a committed message. In such a hybrid, we can rely on honest-verifier ZK of the $\Sigma$-protocol to change $a$ to a simulated one for an extracted challenge $e$. 

Finally, we remark that though we are not aware of any work that explicitly claims the existence of a constant round WIPoK that works for quantum provers from OWFs, we observe that a combination of known works easily yields such a construction. (See 
\ifnum\cameraready=1
the full version 
\else
Sec. \ref{sec:wipok} 
\fi  
for more details.) 
As a result, we obtain constant round quantum $\epsilon$-ZK argument from OWFs.

\subsection{Related Work}\label{sec:related_work}
\paragraph{$\epsilon$-Zero-Knowledge and Related Notions.}
Though we are the first to consider $\epsilon$-ZK in the quantum setting, there are several works that consider $\epsilon$-ZK in the classical setting.
We briefly review them.
We note that all of these results are in the classical setting, and it is unknown if similar results hold in the quantum setting.
The notion of $\epsilon$-ZK (originally called $\epsilon$-knowledge) was introduced by Dwork, Naor, and Sahai \cite{DNS04} in the context of concurrent ZK proofs.
Bitansky, Kalai, and Paneth \cite{STOC:BitKalPan18} gave a construction of 4-round $\epsilon$-ZK proof  for $\NP$ assuming the existence of key-less multi-collision resistant hash function.\footnote{The protocol achieves full-fledged ZK if we allow the simulator to take  non-uniform advice or assume a super-polynomial assumption.}
Barak and Lindell \cite{STOC:BarLin02} showed the impossibility of constant round black-box ZK proof with strict-polynomial time simulation, and observed that strict-polynomial time simulation is possible if we relax ZK to $\epsilon$-ZK.
This can be understood as a theoretical separation between ZK and $\epsilon$-ZK.
On the other hand, Fleischhacker, Goyal, and Jain \cite{EC:FleGoyJai18} showed that there does not exist $3$-round $\epsilon$-ZK proof for $\NP$ even with non-black-box simulation under some computational assumptions, which is the same lower bound as that for ZK proofs if we allow non-black-box simulation.

Another relaxation of ZK is \emph{super-polynomial simulation (SPS)-ZK} \cite{EC:Pass03}, where a simulator is allowed to run in super-polynomial time.
One may find a similarity between $\epsilon$-ZK and SPS-ZK in the sense that the latter can be seen as a variant of $\epsilon$-ZK where we set the accuracy parameter $\epsilon$ to be negligible. 
On the other hand,  it has been considered that $\epsilon$-ZK is much more difficult to achieve than SPS-ZK.
For example, the work of Bitansky, Khurana, and Paneth \cite{STOC:BitKhuPan19} gave a construction of a 2-round argument for $\NP$ that achieves a weaker notion of ZK than $\epsilon$-ZK, and the result is considered a significant breakthrough in the area even though there is a simple construction of 2-round SPS-ZK argument for $\NP$ \cite{EC:Pass03}.

Several works considered other weakened notions of ZK \cite{DNRS03,TCC:BitPan12,TCC:ChuLuiPas15,C:JKKR17,STOC:BitKhuPan19}.
Some of them are weaker than $\epsilon$-ZK, and others are incomparable.
For example, ``weak ZK" in \cite{TCC:BitPan12,TCC:ChuLuiPas15} is incomparable to $\epsilon$-ZK whereas ``weak ZK" in \cite{STOC:BitKhuPan19} is weaker than $\epsilon$-ZK.

\paragraph{Post-Quantum Zero-Knowledge with Classical Computational Soundness.}
Ananth and La Placa \cite{TCC:AnaLap20} gave a construction of post-quantum ZK argument for $\NP$ with \emph{classical} computational soundness assuming the QLWE assumption. 
Though such a protocol would be easy to obtain if we assume average-case classical hardness of certain problems in $\BQP$ (e.g., factoring) in addition to the QLWE assumption, what is interesting in \cite{TCC:AnaLap20} is that they only assume the QLWE assumption.

\paragraph{Post-Quantum Zero-Knowledge with Trusted Setup.}
Several works studied (non-interactive) post-quantum ZK proofs for $\NP$ in the common random/reference string model \cite{Kobayashi03,C:DamFehSal04,C:PeiShi19}.
Among them, Peikert and Shiehian \cite{C:PeiShi19} proved that there exists non-interactive post-quantum ZK proof for $\NP$ in the common reference string model assuming the QLWE assumption.\footnote{In \cite{C:PeiShi19}, they do not explicitly claim ZK against quantum adversaries. However, since their security proof does not rely on rewinding, it immediately extends to post-quantum security if we assume the underlying assumption against quantum adversaries.}  

\paragraph{Zero-Knowledge for $\QMA$.}
The complexity class $\QMA$ is a quantum analogue of $\NP$.
Broadbent, Ji, Song, and Watrous \cite{BJSW20} gave a construction of a ZK proof for $\QMA$. 
Recently, Broadbent and Grilo \cite{FOCS:BroGri20} gave an alternative simpler construction of a ZK proof for $\QMA$.
Bitansky and Shmueli \cite{STOC:BitShm20} gave a constant round ZK argument for $\QMA$ by combining the construction of \cite{FOCS:BroGri20} and their post-quantum ZK argument for $\NP$.
We believe that our technique can be used to construct a constant round $\epsilon$-ZK proof for $\QMA$ by replacing the delayed-witness $\Sigma$-protocol for $\NP$ with the delayed-witness quantum $\Sigma$-protocol for $\QMA$ recently proposed by Brakerski and Yuen \cite{BraYue20}.\footnote{Actually, their protocol is delayed-input, i.e., the first message generation does not use the statement either.}
This is beyond the scope of this paper, and we leave a formal proof as a future work.

Several works studied non-interactive ZK proofs/arguments for $\QMA$ in preprocessing models \cite{C:ColVidZha20,FOCS:BroGri20,Shmueli20,TCC:ACGH20}.

\paragraph{Collapsing Hash Functions.}
The notion of collapsing hash functions was introduced by Unruh \cite{EC:Unruh16} for a replacement of collision-resistant hash functions in post-quantum setting.
Unruh \cite{AC:Unruh16} gave a construction of a collapsing hash function under the QLWE assumption. Actually, the construction is generic based on any lossy function  with sufficiently large ``lossy rate".\footnote{A lossy function is defined similarly to a lossy trapdoor function \cite{STOC:PeiWat08} except that we do not require the existence of trapdoor.}
Currently, we are not aware of any other construction of collapsing hash function based on standard assumptions, but any new construction of collapsing hash function yields a new instantiation of our first construction.

Zhandry \cite{EC:Zhandry19b} proved that any collision-resistant hash function that is not collapsing yields a stronger variant of public-key quantum money (with infinitely often security).
Given the difficulty of constructing public key quantum money, he suggested that most natural post-quantum collision-resistant hash functions are likely already collapsing.

\paragraph{Relation to \cite{TCC:ChiChuYam20}.}
Our idea of decomposing a verifier's internal space into ``aborting space" and ``non-aborting space" is inspired by a recent work of Chia, Chung, and Yamakawa \cite{TCC:ChiChuYam20}.
In \cite{TCC:ChiChuYam20}, the authors consider a decomposition of a prover's internal space into ``know-answer space" and ``not-know-answer space" to prove soundness of parallel repetition version of Mahadev's classical verification of quantum computation protocol \cite{FOCS:Mahadev18a}.
Though the conceptual idea and some technical tools are similar, the ways of applying them to actual problems are quite different.
For example, in our case, we need a careful analysis to make sure that a post-extraction state is close to the original one in some sense while such an argument does not appear in their work since their goal is proving soundness rather than ZK. 
On the other hand, their technical core is a approximated projection to each subspace, which is not needed in this paper. 

\paragraph{Subsequent work.}
Subsequently to this work, Chia, Chung, Liu, and Yamakawa \cite{CCLY21_arxiv} proved that there does not exist a constant round  post-quantum ZK argument for $\NP$ unless $\NP\in \mathbf{BQP}$, which is highly unlikely. 
This justifies the relaxation to $\epsilon$-ZK in our constructions.
\section{Preliminaries}
\paragraph{Basic Notations.}
We use $\secpar$ to denote the security parameter throughout the paper.
For a positive integer $n\in\mathbb{N}$, $[n]$ denotes a set $\{1,2,...,n\}$.
For a finite set $\calX$, $x\sample \calX$ means that $x$ is uniformly chosen from $\calX$.
A function $f:\mathbb{N}\ra [0,1]$ is said to be negligible if for all polynomial $p$ and sufficiently large $\secpar \in \mathbb{N}$, we have $f(\secpar)< 1/p(\secpar)$, said to be overwhelming if $1-f$ is negligible, and said to be noticeable if there is a polynomial $p$ such that we have $f(\secpar)\geq  1/p(\secpar)$ for sufficiently large $\secpar\in \mathbb{N}$.
We denote by $\poly$ an unspecified polynomial and by $\negl$ an unspecified negligible function.
We use PPT and QPT to mean (classical) probabilistic polynomial time and quantum polynomial time, respectively.
For a classical probabilistic or quantum algorithm $\A$, $y\sample \A(x)$ means that $\A$ is run on input $x$ and outputs $y$.
When $\A$ is classical probabilistic algorithm, we denote by $\A(x;r)$ to mean the execution of $\A$ on input $x$ and a randomness $r$.
When $\A$ is a quantum algorithm that takes a quantum advice, we denote by $\A(x;\rho)$ to mean the execution of $\A$ on input $x$ and an advice $\rho$.
For a quantum algorithm $\A$, a unitary part of $\A$ means the unitary obtained by deferring all measurements by $\A$ and omitting these measurements.
We use the bold font (like $\regX$) to denote quantum registers, and $\hil_\regX$ to mean the Hilbert space corresponding to the register $\regX$. 
For a quantum state $\rho$, $M_{\regX}\circ \rho$ means a measurement in the computational basis on the register $\regX$ of $\rho$.
For quantum states $\rho$ and $\rho'$, $\TD(\rho,\rho')$ denotes trace distance between them. 
For a pure state $\ket{\psi}$, $\|\ket{\psi}\|$ denotes its Euclidean norm. 
When we consider a sequence $\sequence{X}$ of some objects (e.g., bit strings, quantum states, sets, Hilbert spaces etc.) indexed by the security parameter $\secpar$, we often simply write $X$ to mean $X_\secpar$ or $\sequence{X}$, which will be clear from the context.
Similarly, for a function $f$ in the security parameter $\secpar$, we often simply write $f$ to mean $f(\secpar)$.

\paragraph{Standard Computational Models.}
\begin{itemize}
\item A PPT algorithm is a probabilistic polynomial time (classical) Turing machine.
A PPT algorithm is also often seen as a sequence of uniform polynomial-size circuits.
\item A QPT algorithm is a polynomial time quantum Turing machine. 
A QPT algorithm is also often seen as a sequence of uniform polynomial-size quantum circuits.
\item 
An adversary (or malicious party) is modeled as a non-uniform QPT algorithm $\A$ (with quantum advice) that is specified by sequences of polynomial-size quantum circuits $\{\A_\secpar\}_{\secpar\in\mathbb{N}}$ and polynomial-size quantum advice $\{\rho_\secpar\}_{\secpar\in \mathbb{N}}$.
When $\A$ takes an input of $\secpar$-bit, $\A$ runs $\A_{\secpar}$ taking $\rho_\secpar$ as an advice. 

\end{itemize}

\paragraph{Interactive Quantum Machine and Oracle-Aided Quantum Machine.}
We rely on the definition of an interactive quantum machine and oracle-aided quantum machine that is given oracle access to an interactive quantum machine following \cite{EC:Unruh12}.
Roughly, an interactive quantum machine $\A$ is formalized by a unitary over registers $\regM$ for receiving and sending messages  and $\regA$ for maintaining $\A$'s internal state.
For two interactive quantum machines $\A$ and $\B$ that share the same message register $\regM$, an interaction between $\A$ and $\B$ proceeds by alternating invocations of $\A$ and $\B$ while exchanging messages over  $\regM$.

An oracle-aided quantum machine $\mathcal{S}$ given oracle access to an interactive quantum machine $\A$ with an initial internal state $\rho$ (denoted by $\mathcal{S}^{\A(\rho)}$) is allowed to apply unitary part of $\A$ and its inverse in a black-box manner where $\mathcal{S}$ can act on $\A$'s internal register $\regA$ only through oracle access.
We refer to \cite{EC:Unruh12} for more formal definitions of interactive quantum machines and black-box access to them.


\paragraph{Indistinguishability of Quantum States.}
We define computational and statistical indistinguishability of quantum states similarly to \cite{STOC:BitShm20}.

We may consider random variables over bit strings or over quantum states. 
This will be clear from the context. 
For ensembles of random variables $\mathcal{X}=\{X_i\}_{\secpar\in \mathbb{N},i\in I_\secpar}$ and $\mathcal{Y}=\{Y_i\}_{\secpar\in \mathbb{N},i\in I_\secpar}$  over the same set of indices $I=\bigcup_{\secpar\in\mathbb{N}}I_\secpar$ and a function $\delta$,       
we write $\mathcal{X}\compind_{\delta}\mathcal{Y}$ to mean that for any non-uniform QPT algorithm $\A=\{\A_\secpar,\rho_\secpar\}$, there exists a negligible function $\negl$ such that for all $\secpar\in\mathbb{N}$, $i\in I_\secpar$, we have
\[
|\Pr[\A_\secpar(X_i;\rho_\secpar)]-\Pr[\A_\secpar(Y_i;\rho_\secpar)]|\leq \delta(\secpar) + \negl(\secpar).
\]
Especially, when we have the above for $\delta=0$, we say that $\calX$ and $\calY$ are computationally indistinguishable, and simply write $\calX\compind \calY$.

Similarly, we write $\calX\statind_{\delta}\calY$ to mean that for any unbounded time  algorithm $\A$, there exists a negligible function $\negl$ such that for all $\secpar\in\mathbb{N}$, $i\in I_\secpar$, we have 
\[
|\Pr[\A(X_i)]-\Pr[\A(Y_i)]|\leq \delta(\secpar) + \negl(\secpar).\footnote{In other words, $\calX\statind_{\delta}\calY$ means that there exists a negligible function $\negl$ such that the trace distance between $\rho_{X_i}$ and $\rho_{Y_i}$ is at most $\delta(\secpar) + \negl(\secpar)$ for all $\secpar\in \mathbb{N}$ and $i\in I_\secpar$ where $\rho_{X_i}$ and $\rho_{Y_i}$ denote density matrices corresponding to $X_{i}$ and $Y_{i}$.}
\]
Especially, when we have the above for $\delta=0$, we say that $\calX$ and $\calY$ are statistically indistinguishable, and simply write $\calX\statind \calY$.
Moreover, 
we write $\calX \equiv \calY$ to mean
that $X_i$ and $Y_i$ are distributed identically for all $i\in I$

\subsection{Post-Quantum One-Way Functions and Collapsing Hash Functions}
A post-quantum one-way function (OWF) is a classically computable function that is hard to invert in QPT. 
A collapsing hash function is a quantum counterpart of collision-resistant hash function introduced by Unruh \cite{EC:Unruh16}. 
Unruh \cite{AC:Unruh16} gave a construction of collapsing hash functions based on the QLWE assumption. 
We give formal definitions in  
\ifnum\cameraready=1
the full version
\else
Appendix
\ifnum\submission=0
 \ref{app:omitted_definition}  
\else
 \ref{app:omitted_definition_OWF_Collapsing}
\fi
\fi
since they are only used for constructing other cryptographic primitives and not directly used in our constructions.

\subsection{Commitment}\label{sec:commitment}
\ifnum\submission=1 
\revise{We use commitments in our constructions. 
Though they are mostly standard, we need one new security notion which we call \emph{strong collapse-binding}, which is a stronger variant of  collapse-biding introduced by Unruh \cite{EC:Unruh16}.
Roughly speaking, this security requires that for any superposition of messages and randomness corresponding the same commitment generated by an adversary, the adversary cannot distinguish if the message and randomness registers are measured or not.
The difference from the original collapse-binding property is that both message and randomness registers are measured rather than only the message register. 
We observe that the collapse-binding commitment  based on collapsing hash functions  in \cite{EC:Unruh16} also satisfies the strong collapse-binding property.
Especially, there exists a strong collapse-binding commitment under the QLWE assumption.
\ifnum\cameraready=1
See the full version for deails of the definition and construction of strong collapse-binding commitments. 
\else
See Sec. \ref{sec:appendix_commitment} for details of the definition and Sec. \ref{sec:strong_collapse_binding} for the construction of strong collapse-binding commitments.
\fi
}
\else 
We give definitions of commitments and their security.
Though mostly standard, we introduce one new security notion which we call \emph{strong collapse-binding}, which is a stronger variant of  collapse-biding introduced by Unruh \cite{EC:Unruh16}.
As shown in Appendix \ref{sec:strong_collapse_binding}, we can see that Unruh's construction of collapse-binding commitments actually also satisfies strong collapse-binding with almost the same (or even simpler) security proof.

\begin{definition}[Commitment.]\label{def:commitment}
A (two-message) commitment scheme   with message space $\calM$, randomness space $\calR$, commitment space $\COM$, and a public parameter space $\ppspace$  consists of two classical PPT algorithms $(\Setup,\Commit)$:
\begin{description}
\item[$\Setup(1^\secpar)$:] The setup algorithm takes the security parameter $1^\secpar$ as input and outputs a public parameter $\pp\in \ppspace$.
\item[$\Commit(\pp,m)$:] The committing algorithm takes a public parameter $\pp\in \ppspace$ and a message $m\in \calM$ as input and outputs a commitment $\com\in \COM$. 
\end{description}
We say that a commitment scheme is non-interactive if a public parameter $\pp$ generated by $\Setup(1^\secpar)$ is always just the security parameter $1^\secpar$. For such a scheme, we omit to write $\Setup$.

We define the following security notions for a commitment scheme.

\paragraph{Statistical/Computational Hiding.}
For an adversary $\A$, we consider an experiment $\Exp_{\A}^{\hiding}(1^\secpar)$ defined below:
\begin{enumerate}
    \item $\A$ is given the security parameter $1^\secpar$ and sends a (possibly malformed) public parameter $\pp\in\ppspace$ and $(m_0,m_1)\in \calM^2$ to the challenger 
    \item The challenger randomly picks $b\sample \bit$, computes $\com\sample \Commit(\pp,m_b)$, and sends $\com$ to $\A$. 
    \item $\A$ is given a commitment $\com$ and outputs $b'\in \bit$.
    The experiment outputs $1$ if $b=b'$ and $0$ otherwise.
\end{enumerate}
We say that a commitment scheme satisfies statistical (resp. computational) hiding if for any unbounded-time (resp. non-uniform QPT) adversary $\A$, we have
\begin{align*}
    |\Pr[1\sample \Exp_{\A}^{\hiding}(1^\secpar)]-1/2|=\negl(\secpar).
\end{align*}
\begin{remark}
Our definition of hiding requires that the security should hold even if $\pp$ is maliciously generated. Thus, the hiding property holds even if a receiver  runs the setup algorithm. 
\end{remark}

\paragraph{Binding.}
\begin{itemize}
\item \textbf{Perfect/Statistical/Computational Binding.}
We say that a non-interactive commitment scheme satisfies statistical (resp. computational) binding if for any unbounded-time (resp. non-uniform QPT) adversary $\A$, we have  
\ifnum\submission=0
\begin{align*}
    \Pr[\Commit(\pp,m;r)=\Commit(\pp,m';r')\land m\neq m' : \pp \sample \Setup(1^\secpar),(m,m',r,r')\sample \A(\pp)]=\negl(\secpar).
\end{align*}
\else
\begin{align*}
    \Pr\left[
    \begin{array}{ll}
     &\Commit(\pp,m;r)=\Commit(\pp,m';r')\\
     &\land~ m\neq m' 
    \end{array}
    : 
    \begin{array}{ll}
    &\pp \sample \Setup(1^\secpar),\\
    &(m,m',r,r')\sample \A(\pp)
    \end{array}
    \right]=\negl(\secpar).
\end{align*}
\fi
We say that a scheme satisfies perfect binding if the above probability is $0$ for all unbounded-time adversary $\A$.

\item \textbf{Strong Collapse-Binding.}
For an adversary $\A$, we define an experiment $\Exp_{\A}^{\clbinding}(1^\secpar)$ as follows:
\begin{enumerate}
    \item The challenger generates $\pp\sample \Setup(1^\secpar)$.
    \item $\A$ is given the public parameter $\pp$ as input and generates a commitment $\com\in\COM$ and a quantum state $\sigma$ over registers $(\regM,\regR,\regA)$ where
    $\regM$ stores an element of $\calM$, $\regR$ stores an element of $\calR$, and $\regA$ is $\A$'s internal register.
    Then it sends $\com$ and registers $(\regM,\regR)$ to the challenger, and keeps $\regA$ on its side.
    \item The challenger picks $b\sample \bit$ If $b=0$, the challenger does nothing and if $b=1$, the challenger measures registers $(\regM,\regR)$ in the computational basis.
    The challenger returns registers $(\regM,\regR)$ to $\A$
    \label{step:collapse_binding_game_measure}
    \item $\A$ outputs a bit $b'$. The experiment outputs $1$ if $b'=b$ and $0$ otherwise. 
\end{enumerate}
We say that $\A$ is a valid adversary if we we have 
\[
\Pr[\Commit(\pp,m;r)=\com:\pp\sample\setup(1^\secpar),(\com,\sigma)\sample \A(\pp),(m,r)\leftarrow M_{\regM,\regR}\circ \sigma]=1.
\]
We say that a  commitment is strongly collapse-binding if for any non-uniform QPT valid adversary $\A$,  we have
\[
|\Pr[1\sample \Exp_{\A}^{\clbinding}(1^\secpar)]-1/2|= \negl(\secpar).
\]
\end{itemize}
\begin{remark}\label{rem:statistical_to_collapse_binding}
The difference of strong collapse-binding  from the original collapse-binding  is that the challenger measures both registers $(\regM,\regR)$ in Step \ref{step:collapse_binding_game_measure} in the case of $b=1$ whereas the challenger of the original collapse-binding game only measures $\regM$.
We note that the statistical binding property immediately implies the (original) collapse-binding property, but it does not imply the strong collapse-binding property.
\end{remark}
\begin{remark}\label{rem:collapse_to_computational_binding}
One can easily see that the strong collapse-binding property implies the computational binding property. 
Indeed, if one can find $(m,r)\neq (m',r')$ such that $\Commit(\pp,m;r)=\Commit(\pp,m';r')=\com$, then we can break the strong collapse-binding property by sending 
$\com$ and
$\ket{\psi}:=\frac{1}{\sqrt{2}}(\ket{m,r}_{\regM,\regR}+\ket{m',r'}_{\regM,\regR})$ to the challenger and performing a measurement $(\ket{\psi}\bra{\psi},I-\ket{\psi}\bra{\psi})$ on the returned state to distinguish if the state is measured.
\end{remark}
\end{definition}

We introduce the following definition for convenience.
\begin{definition}[Binding Public Parameter.]\label{def:binding_pp}
We say that $\pp\in \ppspace$ is \emph{binding}
if for any commitment $\com\in \COM$, there is at most one $m\in\calM$ such that $\Commit(\pp,m;r)=\com$ for some $r\in \calR$.
\end{definition}
The following lemma is easy to see.
\begin{lemma}\label{lem:overwhelming_fraction_is_binding}
If a commitment scheme is statistically binding, then overwhelming fraction of $\pp$ generated by $\Setup(1^\secpar)$ is binding.
\end{lemma}

We also consider an additional security definition.
\begin{definition}[Unpredictability.]\label{def:unpredictability}
For an adversary $\A$, we consider an experiment $\Exp_{\A}^{\mathsf{unpre}}(1^\secpar)$ defined below:
\begin{enumerate}
    \item $\A$ is given the security parameter $1^\secpar$ and sends a (possibly malformed) public parameter $\pp\in\ppspace$ to the challenger.
    \item The challenger randomly picks $m\sample \calM$, computes $\com\sample \Commit(\pp,m)$, and sends $\com$ to $\A$. 
    \item $\A$ returns $m^*$.
    The experiment outputs $1$ if $m=m^*$ and $0$ otherwise.
\end{enumerate}
We say that a commitment scheme is  unpredictable if for any non-uniform QPT adversary $\A$, we have
\begin{align*}
    \Pr[1\sample \Exp_{\A}^{\mathsf{unpre}}(1^\secpar)]=\negl(\secpar).
\end{align*}
\end{definition}
The following lemma is a folklore, and easy to prove.
\begin{lemma}\label{lem:hiding_to_unpredictable}
If a commitment scheme is computationally hiding and $|\calM|=2^{\omega(\secpar)}$, then the scheme is unpredictable. 
\end{lemma}

\paragraph{Instantiations.}
A computationally hiding and statistically binding commitment scheme exists under the existence of OWF \cite{JC:Naor91,SICOMP:HILL99}. 
A computationally  hiding and perfectly binding non-interactive commitment scheme exists under  the QLWE assumption \cite{TCC:GHKW17,ePrint:LomSch19}.

A statistically hiding and strong collapse-binding commitment scheme exists assuming the existence of collapsing hash functions (and thus under the QLWE assumption) \cite{EC:Unruh16,AC:Unruh16}. 
This can be seen by observing that the proof of  (original) collapse-binding property from collapsing hash functions in \cite{EC:Unruh16,AC:Unruh16} already implicitly proves the strong collapse-binding property.
For completeness, we give a proof in  Appendix \ref{sec:strong_collapse_binding}.
\fi

\subsection{Interactive Proof and Argument.}
We define interactive proofs and arguments similarly to \cite{STOC:BitShm20}. 
\paragraph{Notations.}
For an $\NP$ language $\lang$ and $x\in \lang$, $\rel_{\lang}(x)$ is the set that consists of all (classical) witnesses $w$ such that the verification machine for $L$ accepts $(x,w)$.

A (classical) interactive protocol is modeled as an interaction between interactive quantum machines $\pro$ referred to as a prover and $\ver$ referred to as a verifier that can be implemented by PPT algorithms.
We denote by $\execution{\pro(x_{\pro})}{\ver(x_{\ver})}(x)$ an execution of the protocol where $x$ is a common input, $x_\pro$ is $\pro$'s private input, and $x_\ver$ is $\ver$'s private input.
We denote by $\OUT_\ver\execution{\pro(x_{\pro})}{\ver(x_{\ver})}(x)$ the fianl output of $\ver$ in the execution. 
An honest verifier's output is $\top$ indicating acceptance or $\bot$ indicating rejection, and a quantum malicious verifier's output may be an arbitrary quantum state.  

\begin{definition}[Interactive Proof and Argument for $\NP$]
An interactive proof or argument for an $\NP$ language $\lang$ is an interactive protocol between a PPT prover $\pro$ and a PPT verifier $\ver$ that satisfies the following: 
\paragraph{Perfect Completeness.}
For any $x\in L$,  and $w\in R_L(x)$, we have 
\begin{align*}
    \Pr[\OUT_\ver\execution{\pro(w)}{\ver}(x)=\top]=1
\end{align*}
\paragraph{Statistical/Computational Soundness.}
We say that an interactive protocol is statistically (resp. computationally) sound if for any unbounded-time (resp. non-uniform QPT) cheating prover $\pro^*$, there exists a negligible function $\negl$ such that for any $\secpar \in \mathbb{N}$ and any $x\in \bit^\secpar\setminus \lang $, we have  
\begin{align*}
    \Pr[\OUT_\ver\execution{\pro^*}{\ver}(x)=\top]\leq \negl(\secpar).
\end{align*}
We call an interactive protocol with statistical (resp. computational) soundness an interactive proof (resp. argument).
\end{definition}

\ifnum\submission=0 
\subsubsection{Witness Indistinguishable Proof of Knowledge}\label{sec:wipok}
\begin{definition}[Witness Indistinguishable Proof of Knowledge]\label{def:WIPoK}
A witness indistinguishable proof of knowledge for an $\NP$ language $\lang$ is an interactive proof for $\lang$ that satisfies the following properties (in addition to perfect completeness and statistical soundness):
\paragraph{Witness Indistinguishability.}
For any non-uniform QPT malicious verifier $\ver^*$, we have 
\begin{align*}
\{\OUT_{\ver^*}\execution{\pro(w_0)}{\ver^*}(x)\}_{\secpar,x,w_0,w_1}\compind \{\OUT_{\ver^*}\execution{\pro(w_1)}{\ver^*}(x)\}_{\secpar,x,w_0,w_1}
\end{align*}
where $\secpar\in \mathbb{N}$, $x\in \lang \cap \bit^\secpar$,  and $w_0,w_1\in \rel_\lang(x)$.
\paragraph{(Non-Adaptive) Knowledge Extractability.}
There is an oracle-aided QPT algorithm $\kext$, a polynomial $\poly$, a negligible function $\negl$, and a constant $d\in \mathbb{N}$ such that for any quantum unbounded-time  malicious prover $\pro^*=\{\pro^*_\secpar,\rho_\secpar\}_{\secpar\in \mathbb{N}}$, $\secpar \in \mathbb{N}$, and $x\in \bit^\secpar$, we have   
\begin{align*}
\Pr[w\in \rel_L(x):w\sample \kext^{\pro^*_\secpar(\rho_\secpar)}(x)]
    \geq \frac{1}{\poly(\secpar)}\cdot \Pr[\OUT_{\ver}\execution{\pro^*_\secpar(\rho_\secpar)}{\ver}(x)=\top]^d - \negl(\secpar).
\end{align*}
\end{definition}
\paragraph{Instantiations.}
We can construct a constant round (actually $4$-round) witness indistinguishable proof of knowledge for $\NP$ only assuming the existence of post-quantum OWFs by some tweak of existing works \cite{EC:Unruh12,EC:Unruh16}. We briefly explain this below.

A constant round witness indistinguishable proof of knowledge that satisfies the above requirements was first constructed by Unruh \cite{EC:Unruh12} based on \emph{strict-binding commitments}, where a commitment perfectly binds not only a message but also randomness. 
Due to the usage of strict-binding commitment, an instantiation of this protocol requires one-to-one OWF, for which there is no post-quantum candidate under standard assumptions. 
Later, Unruh \cite{EC:Unruh16} proved that the protocol in \cite{EC:Unruh12} can be instantiated using collapse-binding commitments instead of strict-binding commitments if we relax the knowledge extractability requirement to computational one.
Since the statistical binding property trivially implies the collapse-binding property as noted in Remark \ref{rem:statistical_to_collapse_binding}, we can just use statistically binding commitments as  collapse-binding commitments in the construction of \cite{EC:Unruh16}.
Moreover, since the statistically binding property can be seen as a ``statistical version" of collapse-binding, we obtain statistical knowledge extractability.\footnote{If one is not convinced by this informal explanation, one can think of our claim as the existence of a constant round witness indistinguishable \emph{argument} of knowledge for $\NP$ under the existence of OWF. Actually, this computational version of knowledge extractability suffices for the purpose of this paper.} 
In summary, the construction in \cite{EC:Unruh16} instantiated with statistically binding (and computational hiding) commitments suffices for our purpose.

We also give another more concrete explanation.
The protocol in \cite{EC:Unruh12} is a modification of Blum's Graph Hamiltonicity protocol \cite{Blum86}. For proving the knowledge extractability, Unruh introduced a rewinding technique that enables the extractor to run a prover twice for different challenges.
In his extraction strategy, the extractor records \emph{both committed messages and randomness} when it runs the prover for the first time. 
For ensuring that this does not collapse the prover's state too much, he assumed the strict-binding property.
Here, we observe that the extractor actually need not record both committed message and randomness, and it only need to record the committed message. (Indeed, the security proof in \cite{EC:Unruh16} does so).
In this case, the statistical binding property instead of the strict-binding property suffices to ensure that the prover's state is not collapsed too much since the randomness register is not measured by the extractor.
\fi

\subsubsection{Delayed-Witness \texorpdfstring{$\Sigma$}{Sigma}-Protocol}\label{sec:delayed_sigma}
We introduce a special type of $\Sigma$-protocol which we call  \emph{delayed-witness $\Sigma$-protocol} where the first message can be generated without witness.
\begin{definition}[Delayed-Witness $\Sigma$-protocol]\label{def:delayed_witness_sigma_protocol}
A (post-quantum) delayed-witness $\Sigma$-protocol for an $\NP$ language $L$ is a 3-round interactive proof for $\NP$  with the following syntax.
\end{definition}
\noindent\textbf{Common Input:} An instance $x\in \lang \cap \bit^{\secpar}$ for security parameter $\secpar \in \mathbb{N}$.\\
\textbf{$\pro$'s Private Input:} A classical witness $w\in \rel_\lang(x)$ for $x$. 
\begin{enumerate}
\item $\pro$ generates a ``commitment" $a$ and a state $\st$. 
For this part, $\pro$ only uses the statement $x$ and does not use any witness $w$.
We denote this procedure by $(a,\st)\sample \Sigma.\pro_1(x)$.
Then it sends $a$ to the verifier, and keeps $\st$ as its internal state.
\item $\ver$ chooses a``challenge" $e\sample \bit^\secpar$ and sends $e$ to $\pro$.
\item $\pro$ generates a ``response" $z$ from $\st$, witness $w$, and $e$.
We denote this procedure by $z\sample \Sigma.\pro_3(\st,w,e)$.
Then it sends $z$ to $\ver$.
\item $\ver$ verifies the transcript $(a,e,z)$ and outputs $\top$ indicating acceptance or $\bot$ indicating rejection.
We denote this procedure by $\top/\bot \sample \Sigma.\ver(x,a,e,z)$.
\end{enumerate}
We require a delayed-witness $\Sigma$-protocol to satisfy the following property in addition to perfect completeness and statistical soundness.\footnote{We do not require \emph{special soundness}, which is often a default requirement of $\Sigma$-protocol.}
\paragraph{Special Honest-Verifier Zero-Knowledge.}
There exists a PPT simulator $\siml_{\Sigma}$ such that we have
\begin{align*}
&\{(a,z):(a,\st)\sample \Sigma.\pro_1(x),z\sample \Sigma.\pro_3(\st,w,e)\}_{\secpar,x,w,e}
\compind
\{(a,z):(a,z)\sample \siml_{\Sigma}(x,e)\}_{\secpar,x,w,e}
\end{align*}
where $x\in \lang \cap \bit^\secpar$, $w\in \rel_\lang(x)$, and $e\in \bit^\secpar$.

\paragraph{Instantiations.}
An example of a dealyed-witness $\Sigma$-protocol is a parallel repetition version of Blum's Graph Hamiltonicity protocol \cite{Blum86}.
In the ptorocol, we need a computationally hiding  and perfectly binding non-interactive commitment scheme, which exists under the QLWE assumption as noted in Sec. \ref{sec:commitment}. 
In summary, a delayed-input $\Sigma$-protocol for all $\NP$ languages exists under the QLWE assumption.


\subsubsection{Quantum \texorpdfstring{$\epsilon$}{epsilon}-Zero-Knowledge Proof and Argument}
Here, we define quantum black-box $\epsilon$-zero-knowledge proofs and arguments.
The difference from the definition of quantum zero-knowledge in \cite{STOC:BitShm20} are:
\begin{enumerate}
    \item(\textbf{$\epsilon$-Zero-Knowledge}) We allow the simulator to depend on a noticeable ``accuracy parameter" $\epsilon$, and allows its running time to polynomially depend on $\epsilon^{-1}$, and 
    \item(\textbf{Black-Box Simulation})
    the simulator is only given black-box access to a malicious verifier.
\end{enumerate}
\begin{definition}[Post-Quantum Black-Box $\epsilon$-Zero-Knowledge Proof and Argument]\label{def:post_quantum_ZK}
A post-quantum black-box  $\epsilon$-zero-knowledge proof (resp. argument) for an $\NP$ language $\lang$ is an interactive proof (resp. argument) for $\lang$ that satisfies the following property in addition to perfect completeness and statistical (resp. computational) soundness:
\paragraph{Quantum Black-Box $\epsilon$-Zero-Knowledge.}
There exists an oracle-aided QPT simulator $\siml$ such that for any non-uniform QPT malicious verifier $\ver^*=\{\ver^*_\secpar,\rho_\secpar\}_{\secpar\in \mathbb{N}}$ and any noticeable function $\epsilon(\secpar)$, we have
\[
\{\OUT_{\ver^*_\secpar}\execution{\pro(w)}{\ver^*_\secpar(\rho_\secpar)}(x)\}_{\secpar,x,w}
\compind_{\epsilon}
\{\OUT_{\ver^*_\secpar}(\siml^{\ver^*_\secpar(\rho_\secpar)}(x,1^{\epsilon^{-1}}))\}_{\secpar,x,w}
\]
where $\secpar \in \mathbb{N}$, $x\in \lang\cap \bit^\secpar$, $w\in \rel_\lang(\secpar)$, and  $\OUT_{\ver^*_\secpar}(\siml^{\ver^*_\secpar(\rho_\secpar)}(x))$ is the state in the output register of $\ver^*_\secpar$ after the simulated execution of $\ver^*_\secpar$ by $\siml$.
\end{definition}
\begin{remark}
In the above definition of quantum black-box $\epsilon$-zero-knowledge, we do not consider an entanglement between auxiliary input of a malicious verifier and distinguisher unlike the original definition of quantum zero-knowledge by Watrous  \cite{SIAM:Watrous09}. 
However, in  
\ifnum\cameraready=1
the full version
\else
Appendix \ref{app:equivalence_ZK}, 
\fi 
we show that the above definition implies indistinguishability against a distinguisher that may get an entangled state to verifier's auxiliary input by taking advantage of black-box simulation.
\end{remark}

\ifnum\submission=1 
\subsubsection{Witness Indistinguishable Proof of Knowledge}
The definition of witness   indistinguishable proof of knowledge is given in  \ifnum\cameraready=1
the full version
\else
Appendix \ref{sec:wipok}.
\fi
\fi

\subsection{Quantum Rewinding Lemma}
Watrous \cite{SIAM:Watrous09} proved a lemma that enables us to amplify the success probability of a quantum algorithm under certain conditions.
The following form of the lemma is based on that in  \cite[Lemma 2.1]{STOC:BitShm20}.
\begin{lemma}[\cite{SIAM:Watrous09,STOC:BitShm20}]\label{lem:quantum_rewinding}
There is an oracle-aided quantum algorithm $\sfR$ that gets as input the following:
\begin{itemize}
\item A quantum circuit $\sfQ$ that takes $n$-input qubits in register $\reginp$ and outputs a classical bit $b$ (in a register outside $\reginp$)  and an $m$ output qubits.  
\item An $n$-qubit state $\rho$ in register $\reginp$.
\item A number $T\in \mathbb{N}$ in unary.
\end{itemize}

$\sfR(1^T,\sfQ,\rho)$ executes in time $T\cdot|\sfQ|$  and outputs a distribution over $m$-qubit states  $D_{\rho}\defeq \sfR(1^T,\sfQ,\rho)$  with the following guarantees.

For an $n$-qubit state $\rho$, denote by $\sfQ_{\rho}$ the conditional distribution of the output distribution $\sfQ(\rho)$,
conditioned on $b = 0$, and denote by $p(\rho)$ the probability that $b = 0$. If there exist $p_0, q \in (0,1)$, $\gamma \in (0,\frac{1}{2})$
such that:
\begin{itemize}
    \item  Amplification executes for enough time: $T\geq \frac{\log (1/\gamma)}{4p_0(1-p_0)}$,
    \item  There is some minimal probability that $b = 0$: For every $n$-qubit state $\rho$, $p_0\leq p(\rho)$,
    \item  $p(\rho)$ is input-independent, up to $\gamma$ distance: For every $n$-qubit state $\rho$, $|p(\rho)-q|<\gamma$, and
    \item  $q$ is closer to $\frac{1}{2}$: $p_0(1-p_0)\leq q(1-q)$,
\end{itemize}
then for every $n$-qubit state $\rho$,
\begin{align*}
    \TD(\sfQ_{\rho},D_{\rho})\leq 4\sqrt{\gamma}\frac{\log(1/\gamma)}{p_0(1-p_0)}.
\end{align*}
Moreover, $\sfR(1^T,\sfQ,\rho)$ works in the following manner: 
It  uses $\sfQ$ for only implementing oracles that perform the unitary part of $\sfQ$ and its inverse, acts on $\reginp$ only through these oracles, and the output of $\sfR$ is the state in the output register of $\sfQ$ after the simulated execution.
We note that $\sfR$ may directly act on $\sfQ$'s internal registers other than $\reginp$. 
\end{lemma}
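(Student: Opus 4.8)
The plan is to follow Watrous's amplitude‑amplification argument. First I would pass to the unitary picture: write $\sfQ$ as ``apply a unitary $U$ on $\reginp$ together with fresh ancilla registers (a one‑qubit flag register $\regF$, an $m$‑qubit output register $\regO$, and work qubits), all initialized to $\ket{0}$, then measure $\regF$ to read the bit $b$.'' Put $\Pi_{\mathsf{succ}}\defeq I\ot\ket{0}\bra{0}_\regF$ (``flag reads $0$'') and $\Pi_{\mathsf{init}}\defeq I_\reginp\ot\ket{0}\bra{0}$ on the ancilla space (``ancillas are reset''). The rewinding machine $\sfR$ then does: apply $U$; repeat up to $T$ times the step ``measure $\regF$; on outcome $0$ output the contents of $\regO$ and halt; on outcome $1$ apply the reflection $\widetilde W\defeq U(2\Pi_{\mathsf{init}}-I)U^\dagger$''; if $T$ iterations elapse without halting, output whatever $\regO$ currently holds. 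Since $\widetilde W$ is built only from the $U,U^\dagger$ oracles and a phase flip that is the identity on $\reginp$, the machine $\sfR$ acts on $\reginp$ solely through the oracle and runs in time $O(T\cdot|\sfQ|)$, as required.

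For the analysis I would fix a \emph{pure} input $\ket{\varphi}_\reginp$ and start with the idealized case where $p(\cdot)$ is exactly the constant $q$. Let $\ket{\psi}\defeq U(\ket{\varphi}_\reginp\ket{0})$ and split $\ket{\psi}=\sqrt{q}\,\ket{\psi_{\mathsf{good}}}+\sqrt{1-q}\,\ket{\psi_{\mathsf{bad}}}$ along $\Pi_{\mathsf{succ}}$. The crucial observation is that $M\defeq\bra{0}U^\dagger\Pi_{\mathsf{succ}}U\ket{0}$ (the partial overlap over the ancilla registers, an operator on $\hil_\reginp$) is PSD with quadratic form identically $q$, hence $M=q\,I_\reginp$; substituting this yields
\[
\widetilde W\ket{\psi_{\mathsf{bad}}}=2\sqrt{q(1-q)}\,\ket{\psi_{\mathsf{good}}}+(1-2q)\,\ket{\psi_{\mathsf{bad}}},
\]
so $\widetilde W$ keeps the state inside $\mathrm{span}\{\ket{\psi_{\mathsf{good}}},\ket{\psi_{\mathsf{bad}}}\}$, and after $\widetilde W$ a flag measurement returns $0$ with probability $4q(1-q)$. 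Consequently the first measurement reads $0$ with probability $q$, every later one (applied to the collapsed bad state after $\widetilde W$) with probability $4q(1-q)$, and — the point that makes the output correct — conditioned on ever reading $0$ the register $\regO$ holds exactly the $\regO$‑marginal of $\ket{\psi_{\mathsf{good}}}$, which is by definition $\sfQ_\rho$. Thus the probability that no iteration reads $0$ is $(1-q)\bigl(1-4q(1-q)\bigr)^{T-1}$, and the side hypotheses $p_0\le p(\rho)$ and $p_0(1-p_0)\le q(1-q)$ together with $T\ge\log(1/\gamma)/(4p_0(1-p_0))$ force this to be $\le\gamma$, giving $\TD(\sfQ_\rho,D_\rho)\le\gamma$ in the ideal case.

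Next I would remove the idealization: the hypotheses only supply $|p(\varphi)-q|<\gamma$ for every $\varphi$, so $M$ is merely $q\,I+E$ with $\|E\|\le\gamma$ (a Hermitian operator whose quadratic form is within $\gamma$ of $q$ has operator‑norm deviation $\le\gamma$). Propagating this through the computation above, $\widetilde W\ket{\psi_{\mathsf{bad}}}$ still lands in the intended two‑dimensional space up to an error vector whose norm the hypotheses bound by $O(\sqrt{\gamma})$ — this square root is the source of the square root in the final bound — and each post‑$\widetilde W$ measurement still succeeds with probability within $O(\sqrt\gamma)$ of $4q(1-q)$. A hybrid argument over the at most $T=\log(1/\gamma)/(4p_0(1-p_0))$ iterations then accumulates total deviation $O(T\sqrt{\gamma})=O\!\bigl(\sqrt{\gamma}\,\log(1/\gamma)/(p_0(1-p_0))\bigr)$; folding in the $\le\gamma$ ``never read $0$'' probability and tracking constants gives the stated bound $\TD(\sfQ_\rho,D_\rho)\le 4\sqrt{\gamma}\,\log(1/\gamma)/(p_0(1-p_0))$. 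Organizing the general case via a Jordan decomposition of the projector pair $(\Pi_{\mathsf{init}},U^\dagger\Pi_{\mathsf{succ}}U)$ into two‑dimensional invariant blocks — whose rotation angles all lie within $\gamma$ of one another by the near‑constancy of $p$ — is an equivalent and perhaps cleaner bookkeeping device. Finally, to pass from pure inputs to an arbitrary (possibly externally entangled) $\rho$: the description of $\sfR$ is input‑independent and the pure‑state bound is uniform in $\ket{\varphi}$, so writing $\rho$ as a mixture of pure states and using joint convexity and monotonicity of trace distance transfers the same bound to all $\rho$.

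I expect the real (non‑idealized) case to be the main obstacle: one must pin down precisely why the per‑iteration perturbation coming from $\|E\|\le\gamma$ is of order $\sqrt{\gamma}$ and, more delicately, show that these perturbations accumulate only \emph{linearly} over the $T$ iterations rather than compounding, all while verifying that each post‑$\widetilde W$ measurement still succeeds with probability close enough to $4q(1-q)$ that the $T$‑step escape estimate survives. A secondary but necessary point is the bookkeeping of the side conditions — in particular using $p_0(1-p_0)\le q(1-q)$ and the stipulation that $q$ is the value closer to $\tfrac12$ to guarantee that $\log(1/\gamma)/(4p_0(1-p_0))$ iterations genuinely suffice no matter where in $(0,1)$ the true $p(\rho)$ falls, and checking that the output‑register content conditioned on success equals $\sfQ_\rho$ at \emph{every} iteration, not only the first.
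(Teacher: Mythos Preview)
The paper does not prove this lemma: it is stated as a preliminary result cited from \cite{SIAM:Watrous09,STOC:BitShm20}, with only a short remark afterwards noting that the ``Moreover\ldots'' clause about black-box access follows by inspecting the description of $\sfR$ in \cite{SIAM:Watrous09} (after moving the flag bit $b$ to a register outside $\reginp$). So there is no ``paper's own proof'' to compare against; the relevant comparison is to Watrous's original argument.

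Your sketch is exactly that argument. The construction of $\sfR$ via the reflection $\widetilde W=U(2\Pi_{\mathsf{init}}-I)U^\dagger$, the observation that $M=\bra{0}U^\dagger\Pi_{\mathsf{succ}}U\ket{0}=qI$ forces the dynamics into a two-dimensional invariant plane, the computation $\widetilde W\ket{\psi_{\mathsf{bad}}}=2\sqrt{q(1-q)}\ket{\psi_{\mathsf{good}}}+(1-2q)\ket{\psi_{\mathsf{bad}}}$, and the resulting failure probability $(1-q)(1-4q(1-q))^{T-1}$ are all correct and match \cite{SIAM:Watrous09}. Your handling of the black-box property (the reflection touches $\reginp$ only through $U,U^\dagger$) is also the content of the paper's remark.

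One point to tighten: you attribute the $\sqrt{\gamma}$ in the final bound to a per-iteration error of order $\sqrt{\gamma}$ coming from $\|E\|\le\gamma$. In Watrous's actual analysis the per-iteration perturbation from $\|M-qI\|\le\gamma$ is of order $\gamma$ (in fidelity/overlap), and the square root enters only at the end when one passes from a fidelity lower bound of the form $1-O(\gamma T^2)$ to a trace-distance upper bound via $\TD\le\sqrt{1-F^2}$. Your hybrid-over-$T$-iterations bookkeeping is the right shape, but the $\sqrt{\gamma}$ should appear once at the conversion step, not inside each iteration. With that correction your outline is faithful to the cited proof.
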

\begin{remark}
The final claim of the lemma (``Moreover...") is not explicitly stated in previous works.
In the description of $\mathsf{R}$ in \cite{SIAM:Watrous09}, the first qubit of $\reginp$ is designated to output $b$,  and thus the above requirement is not satisfied.
However, this can be easily avoided by just letting $\sfQ$ output $b$ in a register outside $\reginp$ as required above.
Then one can see that $\sfR$ acts on the input register only through $\sfQ$ as seen from the description of $\sfR$ in \cite{SIAM:Watrous09} (with the above modification in mind).
Looking ahead, this is needed to show our $\epsilon$-zero-knowledge simulators are black-box.
\end{remark}

\section{Technical Lemmas}
In this section, we introduce three lemmas that are used in the proof of the extraction lemma (Lemma \ref{lem:extraction}) in Sec. \ref{sec:extraction}.

\if0
The first lemma gives an  upper bound of trace distance between two mixed states, which immediately follows from   strong convexity of the trace distance.
 \begin{lemma}\label{lem:bound_trace_distance}
 Let $\ket{\psi_b}$ and $\ket{\psi'_b}$ for $b\in\bit$ be normalized pure quantum states over the same Hilbert space, and $p_b \in [0,1]$ be a real for $b \in \bit$. 
 Let $\sigma_{b}$ be a mixed state defined as follows for $b\in \bit$:
 \begin{align*}
 \sigma_{b}:=p_b \ket{\psi_b}\bra{\psi_b}  + (1-p_b)\ket{\psi'_b}\bra{\psi'_b}. 
 \end{align*} 
 Then, for any real $\tilde{p} \in [0,1]$, we have
 \begin{align*}
 \TD(\sigma_0,\sigma_1)\leq |p_0-\tilde{p}|+|p_1-\tilde{p}|+ \tilde{p}\sqrt{1-|\ipro{\psi_0}{\psi_1}|^2}+ (1-\tilde{p})\sqrt{1-|\ipro{\psi'_0}{\psi'_1}|^2}.
 \end{align*}
 \end{lemma}
 \begin{proof} 
 Let $\tilde{\sigma}_{1}$ be a mixed state defined as follows:
  \begin{align*}
 \tilde{\sigma}_{1}:=\tilde{p} \ket{\psi_1}\bra{\psi_1}  + (1-\tilde{p})\ket{\psi'_1}\bra{\psi'_1}. 
 \end{align*} 
 By strong convexity of the trace distance \cite[Theorem 9.3]{NC00},  we have
 \begin{align*}
 \TD(\sigma_0,\tilde{\sigma}_1)\leq |p_0-\tilde{p}|+\tilde{p}\TD(\ket{\psi_0},\ket{\psi_1})+ (1-\tilde{p})\TD(\ket{\psi'_0},\ket{\psi'_1}) 
 \end{align*}
 and 
  \begin{align*}
 \TD(\tilde{\sigma}_1,\sigma_1)\leq |p_1-\tilde{p}|.
 \end{align*}
 By combining them, we have 
  \begin{align*}
 \TD(\sigma_0,\sigma_1)&\leq \TD(\sigma_0,\tilde{\sigma}_1)+ \TD(\tilde{\sigma}_1,\sigma_1)\\
 &\leq |p_0-\tilde{p}|+|p_1-\tilde{p}|+ \tilde{p}\TD(\ket{\psi_0},\ket{\psi_1})+ (1-\tilde{p})\TD(\ket{\psi'_0},\ket{\psi'_1})\\
 &=|p_0-\tilde{p}|+|p_1-\tilde{p}|+ \tilde{p}\sqrt{1-|\ipro{\psi_0}{\psi_1}|^2}+ (1-\tilde{p})\sqrt{1-|\ipro{\psi'_0}{\psi'_1}|^2}.
 \end{align*}
 \end{proof}
\fi

\revise{
\begin{lemma}\label{lem:state-close} \takashi{I introduced this lemma for simplifying the proof of Claim \ref{claim:hybtwo_to_real}. 
A similar lemma was used in CCLY.}
Let $\ket{\phi_b}=\ket{\phi_{b,0}}+\ket{\phi_{b,1}}$ be a normalized quantum state and $\Pi$ be a projector over a Hilbert space $\hil$ such that  
$\bra{\phi_{b,0}}\Pi \ket{\phi_{b',1}}=0$  for $b,b'\in \bit$, 
$\left\|\Pi \ket{\phi_{b,0}}\right\|^2\le \gamma$  
 for $b\in \bit$, and 
 $\|\ket{\phi_{1,1}}-\ket{\phi_{0,1}}\|
 \le \delta$ for some real numbers $\gamma,\delta$. 
Let $F$ be a quantum algorithm  that takes a state in $\hil$ as input, applies the projective measurement $(\Pi,I-\Pi)$, and outputs the resulting state if the measurement outcome is $0$ i.e., the state is projected onto $\Pi$, and otherwise outputs $\bot$.

Then it holds that 
\begin{align*}
    \TD(F(\ket{\phi_0}),F(\ket{\phi_1})) 
    \leq 
    \sqrt{4\gamma+2\delta}.
\end{align*}
\end{lemma}
\begin{proof}
If $\sqrt{4\gamma+2\delta}>1$, then the desired inequality trivially holds. Thus,  we assume $\sqrt{4\gamma+2\delta}\le 1$ in the rest of the proof. 
We consider an additional one-qubit register and define 
\begin{align*}
\ket{\psi_b}\defeq \sqrt{1-p_b}\ket{0}\ket{0^m} + \ket{1}\Pi\ket{\phi_b}
\end{align*}
for $b\in \bit$ where 
$m$ is the number of qubits in the register for $\ket{\phi_b}$ and 
\begin{align*}
p_b \defeq \|\Pi\ket{\phi_b}\|^2.
\end{align*}
Without loss of generality, we assume $p_0\geq p_1$. 
It suffices to prove 
\begin{align}
\label{eq:conclusion_trace_distance_pure}
   \TD(\ket{\psi_0}\bra{\psi_0},\ket{\psi_1}\bra{\psi_1})\leq  \sqrt{4\gamma+2\delta}
\end{align}
because a distinguisher that distinguishes $F(\ket{\phi_0})$ and $F(\ket{\phi_1})$ can be easily converted into a distinguisher that distinguishes 
$\ket{\psi_0}$ and $\ket{\psi_1}$ with the same advantage. 

We have 
\begin{align*}
    p_0&=\|\Pi\ket{\phi_0}\|^2\\
    &= \|\Pi\ket{\phi_{0,0}}\|^2+\|\Pi\ket{\phi_{0,1}}\|^2 \\
    &\leq \|\Pi\ket{\phi_{0,1}}\|^2+ \gamma 
\end{align*} 
where we used 
the assumption that $\bra{\phi_{0,0}}\Pi \ket{\phi_{0,1}}=0$  in the second equality and
the assumption that $\left\|\Pi\ket{\phi_{0,0}}\right\|^2\le \gamma$
in the final inequality. 

Thus, we have 
\begin{align}
\label{eq:useful_properties_new}
\|\Pi\ket{\phi_{0,1}}\|^2\geq p_0-\gamma.
\end{align}

We give a lower bound for $|\langle\psi_0|\psi_1\rangle|$.
By the definition of $\ket{\psi_b}$, 
\begin{align*}
|\langle\psi_0|\psi_1\rangle|
&=|\sqrt{(1-p_0)(1-p_1)}+ \bra{\phi_0}\Pi\ket{\phi_1}| \notag\\
&=|\sqrt{(1-p_0)(1-p_1)}+ \bra{\phi_{0,0}}\Pi\ket{\phi_{1,0}}+\bra{\phi_{0,1}}\Pi\ket{\phi_{1,1}}| \notag\\ 
&=|\sqrt{(1-p_0)(1-p_1)}+ \bra{\phi_{0,0}}\Pi\ket{\phi_{1,0}}+\bra{\phi_{0,1}}\Pi\ket{\phi_{0,1}}+\bra{\phi_{0,1}}\Pi\left(\ket{\phi_{1,1}}-\ket{\phi_{0,1}}\right)| \notag\\
&\geq (1-p_0)+\|\Pi\ket{\phi_{0,1}}\|^2- \left\|\Pi\ket{\phi_{0,0}}\right\|\cdot \|\Pi\ket{\phi_{1,0}}\|-\|\ket{\phi_{1,1}}-\ket{\phi_{0,1}}\|\\
&\geq 
(1-p_0)+(p_0-\gamma)-\gamma-\delta\\
&=1-(2\gamma+\delta)  
\end{align*}
where we used 
the assumption that $\bra{\phi_{b,0}}\Pi \ket{\phi_{b',1}}=0$  for $b,b'\in \bit$ in the first equality, 
the assumptions that $p_0\geq p_1$ and $\ket{\phi_{0,1}}=\ket{\phi_{1,1}}$ in the first inequality, 
and Eq. 
\ref{eq:useful_properties_new} and the assumptions that $\left\|\Pi\ket{\phi_{b,0}}\right\|^2\le \gamma$ for $b\in \bit$
and 
$\|\ket{\phi_{1,1}}-\ket{\phi_{0,1}}\|\le \delta$
 in the second inequality.
We note that $1-(2\gamma+\delta)>0$ since we assume $\sqrt{4\gamma+2\delta}\le 1$. 

Then, we have 
\begin{align*}
 \TD(\ket{\psi_0}\bra{\psi_0},\ket{\psi_1}\bra{\psi_1})
 &=\sqrt{1-|\langle\psi_0|\psi_1\rangle|^2}\\
 &\leq  \sqrt{4\gamma+2\delta}
\end{align*}
This completes the proof of Lemma \ref{lem:state-close}. 

\end{proof}
}

\revise{
The second lemma is the following variant of the gentle measurement lemma. 
\begin{lemma}\label{lem:gentle_measurement} \takashi{The previous version also used this lemma in the case of $\nu=\negl$ by just saying "the gentle measurement lemma".}
    Let $\ket{\psi}_\regX$ be a (not necessarily normalized) state over register $\regX$ and $U$ be a unitary over registers $(\regX,\regY,\regZ)$.  
    Suppose that a measurement of register $\regZ$ of $U\ket{\psi}_\regX\ket{0}_{\regY,\regZ}$ results in a deterministic value except for probability $\nu$, i.e., 
    there is $z^*$ such that 
    \begin{align*}
        \|(I-\ket{z^*}\bra{z^*})_{\regZ} U\ket{\psi}_\regX\ket{0}_{\regY,\regZ}\|^2\le \nu. 
    \end{align*}
    If we let $R:= (\ket{0}\bra{0})_{\regY,\regZ}U^\dagger (\ket{z^*}\bra{z^*})_{\regZ} U$, 
then we have 
\begin{align*}
    \|
    \ket{\psi}_\regX\ket{0}_{\regY,\regZ}-
   R \ket{\psi}_\regX\ket{0}_{\regY,\regZ}\|\le \sqrt{\nu}.
\end{align*}
\end{lemma}
\begin{proof}
Let $\Pi_{z^*}:=(\ket{z^*}\bra{z^*})_{\regZ}$.
We have 
\begin{align*}
\ket{\psi}_\regX\ket{0}_{\regY,\regZ}&=(\ket{0}\bra{0})_{\regY,\regZ}U^\dagger  U\ket{\psi}_\regX\ket{0}_{\regY,\regZ}\\
&= R\ket{\psi}_\regX\ket{0}_{\regY,\regZ}+(\ket{0}\bra{0})_{\regY,\regZ}U^\dagger (I-\Pi_{z^*}) U\ket{\psi}_\regX\ket{0}_{\regY,\regZ}.
\end{align*}
Thus, we have 
\begin{align*}
    \|
    \ket{\psi}_\regX\ket{0}_{\regY,\regZ}-
   R \ket{\psi}_\regX\ket{0}_{\regY,\regZ}\|
   &=\|(\ket{0}\bra{0})_{\regY,\regZ}U^\dagger (I-\Pi_{z^*}) U\ket{\psi}_\regX\ket{0}_{\regY,\regZ}\|\\ 
   &\le \|(I-\Pi_{z^*}) U\ket{\psi}_\regX\ket{0}_{\regY,\regZ}\|\le \sqrt{\nu}.
\end{align*}
\end{proof}
}
 
 The third lemma is about amplifying the success probability of a projection.
Very roughly speaking, the lemma states that for any projection $\Pi$ and a ``threshold"  $0<t<1$, we can decompose the Hilbert space into two subspaces $S_{<t}$ and $S_{\geq t}$ so that 
\begin{enumerate}
    \item $\Pi$ succeeds with probability $<t$ (resp. $\geq t$) in $S_{<t}$ (resp. $S_{\geq t}$).
    \item There is an efficient procedure $\Amp$ that runs in time $T=O( t^{-1})$ that maps any state $\ket{\psi}\in S_{\geq t}$ onto the span of $\Pi$ with probability almost $1$. We note that this does not necessarily map $\ket{\psi}$ to $\Pi\ket{\psi}$.
    \item Each subspace is invariant under $\Pi$ and $\Amp$.  
\end{enumerate}
The formal statement of our lemma is given below:
\begin{lemma}\label{lem:amplification}
Let $\Pi$ be a projection over a Hilbert space $\hil_\regX \ot \hil_\regY$. 
For any  \revise{noticeable function $t=t(\secpar)$,} 
there exists an orthogonal decomposition $(S_{<t}, S_{\geq t})$ of $\hil_\regX \ot \hil_\regY$ that satisfies the following:
\begin{enumerate}
\item(\textbf{$S_{<t}$ and $S_{\geq t}$ are invariant under $\Pi$ and $(\ket{0}\bra{0})_{\regY}$.}) \label{item:amplification_invariance_projection}
For any $\ket{\psi}_{\regX,\regY}\in S_{<t}$, we have 
\begin{align*}
\Pi \ket{\psi}_{\regX,\regY}\in S_{<t},~~~~~ (I_{\regX}\otimes(\ket{0}\bra{0})_{\regY})\ket{\psi}_{\regX,\regY}\in S_{<t}.
\end{align*}
Similarly, 
 for any $\ket{\psi}_{\regX,\regY}\in S_{\geq t}$, we have 
\begin{align*}
\Pi \ket{\psi}_{\regX,\regY}\in S_{\geq t},~~~~~ (I_{\regX}\otimes(\ket{0}\bra{0})_{\regY})\ket{\psi}_{\regX,\regY}\in S_{\geq t}.
\end{align*}
\item(\textbf{$\Pi$ succeeds with probability $<t$ and $\geq t$ in $S_{<t}$ and $S_{\geq t}$.})  \label{item:amplification_success_probability}
For any quantum state $\ket{\phi}_{\regX}\in \hil_{\regX}$ s.t.  $\ket{\phi}_{\regX}\ket{0}_{\regY}\in S_{<t}$ we have 
\begin{align*}
\|\Pi\ket{\phi}_{\regX}\ket{0}_{\regY}\|^2< t. 
\end{align*}
Similarly, for any quantum state $\ket{\phi}_{\regX}\in \hil_{\regX}$ s.t.  $\ket{\phi}_{\regX}\ket{0}_{\regY}\in S_{\geq t}$ we have 
\begin{align*}
\|\Pi\ket{\phi}_{\regX}\ket{0}_{\regY}\|^2\geq  t. 
\end{align*} 
 \item(\textbf{Unitary for amplification.})
For any $T\in \mathbb{N}$, there exists a unitary  $U_{\amp,T}$ over $\hil_\regX \ot \hil_\regY \ot \hil_\regB \ot \hil_\reganc$ where $\regB$ is a register to store a qubit and $\reganc$ is a register to store ancillary qubits with the following properties:
 \label{item:amplification}
 \begin{enumerate}
 \item(\textbf{Mapped onto $\Pi$ when $\regB$ contains $1$.})
For any quantum state $\ket{\psi}_{\regX,\regY}\in \hil_{\regX}\otimes \hil_{\regY}$, we can write
 \[
 \ket{1}\bra{1}_{\regB} U_{\amp,T}\ket{\psi}_{\regX,\regY}\ket{0}_{\regB,\reganc}=\sum_{anc}\ket{\psi'_{anc}}_{\regX,\regY}\ket{1}_{\regB}\ket{anc}_{\reganc}
 \]
 by using sub-normalized states $\ket{\psi'_{anc}}_{\regX,\regY}$ that are in the span of $\Pi$.
 \label{item:amplification_map_to_pi}
\item(\textbf{Amplification of success probability in $S_{\geq t}$.})  \label{item:amplification_amplification}
\revise{For any noticeable function $\nu=\nu(\secpar)$, there is $T=\poly(\secpar)$ such that}
for any quantum state $\ket{\phi}_{\regX}\in \hil_{\regX}$ s.t.  $\ket{\phi}_{\regX}\ket{0}_{\regY}\in S_{\geq t}$, we have\footnote{\revise{In the previous versions of this paper, we claimed that the lower bound is $1-(1-2t+2t^2)^{T-1}(1-t)$. 
However, this was based on a false claim that $(1-2t+2t^2)^{T-1}(1-t)$ is decreasing in $t\in[0,1]$ for any fixed $T\ge 1$.} 
} 
 \[
 \|\ket{1}\bra{1}_\regB U_{\amp,T}\ket{\phi}_{\regX}\ket{0}_{\regY}\ket{0}_{\regB,\reganc}\|^2
 \geq \revise{1-\nu}.
 \]
 \item(\textbf{$S_{<t}$ and $S_{\geq t}$ are invariant under $U_{\amp,T}$}). 
 For any  quantum state $\ket{\psi_{<t}}_{\regX,\regY}\in  S_{<t}$ and any $b,anc$,
we can write 
  \[
U_{\amp,T}\ket{\psi_{< t}}_{\regX,\regY}\ket{b,anc}_{\regB,\reganc}=\sum_{b',anc'}\ket{\psi'_{<t,b',anc'}}_{\regX,\regY}\ket{b',anc'}_{\regB,\reganc}
  \]
  by using sub-normalized states $\ket{\psi'_{<t, b',anc'}}_{\regX,\regY} \in S_{<t}$.
  
  Similarly, 
 for any  quantum state $\ket{\psi_{\geq t}}_{\regX,\regY}\in  S_{\geq t}$ and any $b,anc$,
we can write 
  \[
U_{\amp,T}\ket{\psi_{\geq t}}_{\regX,\regY}\ket{b, anc}_{\regB,\reganc}=\sum_{b', anc'}\ket{\psi'_{\geq t, b', anc'}}_{\regX,\regY}\ket{b', anc'}_{\regB,\reganc}
  \]
  by using sub-normalized states $\ket{\psi'_{\geq t, b',anc'}}_{\regX,\regY} \in S_{\geq t}$. 
  \label{item:amplification_invariance}
 \end{enumerate}
 \item(\textbf{Efficient Implementation of $U_{\amp,T}$}.) 
 There exists a QPT algorithm $\Amp$ (whose description is independent of $\Pi$) that takes as input  $1^T$, a description of quantum circuit that perform a measurement $(\Pi, I_{\regX,\regY}-\Pi)$, and a state $\ket{\psi}_{\regX,\regY,\regB,\reganc}$, and outputs $U_{\amp,T}\ket{\psi}_{\regX,\regY,\regB,\reganc}$.
 Moreover, $\Amp$ uses the measurement circuit for only implementing an oracle that apply unitary to write a measurement result in a designated register in $\reganc$, and it acts on $\regX$ only through the oracle access.
 \label{item:amplification_efficiency}
\end{enumerate}
\end{lemma}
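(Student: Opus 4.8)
The plan is to build the decomposition and the amplification unitary out of the two projections $\Pi$ and $Q := I_\regX \otimes (\ket{0}\bra{0})_\regY$ by invoking Jordan's lemma, exactly in the spirit of the $\QMA$-amplification argument of Nagaj, Wocjan, and Zhang. First I would apply Jordan's lemma to the pair of projectors $(\Pi, Q)$: there is an orthogonal decomposition of $\hil_\regX \otimes \hil_\regY$ into one- and two-dimensional subspaces $S_1,\dots,S_N$, each invariant under both $\Pi$ and $Q$, such that within each two-dimensional $S_j$ the two projectors act as rank-one projectors onto lines at some angle $\theta_j$, and the one-dimensional pieces lie in one of the four joint eigenspaces. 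In the two-dimensional blocks set $p_j := \cos^2\theta_j$; this is exactly $\|\Pi \ket{\phi_j}\|^2$ where $\ket{\phi_j}$ spans the image of $Q$ in $S_j$, i.e. $\ket{\phi_j}$ is of the form $\ket{\psi_j}_\regX\ket{0}_\regY$. For the one-dimensional blocks $p_j \in \{0,1\}$. Now define $S_{<t} := \bigoplus_{j : p_j < t} S_j$ and $S_{\geq t} := \bigoplus_{j: p_j \geq t} S_j$. Item \ref{item:amplification_invariance_projection} (invariance under $\Pi$ and $Q$) and Item \ref{item:amplification_success_probability} (the success probability of $\Pi$ on a state of the form $\ket{\phi}_\regX\ket{0}_\regY$ inside $S_{<t}$ is $<t$, and $\geq t$ inside $S_{\geq t}$) are then immediate from the block structure: any such $\ket{\phi}_\regX\ket{0}_\regY$ decomposes as a superposition of the $\ket{\phi_j}$ over blocks with $p_j$ on the correct side of $t$, and $\|\Pi\,\cdot\|^2$ is the corresponding convex combination of the $p_j$.

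Next I would construct the amplification unitary $U_{\amp,T}$. The natural choice is to use ancilla register $\reganc$ to record the outcomes of $T$ alternating measurements $\{\Pi, I-\Pi\}$ and $\{Q, I-Q\}$ coherently (a ``coherent sequential measurement'' / fixed-point-style iteration), together with a flag qubit $\regB$ that is set to $1$ as soon as a $\Pi$-outcome is obtained after which the iteration is frozen. Concretely $U_{\amp,T}$ is a product of $T$ controlled reflections, each implemented by a call to the measurement circuit for $(\Pi, I-\Pi)$ (writing its bit into a fresh ancilla in $\reganc$) interleaved with the fixed reflection through $\ket{0}_\regY$. Because every gate in this circuit is a unitary dilation of $\Pi$ or $Q$, and each $S_j$ is invariant under both, each $S_j \otimes \hil_\regB \otimes \hil_\reganc$ is invariant under $U_{\amp,T}$; in particular $S_{<t}$ and $S_{\geq t}$ are invariant, giving Item \ref{item:amplification_invariance}. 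Item \ref{item:amplification_map_to_pi} holds by construction of the flag: the iteration only sets $\regB$ to $1$ immediately after a measurement whose recorded outcome is "$\Pi$", so conditioning on $\regB = 1$ and on any fixed ancilla string the residual state on $\regX\regY$ lies in the image of $\Pi$. For Item \ref{item:amplification_amplification} I would do the standard two-dimensional trigonometric analysis inside a single block $S_j$ with $p_j \geq t$: starting from $\ket{\phi_j}$ (the $Q$-image state), one step of "measure $\Pi$, and if it missed measure $Q$ to reset" fails to flag with probability $(1-p_j)\cdot$(probability of returning near $\ket{\phi_j}$) — working out the $2\times2$ dynamics gives a per-step survival factor bounded by $1-2t+2t^2$, so after $T-1$ further steps the un-flagged mass is at most $(1-2t+2t^2)^{T-1}(1-t)$, which is the claimed bound. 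This computation is genuinely routine once the block picture is set up, so I would just state the recursion and its solution.

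Finally, for Item \ref{item:amplification_efficiency} I would observe that the circuit for $U_{\amp,T}$ described above is generic: it takes $1^T$ and a black box implementing the measurement $(\Pi, I-\Pi)$ as a unitary that writes the outcome bit into a designated ancilla qubit of $\reganc$, applies it $T$ times interleaved with the fixed (input-independent) reflection through $\ket{0}_\regY$ on register $\regY$ and the fixed controlled-freeze logic on $\regB$, and never otherwise touches $\regX$. Hence the algorithm $\Amp$ acts on $\regX$ only through the oracle, has size $\poly(T) \cdot |\text{oracle}|$, and its description does not depend on $\Pi$ — exactly what is required, and exactly what makes the downstream simulator black-box.

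The main obstacle I expect is not the existence of the decomposition (Jordan's lemma handles that cleanly) but the precise bookkeeping for the amplification unitary: making the flag-and-freeze mechanism compatible simultaneously with (i) the "conditioned on $\regB=1$, the state is in the image of $\Pi$" guarantee of Item \ref{item:amplification_map_to_pi}, (ii) the block-invariance of Item \ref{item:amplification_invariance} (the freeze must be a genuine unitary, not a measurement, so the ``already-flagged'' branches must be carried along coherently without leaking amplitude between blocks), and (iii) the quantitative per-step decay of Item \ref{item:amplification_amplification}. Getting a single explicit circuit that provably satisfies all three at once — rather than three slightly different idealized procedures — is where the care goes; the trace-distance / trigonometric estimates themselves are then mechanical.
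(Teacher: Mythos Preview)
Your proposal is correct and follows essentially the same route as the paper: apply Jordan's lemma to the pair $(\Pi,\, I_\regX \otimes (\ket{0}\bra{0})_\regY)$ to obtain the block decomposition, threshold the blocks at $t$ to define $S_{<t}$ and $S_{\geq t}$, and build $U_{\amp,T}$ as the purification of $T$ alternating coherent measurements of $\Pi$ and $Q$ with a success flag, analyzing the failure probability via the two-dimensional recursion in each block. The paper's proof (invoking \cite{NWZ09,CCY20}) does exactly this, with the same recursion producing the bound $(1-2t+2t^2)^{T-1}(1-t)$; your identification of the flag-and-freeze bookkeeping as the only delicate point matches where the paper spends its effort.
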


Since the above lemma can be proven by a similar usage of Jordan's lemma to existing works \cite{NWZ09,TCC:ChiChuYam20}, we give the proof in Appendix \ref{sec:proof_amplification}.

\section{Extraction Lemma}\label{sec:extraction}
In this section, we prove our main technical lemma, which we call the \emph{extraction lemma}.
Before giving a formal statement, we give an intuitive explanation.
Suppose that we have a two-stage quantum algorithm $\A=(\A_\commit,\A_\open)$ that works as follows.
$\A_\commit$ is given $\pp$ of a commitment scheme and generates a commitment $\com$, and passes a quantum state $\rho_\st$ in its internal register to $\A_\open$. $\A_\open$ is given the internal state $\rho_\st$, and outputs a message-randomness pair $(m,r)$ (which is not necessarily a valid opening to $\com$) along with a classical output $\out$, and let $\rho'_\st$ be its internal state after the execution.
We call a successive execution of $\A_\commit$ and $\A_\open$ a real experiment.
On the other hand, we consider an \emph{extraction experiment} where an ``extractor" $\ext$ runs on input $\rho_\st$ in between   $\A_\commit$ and $\A_\open$ to ``extract" a committed message $m_\ext$ while generating a simulated $\A$'s internal state $\rho_\ext$. Then we run $\A_\open$ with the internal state $\rho_\ext$ instead of $\rho_\st$ to complete the extraction experiment.
Roughly, the extraction lemma claims that 
if the commitment scheme is strong collapse-binding (resp. statistically binding),
then there exists an extractor $\ext$ such that we have $m=m_\ext$ with high probability and distributions of $(m,r,\out,\rho'_\st)$ in real and extraction experiments are computationally (resp. statistically) indistinguishable \emph{conditioned on that $(m,r)$ is a valid opening to $\com$}. 

The formal statement is given below.

\begin{definition}[Extraction Experiments] \label{def:extraction_experiments}
Let $\Com=(\Setup,\Commit)$ be a commitment scheme with message space $\calM$,  
randomness space $\calR$,  
commitment space $\COM$, 
and a public parameter space $\ppspace$. 
Let 
$\A=\{\A_{\commit,\secpar},\A_{\open,\secpar},\rho_{\secpar}\}_{\secpar\in \mathbb{N}}$ be a sequence of two-stage non-uniform QPT algorithms with the following syntax: 
\begin{description}
\item[$\A_{\commit,\secpar}(\pp;\rho_\secpar)\rightarrow (\com,\rho_\st)$:]
It takes as input  $\pp\in\ppspace$  
and an advice $\rho_\secpar$,  and outputs $\com\in \COM$ and a quantum state $\rho_\st$ in register $\regst$.
\item[$\A_{\open,\secpar}(\rho_\st)\rightarrow (m,r,\out,\rho'_\st)$:]
It takes as input  a quantum state $\rho_\st$ in register $\regst$,  and outputs $m\in \calM$, $r\in\calR$, a classical string $\out$, and a quantum state $\rho'_\st$ in register $\regst$.
\end{description}
Let $\ext$ be a QPT algorithm and $\delta$ be a function in $\secpar$.
Then we define following experiments:\\

\ifnum\submission=0
\begin{tabular}{l|l}
    \begin{minipage}[t]{0.40\textwidth}
\underline{$\Exp_\real[\Com,\A](\secpar)$}\\
$\pp\sample \Setup(1^\secpar)$,\\
$(\com, \rho_\st)\sample \A_{\commit,\secpar}(\pp;\rho_\secpar)$,\\
~\vspace{1.6mm}\\
$(m,r,\out,\rho'_\st)\sample \A_{\open,\secpar}(\rho_\st)$,\\
\emph{If} $\Commit(\pp,m;r)\neq \com$,\\
$~~~~$\emph{Output} $\bot$\\
\emph{Else Output} $(\pp,\com,m,r,\out,\rho'_\st)$.
    \end{minipage} 
        &~
    \begin{minipage}[t]{0.55\textwidth}
\underline{$\Exp_{\mathsf{ext}}[\Com,\A,\ext](\secpar,\delta)$}\\
$\pp\sample \Setup(1^\secpar)$,\\
$(\com, \rho_\st)\sample \A_{\commit,\secpar}(\pp;\rho_\secpar)$,\\
$(m_\ext , \rho_\ext )\sample \ext(1^\secpar,1^{\delta^{-1}},\pp,\com,\A_{\open,\secpar},\rho_\st)$,\\
$(m,r,\out,\rho'_\st)\sample \A_{\open,\secpar}(\rho_\ext),$\\
\emph{If} $\Commit(\pp,m;r)\neq \com \lor m \neq m_\ext$,\\
$~~~~$\emph{Output} $\bot$\\
\emph{Else Output} $(\pp,\com,m,r,\out,\rho'_\st)$.
    \end{minipage}
    \end{tabular}\\
\else 
{\footnotesize
\begin{tabular}{l|l}
    \begin{minipage}[t]{0.45\textwidth}
\underline{$\Exp_\real[\Com,\A](\secpar)$}\\
$\pp\sample \Setup(1^\secpar)$,\\
$(\com, \rho_\st)\sample \A_{\commit,\secpar}(\pp;\rho_\secpar)$,\\
~\vspace{1.6mm}\\
$(m,r,\out,\rho'_\st)\sample \A_{\open,\secpar}(\rho_\st)$,\\
\emph{If} $\Commit(\pp,m;r)\neq \com$,\\
$~~~~$\emph{Output} $\bot$\\
\emph{Else Output} $(\pp,\com,m,r,\out,\rho'_\st)$.
    \end{minipage} 
        &~
    \begin{minipage}[t]{0.55\textwidth}
\underline{$\Exp_{\mathsf{ext}}[\Com,\A,\ext](\secpar,\delta)$}\\
$\pp\sample \Setup(1^\secpar)$,\\
$(\com, \rho_\st)\sample \A_{\commit,\secpar}(\pp;\rho_\secpar)$,\\
$(m_\ext , \rho_\ext )\sample \ext(1^\secpar,1^{\delta^{-1}},\pp,\com,\A_{\open,\secpar},\rho_\st)$,\\
$(m,r,\out,\rho'_\st)\sample \A_{\open,\secpar}(\rho_\ext),$\\
\emph{If} $\Commit(\pp,m;r)\neq \com \lor m \neq m_\ext$,\\
$~~~~$\emph{Output} $\bot$\\
\emph{Else Output} $(\pp,\com,m,r,\out,\rho'_\st)$.
    \end{minipage}
    \end{tabular}\\
}
\fi
\end{definition}

\begin{lemma}[Extraction Lemma]\label{lem:extraction}
For any strong collapse-binding commitment scheme $\Com=(\Setup,\Commit)$, there exists a QPT algorithm $\ext$ such that 
for any noticeable
function $\delta(\secpar)$ and $\A=\{\A_{\commit,\secpar},\A_{\open,\secpar},\rho_{\secpar}\}_{\secpar\in \mathbb{N}}$ as in Definition \ref{def:extraction_experiments}, we have 

\ifnum\submission=1
 \begin{align*}
\{\Exp_\real[\Com,\A](\secpar)\}_{\secpar\in \mathbb{N}}
\compind_{\delta} 
 \{\Exp_{\mathsf{ext}}[\Com,\A,\ext](\secpar,\delta)\}_{\secpar\in \mathbb{N}}.
 \end{align*}
\else
 \begin{align}
\{\Exp_\real[\Com,\A](\secpar)\}_{\secpar\in \mathbb{N}}
\compind_{\delta} 
 \{\Exp_{\mathsf{ext}}[\Com,\A,\ext](\secpar,\delta)\}_{\secpar\in \mathbb{N}}.
 \label{eq:extraction_conclusion}
 \end{align}
\fi

If  $\Com$ is statistically binding instead of strong collapse-binding,  we have
\ifnum\submission=1
 \begin{align*}
\{\Exp_\real[\Com,\A](\secpar)\}_{\secpar\in \mathbb{N}}
\statind_{\delta} 
 \{\Exp_{\mathsf{ext}}[\Com,\A,\ext](\secpar,\delta)\}_{\secpar\in \mathbb{N}}. 
 \end{align*}
\else
 \begin{align}
\{\Exp_\real[\Com,\A](\secpar)\}_{\secpar\in \mathbb{N}}
\statind_{\delta} 
 \{\Exp_{\mathsf{ext}}[\Com,\A,\ext](\secpar,\delta)\}_{\secpar\in \mathbb{N}}.
  \label{eq:extraction_conclusion_statistical}
 \end{align}
\fi
Moreover, $\ext(1^\secpar,1^{\delta^{-1}},\pp,\com,\A_{\open,\secpar},\rho_\st)$ works in the following manner: It  uses $\A_{\open,\secpar}$ for only implementing oracles that perform unitary part of $\A_{\open,\secpar}$ and its inverse, and acts on $\regst$ only through black-box access to the oracles.
The second output $\rho_\ext$ of $\ext$ is the state in $\regst$ after the execution.  
We note that $\ext$ may directly act on   internal registers of $\A_{\open,\secpar}$ other than $\regst$.
\end{lemma}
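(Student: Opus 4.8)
The plan is to build the extractor $\ext$ by instantiating the amplification machinery of Lemma~\ref{lem:amplification} with a projection $\Pi$ that captures ``$\A_{\open,\secpar}$ produces a valid opening of $\com$'', and then argue via a subspace decomposition that the post-extraction state is close to the real one conditioned on success. Concretely, I would let $\regX = \regst$ (the register passed from $\A_{\commit}$ to $\A_{\open}$) and $\regY = \reganc$ an auxiliary register, and define $\Pi$ as the projection obtained by: applying the unitary part $U$ of $\A_{\open,\secpar}$, projecting onto the subspace of output registers $(\regM,\regR)$ whose content $(m,r)$ satisfies $\Commit(\pp,m;r) = \com$, and applying $U^\dagger$. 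Since $\ext$ only applies $U$ and $U^\dagger$ and a fixed projection onto classical output registers, it acts on $\regst$ only through black-box oracle access to $\A_{\open,\secpar}$, giving the ``Moreover'' clause for free; the efficiency and black-box properties of $\Amp$ from item~\ref{item:amplification_efficiency} of Lemma~\ref{lem:amplification} then make $\ext$ a genuine oracle-aided QPT algorithm.

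Next I would carry out the four-step extraction procedure sketched in the technical overview, with $t$ chosen as a suitable polynomial function of $\delta$ (so that $\Amp$ runs in time $\poly(\secpar, t^{-1}) = \poly(\secpar, \delta^{-1})$): run $\Amp$ for $T = \poly(\secpar, t^{-1})$ iterations to try to map onto the span of $\Pi$, abort if $\regB$ ends in $0$; then apply $U$, measure $(\regM,\regR)$ to get $(m_\ext, r_\ext)$, apply $U^\dagger$; then uncompute $\Amp$ and measure $\reganc$, aborting if the outcome is not $0$; output $(m_\ext, \rho_\ext)$. The analysis would then decompose the initial state $\rho_\st$ (purified if necessary) along the orthogonal subspaces $S_{<t}$ and $S_{\geq t}$ of Lemma~\ref{lem:amplification}: on the $S_{\geq t}$ component, item~\ref{item:amplification_amplification} guarantees amplification succeeds with probability $1 - \negl$ after $T = \poly(t^{-1})$ iterations, and because $(m_\ext, r_\ext)$ is then information-theoretically (or, for the collapse-binding case, computationally) pinned down by $\com$, this measurement does not disturb the state --- so the $S_{\geq t}$ part of $\rho_\ext$ matches that of $\rho_\st$ up to $\negl$. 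The $S_{<t}$ component may be damaged arbitrarily by $\ext$, but the key point is that in the \emph{real} experiment, conditioned on $\A_{\open}$ producing a valid opening (i.e.\ on the $\Pi$ measurement succeeding), at most a $t$-fraction (in average squared norm) of the surviving weight comes from $S_{<t}$, by item~\ref{item:amplification_success_probability}. Feeding both the (possibly garbled) extraction state and the real state through $\A_{\open,\secpar}$, I would apply Lemma~\ref{lem:bound_trace_distance} with the two ``good'' branches being the $S_{\geq t}$-originating parts (which agree up to $\negl$) and the two ``bad'' branches the $S_{<t}$-originating parts, choosing $\tilde p$ to be the relevant conditional weight; this yields a trace-distance bound of the form $O(t^{1/3}) + \negl$ on the conditional output distributions, and also bounds the change in the probability that the opening is valid by $O(t^{1/3})$. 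Setting $t = \Theta(\delta^3)$ gives the $\delta$-closeness in \eqref{eq:extraction_conclusion_statistical}.

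For the \textbf{strong collapse-binding} case \eqref{eq:extraction_conclusion}, the only change is that $(m_\ext, r_\ext)$ is no longer information-theoretically determined by $\com$; instead, measuring $(\regM,\regR)$ inside $\ext$ would in principle collapse the state. Here I would invoke strong collapse-binding to argue that this measurement is computationally undetectable: any distinguisher that noticed the disturbance on the $S_{\geq t}$ branch (equivalently, on the state in the span of $\Pi$) could be turned into a valid adversary against the strong-collapse-binding game --- it prepares the superposition over openings of $\com$ that the span of $\Pi$ provides, hands $(\regM,\regR)$ to the challenger, and tests whether a measurement occurred. A hybrid argument (real experiment; experiment where $\ext$'s internal $(\regM,\regR)$-measurement is ``switched off'' and replaced by coherent copying; experiment with the measurement on) interpolates between the collapse-binding-case analysis and the statistical-case analysis, with each hybrid step costing only $\negl$ against QPT distinguishers. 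I would also need to handle the event $m \neq m_\ext$ in $\Exp_\ext$: on the $S_{\geq t}$ branch this happens only with $\negl$ probability (again by binding, resp.\ collapse-binding, since $\A_{\open}$'s own output $(m,r)$ is then also pinned down), and the $S_{<t}$ branch is absorbed into the $O(t^{1/3})$ error term.

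\textbf{The main obstacle} I anticipate is the careful bookkeeping in the trace-distance estimate: one must track three kinds of ``bad'' events (extraction-abort due to failed amplification, extraction-abort due to nonzero $\reganc$ on uncompute, and the $m \neq m_\ext$ mismatch) and show each is either $\negl$ on the $S_{\geq t}$ part or contributes only to the $O(t^{1/3})$ term from the $S_{<t}$ part, all while the conditioning on ``valid opening'' changes the normalization. Getting the exponent $1/3$ right --- rather than, say, $1/2$ --- requires optimizing the split of $t$ between the three contributions and a somewhat delicate application of Lemma~\ref{lem:bound_trace_distance} (balancing a term linear in the $S_{<t}$-weight against a fidelity term scaling like its square root, which is where the $t^{1/3}$ emerges after optimizing over $\tilde p$). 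The collapse-binding reduction is conceptually the subtle part but technically routine once the right hybrid is set up; the real work is the quantitative state-closeness argument.
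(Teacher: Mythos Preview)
Your proposal is correct and matches the paper's approach essentially step for step: the same projection $\Pi = U_{\open}^\dagger \Pi_{\test} U_{\open}$, the same amplify--measure--uncompute extractor, the same hybrid removing the $(\regM,\regR)$-measurement via strong collapse-binding, and the same $S_{<t}/S_{\geq t}$ decomposition combined with Lemma~\ref{lem:bound_trace_distance} to get the $O(t^{1/3})$ bound with $t = \Theta(\delta^3)$. The only detail to flag is that in the statistically binding case the paper's extractor measures only $\regM$ (not $\regR$), since statistical binding pins down $m$ but not $r$; your phrasing that ``$(m_\ext,r_\ext)$ is information-theoretically pinned down'' overstates this, but the fix is exactly what you would discover when carrying out the argument.
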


\ifnum\submission=1 
\revise{The above lemma abstracts our technical core, which is extraction of the verifier's committed challenge without collapsing verifier's internal state too much. 
(One can think of $\A$ in the above lemma as the verifier and $\rho_{\st}$ and $\rho'_{\st}$ as verifier's internal states before and after opening the commitment, respectively, in our constant round $\epsilon$-zero-knowledge proofs/arguments.)
Since the intuition of the proof is already explained in Sec. \ref{sec:overview}, we defer the proof to 
\ifnum\cameraready=1
the full version. 
\else
Appendix \ref{sec:appendix_proof_extraction}.}
\fi
\else

\subsection{Proof of Extraction Lemma (Lemma \ref{lem:extraction})}\label{sec:proof_extraction}
\begin{proof}   (of Lemma \ref{lem:extraction})
We first prove the lemma for the case of strong collapse-binding commitments.
We explain how to modify the proof to prove the lemma for statistically binding commitments at the end of the proof.


In the proof, we need to consider sequences of many objects (e.g., unitary, Hilbert space, projection, etc.) indexed by $\secpar$.
For the sake of simplicity, we will often ignore the indexing by $\secpar$.

Let $U_{\open}$ be the unitary that represents $\A_{\open,\secpar}$. 
More precisely, we define $U_{\open}$ over a Hilbert space $\hil_\regA\defeq  \hil_\regst \ot  \hil_\regW \ot \hil_\regM \ot \hil_\regR \ot \hil_\regout$ so that $\A_{\open,\secpar}$ can be described as follows:
\begin{description}
\item[$\A_{\open,\secpar}(\rho_\st)$:] 
It stores a quantum state $\rho_\st$ in the register $\regst$ and initializes registers $\regW$, $\regM$, $\regR$, and $\regout$ to be $\ket{0}_{\regW,\regM,\regR,\regout}$.
Then it applies a unitary $U_{\open}$, measures registers $\regM$, $\regR$, and $\regout$ in the standard basis to obtain $m$, $r$, and $\out$, and outputs $m$, $r$, $\out$, and a quantum state $\rho'_\st$ in register $\regst$ tracing out register $\regW$. 
\end{description}

For any $\pp\in\ppspace$ and $\com\in\COM$,  we define a projection $\Pi^{\pp,\com}$ over $\hil_\regA$ as
\begin{align}\label{eq:definition_pi}
 \Pi^{\pp,\com} \defeq U_{\open}^\dagger \Pi^{\pp,\com}_{\test}  U_{\open}
\end{align}
where 
\begin{align*}
 \Pi^{\pp,\com}_{\test} \defeq  \left(\sum_{(m,r):\Commit(\pp,m;r)=\com}\left(\ket{m,r}\bra{m,r}\right)_{\regM,\regR}\right).
\end{align*}
\revise{We apply Lemma \ref{lem:amplification} for $\hil_\regX:=\hil_\regst$, $\hil_\regY:= \hil_\regW \ot \hil_\regM \ot \hil_\regR \ot \hil_\regout$, $\Pi:=\Pi^{\pp,\com}$, $t:=\delta^2/6$, and $T=\poly(\secpar)$ is chosen in such a way that Item \ref{item:amplification_amplification} of Lemma \ref{lem:amplification} holds for $\nu=t^2$.}
Then we have a decomposition $(S_{<t}^{\pp,\com}, S_{\geq t}^{\pp,\com})$ of $\hil_\regA$ and a unitary $U_{\amp,T}^{\pp,\com}$ over $\hil_\regX\ot \hil_\regY \ot \hil_\regB \ot \hil_\reganc$ that satisfies the requirements in Lemma \ref{lem:amplification} where we write $\pp,\com$ in superscript  to clarify the dependence on them.

Then we construct an extractor $\ext$ as follows:
\begin{description}
\item[$\ext(1^\secpar,1^{\delta^{-1}},\pp,\com,\A_{\open,\secpar},\rho_\st)$:]~
\begin{enumerate}
\item 
Store  a quantum state $\rho_\st$ in register $\regst$ and
 initialize registers $\regW$, $\regM$, $\regR$, $\regout$, $\regB$, and  $\reganc$ to be all $\ket{0}$.
\item Apply $U_{\amp,T}^{\pp,\com}$ by using the algorithm $\Amp$ in Item \ref{item:amplification_efficiency} of Lemma \ref{lem:amplification}.
\item Measure register $\regB$  and let $b$ be the outcome.
If $b=0$, then return $\bot$ and immediately halt.\footnote{More precisely, it returns $(m_\ext,\rho_\ext)\defeq (\bot,\ket{\bot}\bra{\bot})$. The same remark also applies to Step \ref{step:ext_final}}
Otherwise, proceed to the next step.   \label{step:ext_measureB}
\item 
Apply $U_{\open}$, measure registers $\regM$ and $\regR$ to obtain an outcome $(m_{\ext},r_{\ext})$, and apply $U_{\open}^\dagger$.
\label{step:ext_measure_rewind}
\item Apply ${U_{\amp,T}^{\pp,\com}}^\dagger$ by using the algorithm $\Amp$ in Item \ref{item:amplification_efficiency} of Lemma \ref{lem:amplification}.
\item Measure all registers $\regW$, $\regM$, $\regR$, $\regout$, $\regB$, and $\reganc$. 
If the outcome is not all $0$,  return $\bot$. 
Otherwise, output $m_\ext$ and the state $\rho_\ext$ in the register $\regst$.
\label{step:ext_final}
\end{enumerate}
\end{description}
We say that $\ext$ fails if it outputs $\bot$. 
We can see that $\ext$ runs in QPT and satisfies the syntactic requirements 
noting that $\Amp$ can be implemented by black-box access to $\A_{\open,\secpar}$ 
by the definition of $\Pi^{\pp,\com}$ and  Item \ref{item:amplification_efficiency} of Lemma \ref{lem:amplification}.

For $\ext$ as constructed above, we consider the following sequence of hybrid experiments:
\begin{description}
\item[$\Exp_{\mathsf{ext}}{[}\Com,\A,\ext{]}(\secpar,\delta)$:] This is the experiment as defined in Definition \ref{def:extraction_experiments}.
\item[$\Exp_{\hyb_1}{[}\Com,\A,\ext{]}(\secpar,\delta)$:] This experiment is identical to the previous one except that the experiment only checks if $\ext$ fails (i.e., $\ext$ returns $\bot$) instead of checking $m\neq m_\ext$ for the decision of outputting $\bot$.
 More concretely, we replace ``If $\Commit(\pp,m;r)\neq \com$ $\lor$  $m\neq m_\ext$'' with ``If $\Commit(\pp,m;r)\neq \com$ $\lor$ $\ext$ fails''.
 \item[$\Exp_{\hyb_2}{[}\Com,\A,\ext'{]}(\secpar,\delta)$:] This experiment is identical to the previous one except that instead of $\ext$, we use $\ext'$ that works similarly to $\ext$ except that Step \ref{step:ext_measure_rewind} is deleted  and   $m_\ext$ is omitted from the output.
 We note that the experiment is well-defined since $m_\ext$ is no longer used due to the modification made in the previous hybrid. 
 \item[$\Exp_\real{[}\Com,\A{]}(\secpar)$:] This is the experiment as defined in Definition \ref{def:extraction_experiments}.
\end{description}

We prove that output distributions of each neighboring experiments are close. 
\begin{myclaim}\label{claim:ext_to_hybone}
If $\Com$ is strong collapse-binding,  we have
\[
\{\Exp_{\mathsf{ext}}[\Com,\A,\ext](\secpar,\delta)\}_{\secpar\in\mathbb{N}}\compind  \{\Exp_{\hyb_1}[\Com,\A,\ext](\secpar,\delta)\}_{\secpar \in \mathbb{N}}.
\]
\end{myclaim}
\begin{proof}
For this part, we only need the computational binding property. 
(As remarked in Remark \ref{rem:collapse_to_computational_binding}, the strong collapse-binding property implies the computational binding property.)

We can see that the difference between these two experiments may happen only when $\Commit(\pp,m;r)= \com$, $\ext$ does not fail,  and $m\neq m_\ext$. 
We denote this event by $\Bad$.
We prove that $\Bad$ happens with a negligible probability.
When $\ext$ does not fail, we  have $b=1$ in Step \ref{step:ext_measureB} of $\ext$.
When this happens, at this point, the state in the registers $\regst$, $\regW$, $\regM$, $\regR$, $\regout$  is in the span  of $\Pi^{\pp,\com}$ by Item \ref{item:amplification_map_to_pi} of Lemma \ref{lem:amplification}.
When this happens, for $(m_\ext,r_\ext)$ obtained in Step \ref{step:ext_measure_rewind}, we have $\Commit(\pp,m_\ext;r_\ext)=\com$ by the definition of $\Pi^{\pp,\com}$. 
Therefore, when $\Bad$ happens, we have $\Commit(\pp,m;r)= \Commit(\pp,m_\ext;r_\ext)=\com$ and $m \neq m_\ext$.
Thus, if this happens with non-negligible probability, we can use $\A$ to break the computational binding of the commitment scheme.
Therefore, assuming the computational binding property of $\Com$ (which follows from the strong collapse-binding),  this happens with a negligible probability.
\end{proof}

\begin{myclaim}\label{claim:hybone_to_hybtwo}
If $\Com$ is strong collapse-binding,  we have
\[
\{\Exp_{\hyb_1}[\Com,\A,\ext](\secpar,\delta)\}_{\secpar \in \mathbb{N}}\compind  \{\Exp_{\hyb_2}[\Com,\A,\ext'](\secpar,\delta)\}_{\secpar \in \mathbb{N}}
\]
\end{myclaim}
\begin{proof}
As observed in the proof of Claim \ref{claim:ext_to_hybone}, if we have $b=1$ in Step \ref{step:ext_measureB} of $\ext$, at this point, the state in the registers $\regst$, $\regW$, $\regM$, $\regR$, $\regout$  is in the span  of $\Pi^{\pp,\com}$ by Item \ref{item:amplification_map_to_pi} of Lemma \ref{lem:amplification}.
This means that conditioned on that this happens, the registers $\regM$ and $\regR$ contain a valid opening $(m,r)$ for $\com$ under the public parameter $\pp$ (i.e., we have $\Commit(\pp,m;r)=\com$) by the definition of $\Pi^{\pp,\com}$.
Therefore, by the strong collapse-binding property of $\Com$, the experiment is computationally indistinguishable even if we omit the measurement of registers $\regM$ and $\regR$ in Step \ref{step:ext_measure_rewind}.
If we omit the measurement, then the Step \ref{step:ext_measure_rewind} just applies $U_{\open}$ followed by $U_{\open}^{\dagger}$, which is equivalent to doing nothing.
Therefore, the experiment is indistinguishable even if we delete the Step \ref{step:ext_measure_rewind}  of $\ext$ and the claim is proven.
\end{proof}

\begin{myclaim}\label{claim:hybtwo_to_real}
we have
\[
\{\Exp_{\hyb_2}[\Com,\A,\ext'](\secpar,\delta)\}_{\secpar \in \mathbb{N}}\statind_{\delta}  \{\Exp_{\real}[\Com,\A]\}(\secpar)\}_{\secpar \in \mathbb{N}}
\]
\end{myclaim}
We give a proof of Claim \ref{claim:hybtwo_to_real} in next subsection.
Combining Claim \ref{claim:ext_to_hybone}, \ref{claim:hybone_to_hybtwo}, and \ref{claim:hybtwo_to_real},
we obtain Eq. \ref{eq:extraction_conclusion}.   
 This completes the proof of Lemma \ref{lem:extraction} for the strong collapse-binding case.\\
 
 \noindent\textbf{Statistically binding case.}
 Here, we briefly explain how to modify the proof to prove Lemma \ref{lem:extraction} when $\Com$ is statistically binding instead of strong collapse-binding commitment.
The construction of $\ext$ is the same as the above strong collapse-binding  case except that it only measures the register $\regM$ and does not measure the register $\regR$ in Step \ref{step:ext_measure_rewind}.
Then, the rest of the proof is done similarly to the strong collapse-binding  case.
We explain how we can use statistical binding instead of strong collapse-binding.
In the above proof, we use strong collapse-binding property for bounding the difference between $\Exp_{\mathsf{ext}}$ and $\Exp_{\hyb_1}$ (Claim \ref{claim:ext_to_hybone}) and bounding the difference between $\Exp_{\hyb_1}$ and $\Exp_{\hyb_2}$ (Claim \ref{claim:hybone_to_hybtwo}).

For  bounding the difference between $\Exp_{\mathsf{ext}}$ and $\Exp_{\hyb_1}$, we observe that for $m_{\ext}$ extracted by $\ext$, there must exist $r_{\ext}$ such that $\Commit(\pp,m_{\ext};r_{\ext})=\com$ by the construction of $\ext$ (though $r_{\ext}$ is not measured by $\ext$ due to the modification explained above).
Therefore, if  $\Commit(\pp,m;r)=\com$, then we must have $m_{\ext}=m$ assuming that $\pp$ is binding.\footnote{See Defitinion \ref{def:binding_pp} for the definition of $\pp$ being binding.} 
Thus, replacing the check of $m_{\ext}=m$ with the check of if $\ext$ fails does not change the experiment unless $\pp$ is not binding, which happens with negligible probability as shown in Lemma \ref{lem:overwhelming_fraction_is_binding}.
 
 For bounding the difference between $\Exp_{\hyb_1}$ and $\Exp_{\hyb_2}$,  we observe that Step \ref{step:ext_measure_rewind} of $\ext$ (with the modification that it only measures the register $\regM$) does not collapse the state assuming that $\pp$ is binding. 
Therefore, we can prove the counterpart of Claim \ref{claim:hybone_to_hybtwo} based on statistical binding.

We note that an upper bound of the difference between $\Exp_{\hyb_2}$ and $\Exp_{\real}$ (Claim \ref{claim:hybtwo_to_real}) can be proven by the exactly same proof since we do not use security of commitment for this part as seen in next subsection.
\end{proof}

 \subsection{Proof of Claim \ref{claim:hybtwo_to_real}}
 In this subsection, we give a proof of Claim \ref{claim:hybtwo_to_real}, which was used in the proof of Lemma \ref{lem:extraction} in Sec. \ref{sec:proof_extraction}.
 \begin{proof}(of Claim \ref{claim:hybtwo_to_real})
We prove a stronger claim that two experiments $\Exp_{\real}[\Com,\A](\secpar)$ and $\Exp_{\hyb_2}[\Com,\A,\ext'](\secpar,\delta)$ are statistically close for any fixed $\pp$, $\com$ and $\rho_\st$. 
More precisely, for any fixed  $\pp$, $\com$ and $\rho_\st$, we consider the following two experiments:\\

\begin{tabular}{l|l}
    \begin{minipage}[t]{0.40\textwidth}
\underline{$\widetilde{\Exp}_\real^{\pp,\com}[\Com,\A](\secpar,\rho_\st)$}\\
~\vspace{3mm}\\
$(m,r,\out,\rho'_\st)\sample \A_{\open,\secpar}(\rho_\st)$,\\
\emph{If} $\Commit(\pp,m;r)\neq \com$,\\
$~~~~$\emph{Output} $\bot$\\
\emph{Else Output} $(m,r,\out,\rho'_\st)$.
    \end{minipage} 
        &~
    \begin{minipage}[t]{0.55\textwidth}
\underline{$\widetilde{\Exp}_{\hyb_2}^{\pp,\com}[\Com,\A,\ext'](\secpar,\rho_\st,\delta)$}\\
$\rho_\ext \sample \ext'(1^\secpar,1^{\delta^{-1}},\pp,\com,\A_{\open,\secpar},\rho_\st)$,\\
$(m,r,\out,\rho'_\st)\sample \A_{\open,\secpar}(\rho_\ext)$,\\
\emph{If} $\Commit(\pp,m;r)\neq \com$ \emph{or} $\ext'$ \emph{fails}\\
$~~~~$\emph{Output} $\bot$\\
\emph{Else Output} $(m,r,\out,\rho'_\st)$.
    \end{minipage}
    \end{tabular}\\
Recall that $\ext'$ is an algorithm that works similarly to $\ext$ except that it deletes Step \ref{step:ext_measure_rewind} and does not output $m_\ext$ as introduced in $\Exp_{\hyb_2}$.
We prove that for any fixed  $\pp$, $\com$, and $\rho_\st$, we have 
\begin{align}
\{\widetilde{\Exp}_\real^{\pp,\com}[\Com,\A](\secpar,\rho_\st)\}_{\secpar \in \mathbb{N}}\statind_{\delta}  
\{\widetilde{\Exp}_{\hyb_2}^{\pp,\com}[\Com,\A,\ext'](\secpar,\rho_\st,\delta)\}_{\secpar \in \mathbb{N}}.
\label{eq:hybtwo_to_real}
\end{align}
It is easy to see that if Eq. \ref{eq:hybtwo_to_real} holds for all  $\pp$, $\com$ and $\rho_\st$, then Claim \ref{claim:hybtwo_to_real} follows by averaging over  $\pp$, $\com$ and $\rho_\st$.
Moreover, since any mixed state can be understood as a probability distribution over pure states, it suffices to prove Eq. \ref{eq:hybtwo_to_real} assuming $\rho_\st$ is a pure state. 
Since we assume it is a pure state, we denote it by $\ket{\phi_{\st}}_{\regst}$ instead of $\rho_\st$.
Since we fix $\pp$ and $\com$, we omit to write $\pp,\com$ in superscripts of $\Pi$, $\Pi_{\test}$, $S_{<t}$, $S_{\geq t}$, and $U_{\amp,T}$ for notational simplicity.

\revise{In the following, we denote by $\regother$ to mean registers $(\regW,\regM,\regR,\regout,\regB,\reganc)$.
Let $R$ be an operator defined as follows:
\begin{align*}
R:=(\ket{0}\bra{0})_{\regother}U_{\amp,T}^{\dagger}(\ket{1}\bra{1})_{\regB}U_{\amp,T}.
\end{align*}
Let $\Pi_{<t}$ and $\Pi_{\ge t}$ be projections onto $S_{<t}$ and $S_{\ge t}$, respectively.  
To apply Lemma \ref{lem:state-close}, 
we define states $\ket{\phi_0}=\ket{\phi_{0,0}}+\ket{\phi_{0,1}}$ and $\ket{\phi_1}=\ket{\phi_{1,0}}+\ket{\phi_{1,1}}$ over $(\regD,\regst,\regother)$ 
where $\regD$ is an additional one-qubit register as follows:   
\begin{align*}
&\ket{\phi_{0}}:= \ket{1}_{\regD}\ket{\phi_\st}_{\regst}\ket{0}_{\regother},\\
&\ket{\phi_{0,0}}:= \ket{1}_{\regD}\Pi_{< t}\ket{\phi_\st}_{\regst}\ket{0}_{\regother},\\
&\ket{\phi_{0,1}}:= \ket{1}_{\regD}\Pi_{\ge t}\ket{\phi_\st}_{\regst}\ket{0}_{\regother},\\
&\ket{\phi_{1}}:=\ket{1}_{\regD} R \ket{\phi_\st}_{\regst}\ket{0}_{\regother}+\alpha \ket{0}_{\regD}\ket{0}_{\regst}\ket{0}_{\regother},\\
&\ket{\phi_{1,0}}:=\ket{1}_{\regD} R \Pi_{< t}  \ket{\phi_\st}_{\regst}\ket{0}_{\regother}+\alpha \ket{0}_{\regD}\ket{0}_{\regst}\ket{0}_{\regother},\\
&\ket{\phi_{1,1}}:= \ket{1}_{\regD}R \Pi_{\ge t} \ket{\phi_\st}_{\regst}\ket{0}_{\regother}
\end{align*}
for $\alpha:=\sqrt{1-\|R \ket{\phi_\st}_{\regst}\ket{0}_{\regother}\|^2}$ (so that $\ket{\phi_{1}}$ is a normalized state). 
Let $\Pi'$ be a projector over $(\regD,\regst,\regother)$ defined as 
\begin{align*}
\Pi':= (\ket{1}\bra{1})_{\regD} \ot \Pi_{\regst,\regW,\regM,\regR,\regout}\ot I_{\regB,\reganc}
\end{align*}
where $\Pi$ is as defined in Eq. \ref{eq:definition_pi}. (Note that we are omitting the superscript $\pp,\com$ here.)
Let $F$ be the quantum algorithm as in  Lemma \ref{lem:state-close} with respect to the projection $\Pi'$ as defined above. That is, $F$ is the algorithm that takes a state over 
$(\regD,\regst,\regother)$, 
applies the projective measurement $(\Pi',I-\Pi')$, and outputs the resulting state if the measurement outcome is $0$, i.e., the state is projected onto $\Pi'$ and otherwise outputs $\bot$.  
Then we have 
\begin{align}\label{eq:TD_bound}
    \TD(\widetilde{\Exp}_\real^{\pp,\com}[\Com,\A](\secpar,\ket{\phi_\st}_\regst),\widetilde{\Exp}_{\hyb_2}^{\pp,\com}[\Com,\A,\ext'](\secpar,\ket{\phi_\st}_\regst,\delta))\le \TD(F(\ket{\phi_0}),F(\ket{\phi_1}))
\end{align}
Indeed, this can be seen by the following observation. Let $G$ be a quantum algorithm that works as follows:
If its input is $\bot$, then $G$ outputs $\bot$. Otherwise, $G$ parses the input as a state over $(\regD,\regst,\regother)$, applies $U_{\open}$, measures registers $(\regM,\regR,\regout)$, and outputs registers $(\regM,\regR,\regout,\regst)$ tracing out all the other registers.   
Noting that $\Pi=U^\dagger_\open \Pi_\test U_\open$, 
 it is easy to see that  $G$ maps $F(\ket{\phi_0})$ and $F(\ket{\phi_1})$ to $\widetilde{\Exp}_\real^{\pp,\com}[\Com,\A](\secpar,\ket{\phi_\st}_\regst)$ and $\widetilde{\Exp}_{\hyb_2}^{\pp,\com}[\Com,\A,\ext'](\secpar,\ket{\phi_\st}_\regst,\delta))$, respectively. Thus Eq. \ref{eq:TD_bound} follows from monotonicity of trace distance. 
Thus, it suffices to prove 
\begin{align*}
\TD(F(\ket{\phi_0}),F(\ket{\phi_1}))\le \delta. 
\end{align*}
To show this by using Lemma \ref{lem:state-close}, we prove the following claim. 
\begin{myclaim}\label{claim:condition_check}
The following hold: 
\begin{enumerate}
    \item $\bra{\phi_{b,0}} \Pi' \ket{\phi_{b',1}}=0$ for $b,b'\in \bit$.
    \item $\|\Pi'\ket{\phi_{b,0}}\|^2\le t$ for $b\in \bit$. 
    \item $\|\ket{\phi_{1,1}}-\ket{\phi_{0,1}}\|\le \sqrt{\nu} $.
\end{enumerate}
\end{myclaim}
\begin{proof}[Proof of Claim \ref{claim:condition_check}]
The first item immediately follows from the definition. 
The second item for $b=0$ immediately follows from  Item \ref{item:amplification_success_probability} of Lemma \ref{lem:amplification}. 
To see the second item for the case of $b=1$, we observe that $R \Pi_{< t}  \ket{\phi_\st}_{\regst}\ket{0}_{\regother}$ is in the intersection of the spans of  $\Pi_{< t}\ot I_{\regB,\reganc}$ and $(\ket{0}\bra{0})_{\regother}$ by  Item \ref{item:amplification_invariance_projection} and \ref{item:amplification_invariance} of Lemma \ref{lem:amplification} and the definition of $R$. 
Then the desired inequality follows from Item \ref{item:amplification_success_probability} of Lemma \ref{lem:amplification} similarly to the case of $b=0$. 
To see the third item, we observe that 
Item
\ref{item:amplification_amplification} of Lemma \ref{lem:amplification} implies 
$$
\|(\ket{1}\bra{1})_{\regB}U_{\amp,T}\Pi_{\ge t}\ket{\phi_\st}_{\regst}\ket{0}_{\regother}\|^2\le \nu.
$$
Thus, Lemma \ref{lem:gentle_measurement} implies
\begin{align*}
   \|\Pi_{\ge t}\ket{\phi_\st}_{\regst}\ket{0}_{\regother}-R\Pi_{\ge t}\ket{\phi_\st}_{\regst}\ket{0}_{\regother}\|\le \sqrt{\nu}. 
\end{align*}
This immediately implies the third item.
\end{proof}
By Claim \ref{claim:condition_check}, we can apply Lemma \ref{lem:state-close} to obtain
\begin{align*}
\TD(F(\ket{\phi_0}),F(\ket{\phi_1}))\le 
\sqrt{4t+2\sqrt{\nu}}
=
\delta 
\end{align*}
where the final inequality follows from
$t=\delta^2/6$ and $\nu=t^2$. 
This completes the proof of Claim \ref{claim:hybtwo_to_real}.
}
\end{proof}

\if0
Let $p_\real$ and $p_{\hyb_2}$ be the probabilities that  $\widetilde{\Exp}_\real^{\pp,\com}[\Com,\A](\secpar,\ket{\phi_\st}_\regst)$ and $\widetilde{\Exp}_{\hyb_2}^{\pp,\com}[\Com,\A,\ext'](\secpar,\ket{\phi_\st}_\regst,\delta)$ do not return $\bot$ respectively.
By definitions of  $\A_{\open,\secpar}$ and  $\ext'$, we can express them as follows:
\begin{align*}
&p_\real= 
\left\|\Pi_{\test}U_{\open} \ket{\phi_\st}_{\regst}\ket{0}_{\regother}\right\|^2=
\|\Pi \ket{\phi_\st}_{\regst}\ket{0}_{\regother}\|^2,\\
&p_{\hyb_2}= 
\|\Pi_{\test} U_{\open} R \ket{\phi_\st}_{\regst}\ket{0}_{\regother}\|^2
=\|\Pi R \ket{\phi_\st}_{\regst}\ket{0}_{\regother}\|^2
\end{align*} 
where we define an operator $R$ by 
\begin{align*}
R:=(\ket{0}\bra{0})_{\regother}U_{\amp,T}^{\dagger}(\ket{1}\bra{1})_{\regB}U_{\amp,T}.
\end{align*}
Moreover, we can see that  outputs of   $\widetilde{\Exp}_\real^{\pp,\com}[\Com,\A](\secpar,\ket{\phi_\st}_\regst)$ and  $\widetilde{\Exp}_{\hyb_2}^{\pp,\com}[\Com,\A,\ext'](\secpar,\ket{\phi_\st}_\regst,\delta)$ 
conditioned on that they do not return $\bot$ 
can be seen as mixed states obtained by measuring $(\regM,\regR,\regout)$ 
of  $\Pi_\test U_{\open} \ket{\phi_\st}_{\regst}\ket{0}_{\regother}$ and $\Pi_\test U_{\open} R\ket{\phi_\st}_{\regst}\ket{0}_{\regother}$, respectively, and then tracing out $(\regW,\regB,\reganc)$ (ignoring normalization factor).
By noting that we have $\Pi_{\test}U_\open=U_\open \Pi$ by the definition of $\Pi$,  if we define two mixed states $\sigma_\real$ and $\sigma_{\hyb_2}$ as follows,
\begin{align*}
&\sigma_{\real}:=\Pi\ket{\phi_\st}_{\regst}\ket{0}_{\regother}\bra{\phi_\st}_{\regst}\bra{0}_{\regother}\Pi+ (1-p_\real) \ket{\bot}_{\regst,\regother}\bra{\bot}_{\regst,\regother},\\
&\sigma_{\hyb_2}:=\Pi R\ket{\phi_\st}_{\regst}\ket{0}_{\regother}\bra{\phi_\st}_{\regst}\bra{0}_{\regother}R^{\dagger}\Pi+ (1-p_{\hyb_2})\ket{\bot}_{\regst,\regother}\bra{\bot}_{\regst,\regother},
\end{align*}
then  we have 
\begin{align*}
&F(\sigma_{\real})=\widetilde{\Exp}_\real^{\pp,\com}[\Com,\A](\secpar,\ket{\phi_\st}_\regst),\\
&F(\sigma_{\hyb_2})=\widetilde{\Exp}_{\hyb_2}^{\pp,\com}[\Com,\A,\ext'](\secpar,\ket{\phi_\st}_\regst,\delta)
\end{align*}
where $F$ is a quantum algorithm that works as follows:
\begin{description}
\item[$F(\sigma)$:]
Given a quantum state $\sigma$ over registers $\regst$ and $\regother$, 
perform a projective measurement to decide if the values in all registers are $\bot$, and output $\bot$ if it is the case.
Otherwise, apply $U_{\open}$, measure registers $\regM$, $\regR$, $\regout$, and output states in registers $(\regM,\regR,\regout,\regst)$  tracing out all the other registers.
\end{description}

Since $F$ can be expressed as a trace-preserving quantum operation, by monotonicity of trace distance \cite[Theorem 9.2]{NC00}, 
an advantage to distinguish these two experiments is at most $\TD(\sigma_\real,\sigma_{\hyb_2})$.
Therefore, it suffices to prove 
\begin{align}
\TD(\sigma_\real,\sigma_{\hyb_2})\leq \delta + \negl(\secpar). \label{eq:bound_trace_distance}
\end{align}

For proving this, we first prove the following claim:
\begin{myclaim}\label{claim:real_and_hyb}
There exist sub-normalized states 
$\ket{\psi_{<t}^{\real}}_{\regA}\in S_{<t}$, $\ket{\psi_{\geq t}^{\real}}_{\regA}\in S_{\geq t}$, $\ket{\psi_{<t}^{\hyb_2}}_{\regA}\in S_{< t}$, and $\ket{\psi_{\geq t}^{\hyb_2}}_{\regA}\in S_{\geq t}$
that satisfy the following:\footnote{Recall that $\regA$ is a shorthand for registers $(\regst,\regW,\regM,\regR,\regout)$ where $\regst$ and $(\regW,\regM,\regR,\regout)$ correspond to $\regX$ and $\regY$ in Lemma \ref{lem:amplification}, respectively.} 
\begin{enumerate}
\item(\textbf{Decomposition of $\Pi\ket{\phi_\st}_{\regst}\ket{0}_{\regother}$}.)
\label{item:decompose_real}
We have 
\begin{align*}
\Pi\ket{\phi_\st}_{\regst}\ket{0}_{\regother}=\ket{\psi_{<t}^{\real}}_{\regA}\ket{0}_{\regB,\reganc}+\ket{\psi_{\geq t}^{\real}}_{\regA}\ket{0}_{\regB,\reganc}.
\end{align*}
Moreover,  $\ket{\psi_{<t}^{\real}}_{\regA}$ and $\ket{\psi_{\geq t}^{\real}}_{\regA}$ are orthogonal, and thus we have
\begin{align*}
p_{\real}=\|\Pi\ket{\phi_\st}_{\regst}\ket{0}_{\regother}\|^2=\|\ket{\psi_{<t}^{\real}}_{\regA}\|^2+\|\ket{\psi_{\geq t}^{\real}}_{\regA}\|^2.
\end{align*}
\item (\textbf{Decomposition of $\Pi R \ket{\phi_\st}_{\regst}\ket{0}_{\regother}$}.) \label{item:decompose_hyb}
We have 
\begin{align*}
\Pi R\ket{\phi_\st}_{\regst}\ket{0}_{\regother}=\ket{\psi_{<t}^{\hyb_2}}_{\regA}\ket{0}_{\regB,\reganc}+\ket{\psi_{\geq t}^{\hyb_2}}_{\regA}\ket{0}_{\regB,\reganc}.
\end{align*}
Moreover, $\ket{\psi_{<t}^{\hyb_2}}_{\regA}$ and $\ket{\psi_{\geq t}^{\hyb_2}}_{\regA}$ are orthogonal, and thus we have
\begin{align*}
p_{\hyb_2}=\|\Pi R\ket{\phi_\st}_{\regst}\ket{0}_{\regother}\|^2=\|\ket{\psi_{<t}^{\hyb_2}}_{\regA}\|^2+\|\ket{\psi_{\geq t}^{\hyb_2}}_{\regA}\|^2.
\end{align*}
\item(\textbf{$\ket{\psi_{<t}^{\real}}_{\regA}$ and $\ket{\psi_{<t}^{\hyb_2}}_{\regA}$ are short}.)
 We have 
$\|\ket{\psi_{<t}^{\real}}_{\regA}\|^2 < t$ and  $\|\ket{\psi_{<t}^{\hyb_2}}_{\regA}\|^2 < t$.
\label{item:short}
\item(\textbf{$\ket{\psi_{\geq t}^{\real}}_{\regA}$ and $\ket{\psi_{\geq t}^{\hyb_2}}_{\regA}$ are close}.) \label{item:close}
\revise{We have 
\begin{align*} \left|1- \frac{\|\ket{\psi_{\geq t}^{\hyb_2}}_{\regA}\|}{\|\ket{\psi_{\geq t}^{\real}}_{\regA}\|}\right| \le ??
\end{align*}
and 
\begin{align*}
\left|1- \frac{|\ipro{\psi_{\geq t}^{\real}}{\psi_{\geq t}^{\hyb_2}}|}{\|\ket{\psi_{\geq t}^{\real}}_{\regA}\|\cdot \|\ket{\psi_{\geq t}^{\hyb_2}}_{\regA}\|}\right| \le ??.
\end{align*}}
\end{enumerate}
\end{myclaim}
\begin{proof}[Proof of Claim \ref{claim:real_and_hyb}]
Let $\widetilde{S}_{<t}$ (resp. $\widetilde{S}_{\geq t}$) be the subspace of $\hil_\regst$ consisting of states $\ket{\phi}_{\regst}$ such that $\ket{\phi}_{\regst}\ket{0}_{\regW,\regM,\regR,\regout}\in S_{<t}$ (resp. $\ket{\phi}_{\regst}\ket{0}_{\regW,\regM,\regR,\regout}\in S_{\geq t}$).
Then,  we can write 
\begin{align}
\ket{\phi_\st}_{\regst}=\ket{\phi_{<t}}_{\regst}+\ket{\phi_{\geq t}}_{\regst} \label{eq:decompose_inp}
\end{align} 
by using sub-normalized states $\ket{\phi_{<t}}_{\regst}\in \widetilde{S}_{<t}$ and $\ket{\phi_{\geq t}}_{\regst}\in \widetilde{S}_{\geq t}$.
Then we define states $\ket{\psi_{<t}^{\real}}_{\regA}$, $\ket{\psi_{\geq t}^{\real}}_{\regA}$, $\ket{\psi_{<t}^{\hyb_2}}_{\regA}$, and $\ket{\psi_{\geq t}^{\hyb_2}}_{\regA}$ by
\begin{align}
\ket{\psi_{<t}^{\real}}_{\regA}:= \Pi \ket{\phi_{<t}}_{\regst}\ket{0}_{\regW,\regM,\regR,\regout}, ~~~ \ket{\psi_{\geq t}^{\real}}_{\regA}:= \Pi \ket{\phi_{\geq t}}_{\regst}\ket{0}_{\regW,\regM,\regR,\regout}  \label{eq:decompose_real}
\end{align}
and 
\begin{align}
\ket{\psi_{<t}^{\hyb_2}}_{\regA}\ket{0}_{\regB,\reganc}:= \Pi R \ket{\phi_{<t}}_{\regst}\ket{0}_{\regother}, ~~~ \ket{\psi_{\geq t}^{\hyb_2}}_{\regA}\ket{0}_{\regB,\reganc}:= \Pi R \ket{\phi_{\geq t}}_{\regst}\ket{0}_{\regother}. \label{eq:decompose_hyb}
\end{align}
We note that $\ket{\psi_{<t}^{\hyb_2}}_{\regA}$ and $\ket{\psi_{\geq t}^{\hyb_2}}_{\regA}$  are well-defined since values in registers $\regB$ and $\reganc$ of a state in the span of $R$ should be $0$ by the definition of $R$ and $\Pi$ does not act on registers $\regB$ and $\reganc$.

Since subspaces $S_{<t}$ and $S_{\geq t}$ are invariant under $\Pi$, $(\ket{0}\bra{0})_{\regW,\regM,\regR,\regout}$ and $U_{\amp,T}$ by  Item \ref{item:amplification_invariance_projection} and \ref{item:amplification_invariance} of Lemma \ref{lem:amplification}, we can see that we have 
$\ket{\psi_{<t}^{\real}}_{\regA}\in S_{<t}$, $\ket{\psi_{\geq t}^{\real}}_{\regA}\in S_{\geq t}$, $\ket{\psi_{<t}^{\hyb_2}}_{\regA}\in S_{< t}$, and $\ket{\psi_{\geq t}^{\hyb_2}}_{\regA}\in S_{\geq t}$.
Then we prove each statement of Claim \ref{claim:real_and_hyb} in the following:\\
\noindent\textbf{Proof of Item \ref{item:decompose_real} and \ref{item:decompose_hyb}.}
The former parts of Item \ref{item:decompose_real} and \ref{item:decompose_hyb}  are clear from Eq. \ref{eq:decompose_inp}, \ref{eq:decompose_real}, and \ref{eq:decompose_hyb}.
The latter parts follow from $\ket{\psi_{<t}^{\real}}_{\regA}\in S_{<t}$, $\ket{\psi_{\geq t}^{\real}}_{\regA}\in S_{\geq t}$, $\ket{\psi_{<t}^{\hyb_2}}_{\regA}\in S_{< t}$, and $\ket{\psi_{\geq t}^{\hyb_2}}_{\regA}\in S_{\geq t}$ and that $S_{<t}$ and $S_{\geq t}$ are orthogonal.

\noindent\textbf{Proof of Item \ref{item:short}.}
$\|\ket{\psi_{<t}^{\real}}_{\regA}\|^2 < t$  immediately follows from Eq. \ref{eq:decompose_real} and the Item \ref{item:amplification_success_probability} of Lemma \ref{lem:amplification}  since $\ket{\phi_{<t}}_{\regst}\ket{0}_{\regW,\regM,\regR,\regout}\in S_{<t}$.
By Item \ref{item:amplification_invariance_projection} and \ref{item:amplification_invariance} of Lemma \ref{lem:amplification}  and the definition of $R$, we can write $R \ket{\phi_{<t}}_{\regst}\ket{0}_{\regother}=\ket{\phi'_{<t}}_{\regst}\ket{0}_{\regother}$
by using a sub-normalized state $\ket{\phi'_{<t}}_{\regst}\in \widetilde{S}_{<t}$.
Then we have $\|\ket{\psi_{<t}^{\hyb_2}}_{\regA}\|^2 < t$ by Eq. \ref{eq:decompose_hyb} and
Item \ref{item:amplification_success_probability} of Lemma \ref{lem:amplification}.

\noindent\textbf{Proof of Item \ref{item:close}.}
By Item \ref{item:amplification_amplification} of  Lemma \ref{lem:amplification} and our choice of $T$, if we measure the register $\regB$ of $U_{\amp,T}\ket{\phi_{\geq t}}_{\regst}\ket{0}_{\regother}$ (after normalization), the outcome is $1$ with probability \revise{at least $1-\nu$.} 
Then we can write 
\begin{align}
R\ket{\phi_{\geq t}}_{\regst}\ket{0}_{\regother}=\alpha \ket{\phi_{\geq t}}_{\regst}\ket{0}_{\regother} +  \ket{\phi^\bot}_{\regst}\ket{0}_{\regother}  \label{eq:decompose_after_R}
\end{align}
by using $\alpha\in \mathbb{C}$ such that  \revise{$|\alpha|\ge (1-\mu)^{1/2}$} and a sub-normalized state $\ket{\phi^\bot}_{\regst}$ that is orthogonal to $\ket{\phi_{\geq t}}_{\regst}$ such that \revise{$\|\ket{\phi^\bot}_{\regst}\|\leq \mu^{1/2}$}.
\revise{Then we have 
\begin{align*}
\|\ket{\psi^{\hyb_2}_{\geq t}}_{\regA}\|&=\|\Pi R\ket{\phi_{\geq t}}_{\regst}\ket{0}_{\regother}\|\\
&=\|\alpha \Pi\ket{\phi_{\geq t}}_{\regst}\ket{0}_{\regother} + \Pi \ket{\phi^\bot}_{\regst}\ket{0}_{\regother}\|\\
&\le  \|\Pi \ket{\phi_{\geq t}}_{\regst}\ket{0}_{\regother}\|+\mu^{1/2}.
\end{align*}
and 
\begin{align*}
\|\ket{\psi^{\hyb_2}_{\geq t}}_{\regA}\|&=\|\Pi R\ket{\phi_{\geq t}}_{\regst}\ket{0}_{\regother}\|\\
&=\|\alpha \Pi\ket{\phi_{\geq t}}_{\regst}\ket{0}_{\regother} + \Pi \ket{\phi^\bot}_{\regst}\ket{0}_{\regother}\|\\
&\ge  (1-\mu)^{1/2}\|\Pi \ket{\phi_{\geq t}}_{\regst}\ket{0}_{\regother}\|-\mu^{1/2}.
\end{align*}
} 

By Eq. \ref{eq:decompose_real}, \ref{eq:decompose_hyb}, and \ref{eq:decompose_after_R},  we have 
\begin{align*}
|\ipro{\psi^{\real}_{\geq t}}{{\psi^{\hyb_2}_{\geq t}}}|&=|(\bra{\psi^{\real}_{\geq t}}_{\regA}\bra{0}_{\regB,\reganc})(\ket{\psi^{\hyb_2}_{\geq t}}_{\regA}\ket{0}_{\regB,\reganc})|\\
&=|(\bra{\psi^{\real}_{\geq t}}_{\regA}\bra{0}_{\regB,\reganc}) (\Pi R \ket{\phi_{\geq t}}_{\regst}\ket{0}_{\regother})| \\
&=|(\bra{\psi^{\real}_{\geq t}}_{\regA}\bra{0}_{\regB,\reganc}) \Pi (\alpha \ket{\phi_{\geq t}}_{\regst}\ket{0}_{\regother} +  \ket{\phi^\bot}_{\regst}\ket{0}_{\regother})| \\
&\geq  |\bra{\psi^{\real}_{\geq t}}_{\regA}\bra{0}_{\regB,\reganc} \Pi \ket{\phi_{\geq t}}_{\regst}\ket{0}_{\regother}|\cdot(1-\negl(\secpar))\\
&= \|\ket{\psi^{\real}_{\geq t}}_\regA\|^2\cdot(1-\negl(\secpar))\\
&\geq \|\ket{\psi^{\real}_{\geq t}}_\regA\|\cdot \|\ket{\psi^{\hyb_2}_{\geq t}}_\regA\|\cdot(1-\negl(\secpar)).
\end{align*}
where we used $|\alpha|=1-\negl(\secpar)$  for the first inequality and   $\|\ket{\psi_{\geq t}^{\hyb_2}}_{\regA}\|= \|\ket{\psi_{\geq t}^{\real}}_{\regA}\|\cdot (1 \pm \negl(\secpar))$ for the second inequality.
This completes the proof of Claim \ref{claim:real_and_hyb}.
\end{proof}

Then, we go back to prove Eq. \ref{eq:bound_trace_distance}.
We apply Lemma \ref{lem:bound_trace_distance} by setting
\begin{align*}
&\sigma_0:=\sigma_{\real}, \sigma_1:=\sigma_{\hyb_2},
p_0:=p_{\real}, p_1:=p_{\hyb_2},\\
&\ket{\psi_0}:=\frac{\Pi\ket{\phi_\st}_{\regst}\ket{0}_{\regother}}{\|\Pi\ket{\phi_\st}_{\regst}\ket{0}_{\regother}\|}=p_{\real}^{-1/2}\Pi\ket{\phi_\st}_{\regst}\ket{0}_{\regother},\\
&\ket{\psi_1}:= \frac{\Pi R\ket{\phi_\st}_{\regst}\ket{0}_{\regother}}{\|\Pi R\ket{\phi_\st}_{\regst}\ket{0}_{\regother}\|}=p_{\hyb_2}^{-1/2}\Pi R\ket{\phi_\st}_{\regst}\ket{0}_{\regother},\\
&\ket{\psi'_0}=\ket{\psi'_1}:
=\ket{\bot}_{\regst,\regother},\\
&\tilde{p}:=\min\{\|\ket{\psi_{\geq t}^{\real}}_{\regA}\|^2,\|\ket{\psi_{\geq t}^{\hyb_2}}_{\regA}\|^2\}.
\end{align*}
Then we have
\begin{align}
\TD(\sigma_\real, \sigma_{\hyb_2})\leq |p_{\real}-\tilde{p}|+ |p_{\hyb_2}-\tilde{p}|+ \tilde{p}\sqrt{1-|\ipro{\psi_0}{\psi_1}|^2}.
\label{eq:bound_trace_distance_expand} 
\end{align}

We give an upper bound for each term.\\
\noindent\textbf{Upper bound of $|p_{\real}-\tilde{p}|$.}
By Item \ref{item:decompose_real} of Claim \ref{claim:real_and_hyb}, we have 
\begin{align*}
p_{\real}=\|\ket{\psi_{<t}^{\real}}_{\regA}\|^2+\|\ket{\psi_{\geq t}^{\real}}_{\regA}\|^2
\end{align*}
 and by Item \ref{item:short} of Claim \ref{claim:real_and_hyb}, we have
 $\|\ket{\psi_{<t}^{\real}}_{\regA}\|^2 < t$.
Therefore, we have
\begin{align*}
\|\ket{\psi_{\geq t}^{\real}}_{\regA}\|^2 \leq p_{\real} < \|\ket{\psi_{\geq t}^{\real}}_{\regA}\|^2+t.
\end{align*}
Moreover, by Item \ref{item:close} of Claim \ref{claim:real_and_hyb} and the definition of $\tilde{p}$, we have 
\begin{align*}
\left|\|\ket{\psi_{\geq t}^{\real}}_{\regA}\|^2-\tilde{p}\right|\leq \negl(\secpar).
\end{align*}
Combining the above inequalities, we have 
\begin{align}
|p_{\real}-\tilde{p}|\leq t+ \negl(\secpar).  \label{eq:bound_first_term}
\end{align}

\noindent\textbf{Upper bound of $|p_{\hyb_2}-\tilde{p}|$.}
Similarly to the upper bound of $|p_{\real}-\tilde{p}|$,
by using Item \ref{item:decompose_hyb}, \ref{item:short}, and \ref{item:close} of Claim \ref{claim:real_and_hyb}, 
we can show that we have
\begin{align}
|p_{\hyb_2}-\tilde{p}|\leq t+ \negl(\secpar).  \label{eq:bound_second_term}
\end{align}

\noindent\textbf{Upper bound of $\tilde{p}\sqrt{1-|\ipro{\psi_0}{\psi_1}|^2}$.}
If we have $\tilde{p}\leq t^{1/3}$, then there is a trivial upper bound by $t^{1/3}$.
We assume $\tilde{p}>t^{1/3}$ in the following.
For giving a upper bound of $\tilde{p}\sqrt{1-|\ipro{\psi_0}{\psi_1}|^2}$, we give a lower bound of $|\ipro{\psi_0}{\psi_1}|$.
By the definitions  of $\ket{\psi_0}$ and $\ket{\psi_1}$ and Items \ref{item:decompose_real} and \ref{item:decompose_hyb} of Claim \ref{claim:real_and_hyb}, we have 
\begin{align}
|\ipro{\psi_0}{\psi_1}|&=p_{\real}^{-1/2}\cdot p_{\hyb_2}^{-1/2}\cdot\left|\ipro{\psi_{<t}^{\real}}{\psi_{<t}^{\hyb_2}}+|\ipro{\psi_{\geq t}^{\real}}{\psi_{\geq t}^{\hyb_2}}\right| \notag \\
&\geq p_{\real}^{-1/2}\cdot p_{\hyb_2}^{-1/2}\cdot\left|\ipro{\psi_{\geq t}^{\real}}{\psi_{\geq t}^{\hyb_2}}\right|- p_{\real}^{-1/2}\cdot p_{\hyb_2}^{-1/2}\cdot\left|\ipro{\psi_{<t}^{\real}}{\psi_{<t}^{\hyb_2}}\right|. \label{eq:lower_bound_ipro}
\end{align}
In the following, we give bounds for each term of Eq. \ref{eq:lower_bound_ipro}.  

By Items \ref{item:decompose_real}, \ref{item:decompose_hyb}, and \ref{item:close} of Claim \ref{claim:real_and_hyb} and the assumption that $\tilde{p}>t^{1/3}$,  we have
\begin{align}
&p_{\real}^{-1/2}\cdot p_{\hyb_2}^{-1/2}\cdot\left|\ipro{\psi_{\geq t}^{\real}}{\psi_{\geq t}^{\hyb_2}}\right| \notag \\
\geq &p_{\real}^{-1/2}\cdot p_{\hyb_2}^{-1/2}\cdot \|\ket{\psi_{\geq t}^{\real}}_{\regA}\|\cdot \|\ket{\psi_{\geq t}^{\hyb_2}}_{\regA}\|\cdot(1-\negl(\secpar)) \notag\\
= &\frac{\|\ket{\psi_{\geq t}^{\real}}_{\regA}\|}{\sqrt{\|\ket{\psi_{<t}^{\real}}_{\regA}\|^2+\|\ket{\psi_{\geq t}^{\real}}_{\regA}\|^2}}\cdot \frac{\|\ket{\psi_{\geq t}^{\hyb_2}}_{\regA}\|}{\sqrt{\|\ket{\psi_{<t}^{\hyb_2}}_{\regA}\|^2+\|\ket{\psi_{\geq t}^{\hyb_2}}_{\regA}\|^2}}\cdot(1-\negl(\secpar)) \notag\\
\geq &\left( 1- \frac{\|\ket{\psi_{< t}^{\real}}_{\regA}\|^2}{\|\ket{\psi_{\geq t}^{\real}}_{\regA}\|^2}\right)\cdot \left( 1- \frac{\|\ket{\psi_{< t}^{\hyb_2}}_{\regA}\|^2}{\|\ket{\psi_{\geq t}^{\hyb_2}}_{\regA}\|^2}\right)\cdot(1-\negl(\secpar)) \notag\\
\geq &\left( 1- \frac{t}{t^{1/3}}\right)\cdot \left( 1- \frac{t}{t^{1/3}}\right)\cdot(1-\negl(\secpar)) \notag\\
\geq &1- 2t^{2/3}- \negl(\secpar). \label{eq:lower_bound_geqt}
\end{align}
where we used the fact that we have 
$\frac{b}{\sqrt{a^2+b^2}}\geq 1-\frac{a^2}{b^2}>0$ for any reals $b>a>0$ in the second inequality.\footnote{We note that we have $\|\ket{\psi_{\geq t}^{\real}}_{\regA}\|>\|\ket{\psi_{< t}^{\real}}_{\regA}\|$ and $\|\ket{\psi_{\geq t}^{\hyb_2}}_{\regA}\|>\|\ket{\psi_{< t}^{\hyb_2}}_{\regA}\|$ since we assume $\tilde{p}>t^{1/3}>t$.}
On the other hand, by Items \ref{item:decompose_real}, \ref{item:decompose_hyb}, and \ref{item:short} of Claim \ref{claim:real_and_hyb} and the assumption that $\tilde{p}>t^{1/3}$, we have 
\begin{align}
&p_{\real}^{-1/2}\cdot p_{\hyb_2}^{-1/2}\cdot\left|\ipro{\psi_{< t}^{\real}}{\psi_{< t}^{\hyb_2}}\right| \notag \\
\leq  &\frac{t}{\sqrt{\|\ket{\psi_{<t}^{\real}}_{\regA}\|^2+\|\ket{\psi_{\geq t}^{\real}}_{\regA}\|^2}\cdot \sqrt{\|\ket{\psi_{<t}^{\hyb_2}}_{\regA}\|^2+\|\ket{\psi_{\geq t}^{\hyb_2}}_{\regA}\|^2}} \notag\\
\leq  &\frac{t}{\|\ket{\psi_{\geq t}^{\real}}_{\regA}\|\cdot \|\ket{\psi_{\geq t}^{\hyb_2}}_{\regA}\|} 
\leq \frac{t}{\tilde{p}} 
\leq t^{2/3}. \label{eq:lower_bound_leqt}
\end{align}

By combining Eq. \ref{eq:lower_bound_ipro}, \ref{eq:lower_bound_geqt}, and \ref{eq:lower_bound_leqt},    we have 
\[
|\ipro{\psi_0}{\psi_1}|\geq 1- 3t^{2/3}- \negl(\secpar).
\]
Therefore we have
\begin{align}
\tilde{p}\sqrt{1-|\ipro{\psi_0}{\psi_1}|^2}
&\leq \tilde{p}\sqrt{6t^{2/3} + \negl(\secpar)} \notag \\
&\leq 3t^{1/3}+\negl(\secpar).   \label{eq:bound_third_term}
\end{align}
We note that Eq. \ref{eq:bound_third_term} also holds for the case of  $\tilde{p}\leq t^{1/3}$. 
By plugging Eq. \ref{eq:bound_first_term}, \ref{eq:bound_second_term}, and \ref{eq:bound_third_term} into Eq. \ref{eq:bound_trace_distance_expand}, we obtain Eq.  \ref{eq:bound_trace_distance} noting that we set $t:=\delta^3/125$.
This completes the proof of Claim \ref{claim:hybtwo_to_real}.
\fi
\fi
\ifnum\submission=0 
\section{Post-Quantum \texorpdfstring{$\epsilon$}{epsilon}-Zero-Knowledge Proof}\label{sec:ZK_from_Collapse}
In this section, we prove the following theorem.
\begin{theorem}\label{thm:ZK_proof_from_QLWE}
If the QLWE assumption holds, then there exists a $5$-round post-quantum black-box $\epsilon$-zero-knowledge proof for all $\NP$ languages.
\end{theorem}
Then we generalize it to obtain the following theorem in Sec. \ref{sec:generalization_collapsing}.
\begin{theorem}\label{thm:ZK_proof_from_collapsing}
If a collapsing hash function exists, then there exists a $5$-round post-quantum black-box $\epsilon$-zero-knowledge proof for all $\NP$ languages.
\end{theorem}
\else 
\section{Post-Quantum \texorpdfstring{$\epsilon$}{epsilon}-Zero-Knowledge Proof and Argument}\label{sec:ZK_from_Collapse}
In this section, we prove the following theorems.
\begin{theorem}\label{thm:ZK_proof_from_QLWE}
If the QLWE assumption holds, then there exists a $5$-round post-quantum black-box $\epsilon$-zero-knowledge proof for all $\NP$ languages.
\end{theorem}
\begin{theorem}\label{thm:ZK_proof_from_collapsing}
If a collapsing hash function exists, then there exists a $5$-round post-quantum black-box $\epsilon$-zero-knowledge proof for all $\NP$ languages.
\end{theorem}
\begin{theorem}\label{thm:ZK_argument_from_OWF}
If post-quantunm OWF exists, then there exists a 9-round  post-quantum black-box $\epsilon$-zero-knowledge argument for all $\NP$ languages.
\end{theorem}
In the rest of this section, we prove Theorem \ref{thm:ZK_proof_from_QLWE} and \ref{thm:ZK_proof_from_collapsing}.
The proof of Theorem \ref{thm:ZK_argument_from_OWF} is given in
\ifnum\cameraready=1
the full version. 
\else
Appendix \ref{sec:ZK_from_OWF}.
\fi
\fi

\subsection{Construction}\label{sec:construction_proof}
Our construction is the same as the Golderich-Kahan protocol \cite{JC:GolKah96} except that we instantiate the verifier's commitment with a strong collapse-binding commitment and we rely on a post-quantum delayed-witness $\Sigma$-protocol.
Specifically, our construction is built on the following ingredients:
\begin{itemize}
    \item A commitment scheme $(\CBSetup,\CBCommit)$ that is statistical hiding and strong collapse-binding with message space $\bit^\secpar$ and randomness space $\calR$. 
    As noted in Sec.  
    \ref{sec:commitment},    
    such a commitment scheme exists under the QLWE assumption.
    \item A delayed-witness $\Sigma$-protocol $(\Sigma.\pro_1,\Sigma.\pro_3,\Sigma.\ver)$ for an $\NP$ language $\lang$ as defined in Definition \ref{def:delayed_witness_sigma_protocol}.
     As noted in Sec. \ref{sec:delayed_sigma}, such a protocol exists under the QLWE assumption.
\end{itemize}

Then our construction of post-quantum black-box $\epsilon$-zero-knowledge proof is given in Figure \ref{fig:ZK_from_Collapse}.

\protocol
{Protocol \ref{fig:ZK_from_Collapse}}
{Constant-Round Post-Quantum $\epsilon$-Zero-Knowledge Proof for $\lang \in \NP$}
{fig:ZK_from_Collapse}
{
\textbf{Common Input:} An instance $x\in \lang \cap \bit^{\secpar}$ for security parameter $\secpar \in \mathbb{N}$.\\
\textbf{$\pro$'s Private Input:} A classical witness $w\in \rel_\lang(x)$ for $x$. 
\begin{enumerate}
\item \textbf{$\ver$'s Commitment to Challenge:}
\begin{enumerate}
    \item $\pro$ computes $\pp\sample \CBSetup(1^\secpar)$ and sends $\pp$ to $\ver$.
    \item $\ver$ chooses $e\sample \bit^\secpar$ and $r\sample \randspace$, computes $\com \sample \CBCommit(\pp,e;r)$, and sends $\com$ to $\pro$.
\end{enumerate}
\item \textbf{$\Sigma$-Protocol Execution:}
\begin{enumerate}
    \item $\pro$ generates $(a,\st)\sample \Sigma.\pro_1(x)$ and sends $a$ to $\ver$.
    \item $\ver$ sends $(e,r)$ to $\pro$.
    \label{Step_protocol_ZK_proof_open}
    \item $\pro$ aborts if $\CBCommit(\pp,e;r)\neq \com$.\\ Otherwise, it generates $z\sample \Sigma.\pro_3(\st,w,e)$ and sends $z$ to $\ver$.
       \label{Step_protocol_ZK_proof_check}
    \item $\ver$ outputs $\Sigma.\ver(x,a,e,z)$. 
\end{enumerate}
\end{enumerate}
}

The completeness of the protocol clearly follows from that of the underlying $\Sigma$-protocol.
In Sec. \ref{sec:statistical_soundness} and \ref{sec:proof_ZK_collapsing}, we prove that this protocol satisfies statistical soundness and quantum black-box $\epsilon$-zero-knowledge.
Then we obtain Theorem \ref{thm:ZK_proof_from_QLWE}.

\subsection{Statistical Soundness}\label{sec:statistical_soundness}
This is essentially the same as the proof in  \cite{JC:GolKah96}, but we give a proof for completeness.

For $x\notin \lang$ an unbounded-time cheating prover $\pro^*$, we consider the following sequence of hybrids.
We denote by $\win_i$ the event that $\pro^*$ wins in $\hyb_i$.

\begin{description}
\item[$\hyb_1$:]
This is the original game. That is,
\begin{enumerate}
    \item $\pro^*$ sends $\pp$ to $\ver$.
    \item $\ver$ chooses $e\sample \bit^\secpar$ and $r\sample \randspace$, computes $\com \sample \CBCommit(\pp,e;r)$, and sends $\com$ to $\pro^*$.
    \label{step:soundness_experiment_send_com}
    \item $\pro^*$ sends $a$ to $\ver$.
    \item $\ver$ sends $(e,r)$ to $\pro^*$
    \label{step:soundness_experiment_send_er}
    \item $\pro^*$ sends $z$ to $\ver$.
\end{enumerate}
We say that $\pro^*$ wins if we have $\Sigma.\ver(x,a,e,z)=\top$.
\item[$\hyb_2$:]
This hybrid is identical to the previous one except that in Step \ref{step:soundness_experiment_send_er}, $\ver$ uniformly chooses $r'$ such that $\com=\CBCommit(\pp,e;r')$ and sends $(e,r')$ to $\pro^*$ instead of $(e,r)$.
We note that this procedure may be inefficient.

This is just a conceptual change and thus we have $\Pr[\win_1]=\Pr[\win_2]$.

\item[$\hyb_3$:]
This hybrid is identical to the previous one except that in Step \ref{step:soundness_experiment_send_com}, $\ver$ sends $\com\sample \CBCommit(\pp,0^\ell;r)$ and the generation of $e$ is delayed to Step \ref{step:soundness_experiment_send_er}.

Since no information of $r$ is given to $\pro^*$ due to the modification made in $\hyb_2$, 
by the statistical hiding property of $\CBCom$, we have $|\Pr[\win_3]-\Pr[\win_2]|=\negl(\secpar)$.

Now, it is easy to prove $\Pr[\win_3]=\negl(\secpar)$ by reducing it to the statistical soundness of the $\Sigma$-protocol.
Namely, we consider a cheating prover $\Sigma.\pro^*$ against the $\Sigma$-protocol that works as follows.
\begin{enumerate}
    \item $\Sigma.\pro^*$ runs $\pro^*$ to get the first message $\pp$.
    \item $\Sigma.\pro^*$ computes  $\com\sample \CBCommit(\pp,0^\ell;r)$, sends $\com$ to $\pro^*$, and gets the third message $a$.
    Then $\Sigma.\pro^*$ sends $a$ to its own external challenger as the first message of the $\Sigma$-protocol.
    \item Upon receiving a challenge $e$ from the external challenger, $\Sigma.\pro^*$ uniformly chooses $r'$ such that $\com=\CBCommit(\pp,e;r')$, sends $(e,r')$ to $\pro^*$, and gets the $\pro^*$'s final message $z$.
    Then $\Sigma.\pro^*$ sends $z$ to the external challenger. 
\end{enumerate}
It is easy to see that $\Sigma.\pro^*$ perfectly simulates the environment in $\hyb_3$ for $\pro^*$. Therefore, $\Sigma.\pro^*$'s winning probability is equal to $\Pr[\win_3]$.
On the other hand, by soundness of the $\Sigma$-protocol, $\Sigma.\pro^*$'s winning probability is  $\negl(\secpar)$.
Therefore we have $\Pr[\win_3]=\negl(\secpar)$.
\end{description}

Combining the above, we have $\Pr[\win_1]=\negl(\secpar)$, which means that the protocol satisfies the statistical soundness.
\subsection{Quantum Black-Box \texorpdfstring{$\epsilon$}{epsilon}-Zero-Knowledge}\label{sec:proof_ZK_collapsing}
\noindent\textbf{Structure of the Proof.}
A high-level structure of our proof is similar to that of \cite{STOC:BitShm20}.
Specifically, we first construct simulators $\siml_\abort$ and $\siml_\nonabort$ that simulate the ``aborting case" and ``non-aborting case", respectively.
More precisely, $\siml_\abort$ correctly simulates the verifier's view if the verifier aborts and  otherwise returns a failure symbol $\fail$ and $\siml_\nonabort$ correctly simulates the verifier's view if the verifier does not abort and  otherwise returns a failure symbol $\fail$.
Then we consider a combined simulator $\siml_\comb$ that runs either of $\siml_\abort$ or  $\siml_\nonabort$ with equal probability.
Then $\siml_\comb$ correctly simulates  the verifier's view conditioned on that the output is not $\fail$, and it returns $\fail$ with probability almost $1/2$.
By applying the Watrous' quantum rewinding lemma (Lemma \ref{lem:quantum_rewinding}) to $\siml_\comb$, we can convert it to a full-fledged simulator.

Though the above high-level structure is similar to \cite{STOC:BitShm20}, the analyses of simulators $\siml_\abort$ and  $\siml_\nonabort$ are completely different from \cite{STOC:BitShm20} since we consider different protocols.
While the analysis of $\siml_\abort$ is easy, the analysis of $\siml_\nonabort$ is a little more complicated as it requires the extraction lemma (Lemma \ref{lem:extraction}), which was developed in Sec. \ref{sec:extraction}.\\

\noindent\textbf{Proof of Quantum Black-Box $\epsilon$-Zero-Knowledge.}
For clarity of exposition, we first show the quantum $\epsilon$-zero-knowledge property ignoring that the simulator should be black-box.  That is, 
we give the full description of the malicious verifier and its quantum advice as part of the simulator’s input instead of only the oracle access to the verifier. At the end of the proof, we explain that the simulator is indeed  black-box.

In quantum $\epsilon$-zero-knowledge, we need to show a simulator $\siml$ that takes an accuracy parameter $1^{\epsilon^{-1}}$ as part of its input.
We assume $\epsilon(\secpar)= o(1)$ without loss of generality since the other case trivially follows from this case.
Without loss of generality, we can assume that a malicious verifier $\ver^*$ does not terminate the protocol before the prover aborts since it does not gain anything by declaring the termination.
We say that $\ver^*$ aborts if it fails to provide a valid opening $(e,r)$ to $\com$ in Step \ref{Step_protocol_ZK_proof_open} (i.e., the prover aborts in Step \ref{Step_protocol_ZK_proof_check}).

First, we construct a simulator $\siml_\comb$, which returns a special symbol $\fail$ with probability roughly $1/2$ but almost correctly simulates the output of $\ver^*_\secpar$ conditioned on that it does not return $\fail$.
The simulator $\siml_\comb$ uses simulators $\siml_\abort$ and $\siml_\nonabort$ as sub-protocols:

\begin{description}
\item[$\siml_\comb(x,1^{\epsilon^{-1}},\ver^*_\secpar,\rho_\secpar)$:]~
\begin{enumerate}
    \item Choose $\mathsf{mode}\sample \{\abort,\nonabort\}$.
    \item Run $\siml_{\mathsf{mode}}(x,1^{\epsilon^{-1}},\ver^*_\secpar,\rho_\secpar)$. 
    \item Output what $\siml_{\mathsf{mode}}$ outputs.
\end{enumerate}
\item[$\siml_\abort(x,1^{\epsilon^{-1}},\ver^*_\secpar,\rho_\secpar)$:]\footnote{Though $\siml_\abort$ does not depend on $\epsilon$, we include $1^{\epsilon^{-1}}$ in the input for notational uniformity.}~
\begin{enumerate}
   \item Set $\ver^*_\secpar$'s internal state to $\rho_\secpar$.
    \item Compute $\pp\sample \CBSetup(1^\secpar)$ and send $\pp$ to $\ver^*_\secpar$.
    \item $\ver^*_\secpar$ returns $\com$.
    \item Compute $(a,\st)\sample \Sigma.\pro_1(x)$ and send $a$ to $\ver^*_\secpar$.
    \item $\ver^*_\secpar$ returns $(e,r)$.
    \item Return $\fail$ and abort if $\CBCommit(\pp,e;r)= \com$.\\
    Otherwise, let $\ver^*_\secpar$ output the final output notifying that the prover aborts.
    \item The final output of $\ver^*_\secpar$ is treated as the output $\siml_\abort$.  
\end{enumerate}
\item[$\siml_\nonabort(x,1^{\epsilon^{-1}},\ver^*_\secpar, \rho_\secpar)$:]~
\begin{enumerate}
\item Set $\ver^*_\secpar$'s internal state to $\rho_\secpar$.
   \item Compute $\pp\sample \CBSetup(1^\secpar)$ and send $\pp$ to $\ver^*_\secpar$.
    \item $\ver^*_\secpar$ returns $\com$. 
    Let $\rho_\st$ be the internal state of $\ver^*_\secpar$ at this point.
    \label{step:simna_generate_com}
    \item Compute $(e_\ext,\rho_\ext)\sample \ext(1^\secpar,1^{\delta^{-1}},\pp,\com,\A_{\open,\secpar},\rho_{\st})$ where $\ext$ is as in Lemma \ref{lem:extraction} for the commitment scheme $\CBCom$, $\delta\defeq \frac{\epsilon^2}{3600\log^4(\secpar)}$, and $\A=(\A_{\commit,\secpar},\A_{\open,\secpar})$ as defined below:
    \begin{description}
    \item[$\A_{\commit,\secpar}(\pp;\rho_\secpar)$:]
    It sets  $\ver^*_\secpar$'s internal state  to $\rho_\secpar$ and sends $\pp$ to $\ver^*_\secpar$. Let $\com$ be the response by $\ver^*_\secpar$ and $\rho_\st$ be the internal state of $\ver^*_\secpar$ at this point. 
    It outputs $(\com,\rho_\st)$. 
    \item[$\A_{\open,\secpar}(\rho_\st)$:] It generates $(a,\st)\sample \Sigma.\pro_1(x)$,\footnote{
        We note that we consider $x$ to be hardwired into $\A_{\open,\secpar}$. 
    We also note that though $\A_{\open,\secpar}$ does not take explicit randomness, it can generate randomness by say, applying Hadamard on its working register and then measuring it.
    }  sets $\ver^*_\secpar$'s internal state to $\rho_\st$, 
    and
    sends $a$ to $\ver^*_\secpar$.
    Let $(e,r)$ be the response by $\ver^*_\secpar$ and let $\rho'_{\st}$ be the internal state of $\ver^*_\secpar$ at this point. 
    It outputs $(e,r,\out:=(a,\st),\rho'_{\st})$.
    \end{description}
   Here, we remark that 
   $\ver^*_\secpar$'s internal register corresponds to $\regst$ and 
   $e$ corresponds to $m$  in the notation of Lemma \ref{lem:extraction}.
        \label{step:simna_ext}
   \item Set the verifier's internal state to $\rho_\ext$. 
 \label{step:simna_reinitialize}
    \item Compute $(a,z)\sample \siml_{\Sigma}(x,e_{\ext})$ and send $a$ to $\ver^*_\secpar$. 
    \label{step:simna_simlate_Sigma}
    \item $\ver^*_\secpar$ returns $(e,r)$.  
    \label{step:simna_generate_er}
    \item Return $\fail$ and abort if $e\neq e_\ext$ or $\CBCommit(\pp,e;r)\neq \com$.\\
    Otherwise, send $z$ to $\ver^*_\secpar$.
    \label{step:simna_send_z}
    \item The final output of $\ver^*_\secpar$ is treated as the output $\siml_\nonabort$.   \label{step:simna_final}
\end{enumerate}
\end{description}

Intuitively, $\siml_\abort$ (resp. $\siml_\nonabort$) is a simulator that simulates the verifier's view in the case that verifier aborts (resp. does not abort).

More formally, we prove the following lemmas.
\begin{lemma}[$\siml_\abort$ simulates the aborting case.]\label{lem:siml_abort}
For any non-uniform QPT malicious verifier $\ver^*=\{\ver^*_\secpar,\rho_\secpar\}_{\secpar\in\mathbb{N}}$, let $\OUT_{\ver^*_{\abort}}\execution{\pro(w)}{\ver^*_\secpar(\rho_\secpar)}(x)$ be the $\ver^*_\secpar$'s final output that  is replaced with $\fail$ if $\ver^*_\secpar$ does not abort.
Then we have
\begin{align*}
  \{\OUT_{\ver^*_{\abort}}\execution{\pro(w)}{\ver^*_\secpar(\rho_\secpar)}(x)\}_{\secpar,x,w}  \equiv
  \{\siml_\abort(x,1^{\epsilon^{-1}},\ver^*_\secpar,\rho_\secpar)\}_{\secpar,x,w}.
\end{align*}
where $\secpar \in \mathbb{N}$, $x\in \lang\cap \bit^\secpar$, and $w\in \rel_\lang(x)$.
\end{lemma}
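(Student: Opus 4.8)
The plan is to show that $\siml_\abort$ is, in fact, a \emph{perfect} simulator conditioned on the aborting event. The key observation is that $\siml_\abort$ performs \emph{exactly} the same operations as the honest prover $\pro$ in the real execution, up to and including the point where $\ver^*_\secpar$ sends $(e,r)$ and the prover decides whether to abort: it runs $\CBSetup$ to get $\pp$, sends $\pp$, receives $\com$, computes $(a,\st)\sample \Sigma.\pro_1(x)$ (which, crucially, does \emph{not} use the witness $w$, since we use a \emph{delayed-witness} $\Sigma$-protocol), sends $a$, and receives $(e,r)$. So the joint state of $\ver^*_\secpar$'s internal register together with the transcript so far is \emph{identically distributed} in the real execution and in the $\siml_\abort$ execution.

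First I would make this coupling precise: fix $\secpar$, $x\in\lang\cap\bit^\secpar$, $w\in\rel_\lang(x)$, and couple the two experiments so that they use the same randomness for $\CBSetup$, the same randomness inside $\Sigma.\pro_1$, and run $\ver^*_\secpar$ on the same messages. Under this coupling, after Step~\ref{Step_protocol_ZK_proof_open} of the real protocol and the corresponding step of $\siml_\abort$, the pair (verifier's internal quantum state, the values $(\pp,\com,a,e,r)$) is literally the same random variable. Now branch on the event $\CBCommit(\pp,e;r)=\com$, i.e. on whether $\ver^*$ aborts. On the aborting branch ($\CBCommit(\pp,e;r)\neq\com$): the real prover aborts, so it sends nothing further and $\ver^*_\secpar$ produces its final output having been told the prover aborted; $\siml_\abort$ does exactly the same — it lets $\ver^*_\secpar$ output its final state with the ``prover aborts'' notification — so the outputs agree. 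On the non-aborting branch: by definition $\OUT_{\ver^*_\abort}$ is replaced with $\fail$, and $\siml_\abort$ also returns $\fail$; so again the outputs agree. Since the two outputs agree on every branch of a common coupling, the output distributions are identical, which is the claimed $\equiv$.

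The only point that needs a word of care is Step~6 of $\siml_\abort$: it returns $\fail$ precisely when $\CBCommit(\pp,e;r)=\com$, i.e. when the verifier does \emph{not} abort, and otherwise proceeds as the honest prover would upon detecting an invalid opening — this matches the definition of $\OUT_{\ver^*_\abort}$, which outputs $\fail$ exactly in the non-aborting case. I would also note that $\siml_\abort$ never needs $w$ anywhere, which is what makes the simulation trivially faithful here (unlike $\siml_\nonabort$, which is the genuinely hard case handled via the extraction lemma).

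I do not expect any real obstacle in this lemma — it is the easy half of the decomposition, and the proof is essentially ``$\siml_\abort$ runs the honest prover strategy verbatim in the aborting case, and the delayed-witness property ensures the first message $a$ requires no witness.'' If anything, the one subtlety worth stating explicitly is that $a\sample\Sigma.\pro_1(x)$ is distributed identically in the real protocol and in $\siml_\abort$ (indeed it is generated by the same procedure), so there is not even a computational indistinguishability step needed here — the relation is exact equality of distributions, consistent with the $\equiv$ in the statement.
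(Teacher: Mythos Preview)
Your proposal is correct and takes essentially the same approach as the paper, which dispatches this lemma in a single sentence: ``Since $\siml_\abort$ perfectly simulates the real execution for $\ver^*_\secpar$ when it aborts, Lemma~\ref{lem:siml_abort} immediately follows.'' Your elaboration via an explicit coupling and the observation that $\Sigma.\pro_1(x)$ uses no witness is exactly the content behind that sentence, just spelled out in more detail.
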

\begin{proof}
Since $\siml_\abort$ perfectly simulates the real execution for $\ver^*_\secpar$ when it aborts, Lemma \ref{lem:siml_abort} immediately follows.
\end{proof}
\begin{lemma}[$\siml_\nonabort$ simulates the non-aborting case.]\label{lem:siml_nonabort}
For any non-uniform QPT malicious verifier $\ver^*=\{\ver^*_\secpar,\rho_\secpar\}_{\secpar\in\mathbb{N}}$,
let $\OUT_{\ver^*_{\nonabort}}\execution{\pro(w)}{\ver^*_\secpar(\rho_\secpar)}(x)$ be the $\ver^*_\secpar$'s final output that is replaced with $\fail$ if $\ver^*_\secpar$ aborts.
Then we have
\begin{align*}
\{\OUT_{\ver^*_{\nonabort}}\execution{\pro(w)}{\ver^*_\secpar(\rho_\secpar)}(x)\}_{\secpar,x,w}  \compind_{\delta}
  \{\siml_\nonabort(x,1^{\epsilon^{-1}},\ver^*_\secpar, \rho_\secpar)\}_{\secpar,x,w}
\end{align*}
where $\secpar \in \mathbb{N}$, $x\in \lang \cap \bit^{\secpar}$, and $w\in \rel_\lang(x)$.
\end{lemma}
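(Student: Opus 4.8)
The plan is to prove Lemma~\ref{lem:siml_nonabort} by a hybrid argument that gradually transforms the real non-aborting execution into $\siml_\nonabort$, with the extraction lemma (Lemma~\ref{lem:extraction}) doing the heavy lifting in the key step. First I would set up the two-stage algorithm $\A=(\A_{\commit,\secpar},\A_{\open,\secpar})$ exactly as in the description of $\siml_\nonabort$: $\A_{\commit,\secpar}$ runs $\ver^*_\secpar$ up through the commitment $\com$, and $\A_{\open,\secpar}$ samples the honest first message $(a,\st)\sample\Sigma.\pro_1(x)$, feeds $a$ to $\ver^*_\secpar$, and outputs $\ver^*_\secpar$'s response $(e,r)$ together with $\out=(a,\st)$ and the updated verifier state. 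The crucial point (and the reason the delayed-witness property is needed) is that $\A_{\open,\secpar}$ can generate $a$ \emph{without} the witness $w$, so this is a legitimate instance of Definition~\ref{def:extraction_experiments}. The real non-aborting execution, restricted to the event that $(e,r)$ is a valid opening, is then exactly $\Exp_\real[\CBCom,\A](\secpar)$ post-processed by: using the extracted $\out=(a,\st)$ and $e$ to compute $z\sample\Sigma.\pro_3(\st,w,e)$, sending $z$, and outputting $\ver^*_\secpar$'s final state; if the opening is invalid we instead output $\fail$.

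Next I would introduce the hybrid $\hyb_1$ that is identical to the real non-aborting experiment except that we invoke $\ext$ (from Lemma~\ref{lem:extraction}, instantiated with $\CBCom$ and $\delta=\epsilon^2/(3600\log^4\secpar)$) in between $\A_{\commit,\secpar}$ and $\A_{\open,\secpar}$, obtaining $(e_\ext,\rho_\ext)$, running $\A_{\open,\secpar}$ on $\rho_\ext$, and outputting $\fail$ whenever $e\neq e_\ext$ or the opening is invalid (matching Step~\ref{step:simna_send_z} of $\siml_\nonabort$). By the extraction lemma (the strong-collapse-binding conclusion, Eq.~\eqref{eq:extraction_conclusion}), the joint distribution $(\pp,\com,e,r,\out,\rho'_\st)$ in $\hyb_1$ is $\delta$-computationally-indistinguishable from that in the real experiment; since the remaining post-processing (computing $z$ from $\st,w,e$, sending it, reading $\ver^*_\secpar$'s output) is a fixed QPT map applied to this tuple, the outputs of $\hyb_1$ and the real non-aborting experiment are $\delta$-indistinguishable. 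I would then move to $\hyb_2=\siml_\nonabort$: the only change is that instead of computing $(a,\st)$ honestly inside $\A_{\open,\secpar}$ and then computing $z\sample\Sigma.\pro_3(\st,w,e_\ext)$, we use the special honest-verifier ZK simulator to sample $(a,z)\sample\siml_\Sigma(x,e_\ext)$. Here I must be careful: in $\hyb_1$ the message $a$ fed to $\ver^*_\secpar$ during extraction and the $a$ used afterward are the \emph{same} honestly-generated $a$, whereas in $\siml_\nonabort$ the extraction uses a freshly sampled honest $a$ (inside $\A_{\open,\secpar}$) but the message actually sent to $\ver^*_\secpar$ in Step~\ref{step:simna_simlate_Sigma} is the simulated one. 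So the reduction to special HVZK replaces, conditioned on $e=e_\ext$ being a valid opening, the pair $(a,z)$ generated as $(a,\st)\sample\Sigma.\pro_1(x),\ z\sample\Sigma.\pro_3(\st,w,e_\ext)$ with $(a,z)\sample\siml_\Sigma(x,e_\ext)$; indistinguishability of these pairs (for the adaptively-chosen-but-already-extracted challenge $e_\ext$) gives $\hyb_1\compind\hyb_2$. Note $e_\ext$ is determined before $a$ is needed, so the HVZK distinguisher can hardwire/forward $e_\ext$ appropriately.

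Combining the two steps by the triangle inequality for $\compind_\delta$ yields $\{\OUT_{\ver^*_\nonabort}\langle\pro(w),\ver^*_\secpar(\rho_\secpar)\rangle(x)\}\compind_\delta\{\siml_\nonabort(x,1^{\epsilon^{-1}},\ver^*_\secpar,\rho_\secpar)\}$, which is exactly the claim. The main obstacle I anticipate is the bookkeeping around the $\fail$ event and conditioning: I need to argue that replacing the real opening step by ``$\ext$ then run $\A_{\open,\secpar}$ then check $e=e_\ext$ and validity, else $\fail$'' does not change the probability mass on $\fail$ by more than $\delta$, which follows because (i) in $\Exp_\real$ vs.\ $\Exp_\ext$ the lemma already controls the probability that the opening is valid (it appears as $\bot$ vs.\ non-$\bot$ inside the $\delta$-indistinguishable tuple), and (ii) the extra constraint $m=m_\ext$ in $\Exp_\ext$ is absorbed into the lemma's conclusion. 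A secondary subtlety is making sure the HVZK reduction is valid even though $\ver^*_\secpar$'s state and the extraction have already been run — but since $e_\ext$ is classical and fixed by that point, and the HVZK indistinguishability is against \emph{all} non-uniform QPT distinguishers (which can incorporate the rest of $\siml_\nonabort$ and $\ver^*_\secpar$ as part of the distinguisher with $e_\ext,\rho_\ext$ as advice), this goes through cleanly. The black-box property and the final assembly via $\siml_\comb$ and Watrous rewinding (Lemma~\ref{lem:quantum_rewinding}) are handled separately, after Lemmas~\ref{lem:siml_abort} and~\ref{lem:siml_nonabort}.
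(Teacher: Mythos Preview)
Your proposal is correct and is essentially the paper's own proof run in the opposite direction: the paper starts from $\siml_\nonabort$ and moves through hybrids $\siml_{\nonabort,1}$ (replace $\siml_\Sigma$ by honest $(a,\st),z$ via HVZK), $\siml_{\nonabort,2}$ (compute $z$ from $e$ rather than $e_\ext$, identical on non-$\fail$), and $\siml_{\nonabort,3}$ (drop $\ext$ via Lemma~\ref{lem:extraction}) to reach the real non-aborting output, while you start from the real side and go the other way, absorbing the trivial $e$-vs-$e_\ext$ step into your HVZK hop. The only cosmetic wobble is your phrase ``the $a$ fed to $\ver^*_\secpar$ during extraction and the $a$ used afterward are the same'': the $a$'s used \emph{inside} $\ext$ are separate (black-box) invocations of $\A_{\open,\secpar}$; the $a$ that matters for the HVZK reduction is the fresh one produced by the post-extraction call $\A_{\open,\secpar}(\rho_\ext)$, exactly as you then use it.
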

\begin{proof}
Here, we analyze $\siml_\nonabort(x,1^{\epsilon^{-1}},\ver^*_\secpar, \rho_\secpar)$.
In the following, we consider hybrid simulators $\simlnai$ for $i=1,2,3$.
We remark that they also take the witness $w$ as input unlike $\siml_\nonabort$.

\begin{description}
\item[$\simlnaone$:] This simulator works similarly to $\siml_\nonabort(x,1^{\epsilon^{-1}},\ver^*_\secpar, \rho_\secpar)$ except that it generates $(a,\st)\sample \Sigma.\pro_1(x)$ and $z\sample \Sigma.\pro_3(\st,w,e_{\ext})$ instead of $(a,z)\sample \siml_{\Sigma}(x,e_\ext)$ in Step \ref{step:simna_simlate_Sigma}.

By the special honest-verifier zero-knowledge property of the $\Sigma$-protocol, we have 
\begin{align*}
\{\siml_\nonabort(x,1^{\epsilon^{-1}},\ver^*_\secpar, \rho_\secpar)\}_{\secpar,x,w}
\compind 
\{\{\siml_{\nonabort,1}(x,w,1^{\epsilon^{-1}},\ver^*_\secpar,\rho_\secpar)\}_{\secpar,x,w}\}_{\secpar,x,w}
\end{align*}
where $\secpar\in \mathbb{N}$, $x\in \lang \cap \bit^{\secpar}$, and $w\in \rel_\lang(x)$.

\item[$\simlnatwo$:] This simulator works similarly to $\simlnaone$ except that the generation of $z$ is delayed until Step \ref{step:simna_send_z} and it is generated as $z\sample \Sigma.\pro_3(\st,w,e)$ instead of $z\sample \Sigma.\pro_3(\st,w,e_{\ext})$.

The modification does not affect the output distribution since it outputs $\fail$ if $e\neq e_\ext$ and if $e=e_\ext$, then this simulator works in exactly the same way as the previous one.
Therefore we have 
\begin{align*}
\{\siml_{\nonabort,1}(x,w,1^{\epsilon^{-1}},\ver^*_\secpar,\rho_\secpar)\}_{\secpar,x,w}  \equiv \{\siml_{\nonabort,2}(x,w,1^{\epsilon^{-1}},\ver^*_\secpar,\rho_\secpar)\}_{\secpar,x,w}
\end{align*}
where $\secpar\in \mathbb{N}$, $x\in \lang \cap \bit^{\secpar}$, and $w\in \rel_\lang(x)$.

\item[$\simlnathree$:] This simulator works similarly to $\simlnatwo$ except that Step \ref{step:simna_ext} and \ref{step:simna_reinitialize} are deleted and the check of $e\neq e_\ext$ in Step \ref{step:simna_send_z} is omitted.
That is, it outputs $\fail$ in  Step \ref{step:simna_send_z} if and only if we have $\CBCommit(\pp,e;r)\neq \com$.
We note that $e_\ext$ and $\rho_\ext$ are no longer used at all and thus need not be generated.

We can see that 
Step \ref{step:simna_generate_com} is exactly the same as executing $(\com,\rho_\st)\sample \A_{\commit,\secpar}(\pp;\rho_\secpar)$  and
Step \ref{step:simna_simlate_Sigma} and \ref{step:simna_generate_er} of previous and this experiments are exactly the same as executing 
$(e,r,\out=(a,\st),\rho'_\st)\sample \A_{\open,\secpar}(\rho_\ext)$ and
$(e,r,\out=(a,\st),\rho'_\st)\sample \A_{\open,\secpar}(\rho_\st)$, respectively where we define $\rho'_\st$ in simulated experiments as $\ver^*_\secpar$'s internal state after Step \ref{step:simna_generate_er}. 
Moreover, the rest of execution of the simulators can be done given $(\pp,\com,e,r,\out=(a,\st),\rho'_\st)$. 
Therefore, by a straightforward reduction to Lemma \ref{lem:extraction}, we have 
\begin{align*}
\{\siml_{\nonabort,2}(x,w,1^{\epsilon^{-1}},\ver^*_\secpar,\rho_\secpar)\}_{\secpar,x,w}  \compind_\delta \{\siml_{\nonabort,3}(x,w,1^{\epsilon^{-1}},\ver^*_\secpar,\rho_\secpar)\}_{\secpar,x,w}
\end{align*}
where $\secpar\in \mathbb{N}$, $x\in \lang \cap \bit^{\secpar}$, and $w\in \rel_\lang(x)$.
\end{description}

We can see that $\simlnathree$ perfectly simulates the real execution for $\ver^*_\secpar$ and outputs $\ver^*_\secpar$'s output conditioned on that $\ver^*_\secpar$ does not abort, and just outputs $\fail$ otherwise.
Therefore, we have 
\begin{align*}
\{\siml_{\nonabort,3}(x,w,1^{\epsilon^{-1}},\ver^*_\secpar,\rho_\secpar)\}_{\secpar,x,w} \equiv \{\OUT_{\ver^*_{\nonabort}}\execution{\pro(w)}{\ver^*_\secpar(\rho_\secpar)}(x)\}_{\secpar,x,w} 
\end{align*}
where $\secpar\in \mathbb{N}$, $x\in \lang \cap \bit^{\secpar}$, and $w\in \rel_\lang(x)$. 
Combining the above, Lemma \ref{lem:siml_nonabort} is proven.
\end{proof}

By combining  Lemmas \ref{lem:siml_abort} and \ref{lem:siml_nonabort}, we can prove the following lemma.

\begin{lemma}[$\siml_\comb$ simulates $\ver^*_\secpar$'s output with probability almost $1/2$]\label{lem:siml_comb}
For any non-uniform QPT malicious verifier $\ver^*=\{\ver^*_\secpar,\rho_\secpar\}_{\secpar\in\mathbb{N}}$, 
let $\pcombsuc$ be the probability that $\siml_\comb(x,1^{\epsilon^{-1}},\ver^*_\secpar, \rho_\secpar)$ does not return $\fail$ and $D_{\mathsf{sim},\comb}( x,\allowbreak 1^{\epsilon^{-1}},\ver^*_\secpar,\rho_\secpar)$ be a conditional  distribution of $\siml_\comb(x,1^{\epsilon^{-1}},\ver^*_\secpar, \rho_\secpar)$, conditioned on that it does not return $\fail$.
There exists a negligible function $\negl$ such that for any $x=\{x_\secpar \in \lang \cap \bit^\secpar\}_{\secpar \in \mathbb{N}}$,  we have 
\begin{align}
    \left|\pcombsuc-1/2\right|\leq \delta/2+\negl(\secpar). \label{eq:pcombsuc}
\end{align}
Moreover, we have 
\begin{align}
   \{\OUT_{\ver^*}\execution{\pro(w)}{\ver^*_\secpar(\rho_\secpar)}(x)\}_{\secpar,x,w}
    \compind_{4\delta} 
    \{D_{\mathsf{sim},\comb}(x,1^{\epsilon^{-1}},\ver^*_\secpar,\rho_\secpar)\}_{\secpar,x,w} \label{eq:siml_comb}
\end{align}
where $\secpar\in \mathbb{N}$, $x\in \lang \cap \bit^{\secpar}$, and $w\in \rel_\lang(x)$. 
\end{lemma}
\begin{proof}(sketch.)
Intuition of the proof is very easy: 
By Lemma \ref{lem:siml_abort} and \ref{lem:siml_nonabort}, $\siml_\abort$ and $\siml_\nonabort$ almost simulate the real output distribution of $\ver^*_\secpar$ conditioned on that $\ver^*_\secpar$ aborts and does not abort, respectively.
Therefore, if we randomly guess  if $\ver^*_\secpar$ aborts and runs either of $\siml_\abort$ and $\siml_\nonabort$ that successfully works for the guessed case, the output distribution is close to  the real output distribution of $\ver^*_\secpar$ conditioned on that the guess is correct, which happens with probability almost $1/2$.

Indeed, the actual proof is based on the above idea, but for obtaining concrete bounds as in Eq. \ref{eq:pcombsuc} and \ref{eq:siml_comb}, we need some tedious calculations. 
We give a full proof in
\ifnum\cameraready=1
the full version 
\else 
Appendix \ref{sec:siml_comb} 
\fi 
since the proof is easy and  very similar to that in \cite{STOC:BitShm20} (once we obtain Lemma \ref{lem:siml_abort} and \ref{lem:siml_nonabort}).
\end{proof}

Then, we convert $\siml_\comb$ to a full-fledged simulator that does not return $\fail$ by using the quantum rewinding lemma (Lemma \ref{lem:quantum_rewinding}).
Namely, we let $\sfQ$ be a quantum algorithm that takes $\rho_\secpar$ as input and outputs $\siml_\comb(x,1^{\epsilon^{-1}},\ver^*_\secpar, \rho_\secpar)$ where $b\defeq 0$ if and only if  it does not return $\fail$, $p_0\defeq \frac{1}{4}$, $q\defeq \frac{1}{2}$, $\gamma\defeq \delta$, and $T\defeq 2\log (1/\delta)$.
Then it is easy to check that the conditions for Lemma \ref{lem:quantum_rewinding} is satisfied by Eq. \ref{eq:pcombsuc} in Lemma \ref{lem:siml_comb} (for sufficiently large $\secpar$).
Then by using Lemma \ref{lem:quantum_rewinding}, we can see that $\sfR(1^T,\sfQ,\rho_\secpar)$ runs in time $T\cdot |\sfQ|=\poly(\secpar)$ and its output (seen as a mixed state) has a trace distance bounded by $4\sqrt{\gamma}\frac{\log(1/\gamma)}{p_0(1-p_0)}$ from  $D_{\mathsf{sim},\comb}(x,1^{\epsilon^{-1}},\ver^*_\secpar,\rho_\secpar)$.
Since we have $\gamma=\delta=\frac{\epsilon^2}{3600\log^4(\secpar)}=1/\poly(\secpar)$, we have $4\sqrt{\gamma}\frac{\log(1/\gamma)}{p_0(1-p_0)}< 30\sqrt{\gamma} \log^2 (\secpar)=\frac{\epsilon}{2}$ for sufficiently large $\secpar$
 where we used $\log(1/\gamma)=\log(\poly(\secpar))=o(\log^2(\secpar))$.
Thus, by combining the above and Eq. \ref{eq:siml_comb} in Lemma \ref{lem:siml_comb}, if we define $\siml(x,1^{\epsilon^{-1}},\ver^*_\secpar,\rho_\secpar)\defeq \sfR(1^T,\sfQ,\rho_\secpar)$, then we have 
\begin{align*}
   \OUT_{\ver^*}\execution{\pro(w)}{\ver^*_\secpar(\rho_\secpar)}(x)
    \compind_{\frac{\epsilon}{2}+4\delta} 
    \siml(x,1^{\epsilon^{-1}},\ver^*_\secpar,\rho_\secpar).
\end{align*}
We can conclude the proof of quantum $\epsilon$-zero-knowledge by noting that  we have
    $\frac{\epsilon}{2}+4\delta< \epsilon$
since we have $\delta= \frac{\epsilon^2}{3600\log^4(\secpar)} < \frac{\epsilon}{8}$.

\paragraph{Black-Box Simulation.}
Here, we explain that the simulator $\siml$ constructed as above only needs black-box access to the verifier. 
What we need to show are that $\siml$ applies the unitary part $U_{\ver^*_\secpar}$ of $\ver^*_\secpar$ and its inverse $U_{\ver^*_\secpar}^\dagger$ only as oracles and $\siml$ does not directly act on $\ver^*_\secpar$'s internal register. 
There are two parts of the construction of $\siml$ that are not obviously black-box.
The first is Step \ref{step:simna_ext} and \ref{step:simna_reinitialize} of $\siml_\nonabort$ where it runs the extraction algorithm $\ext$ of Lemma \ref{lem:extraction}, and the second is the conversion from $\siml_\comb$ to $\siml$ using $\sfR$ in Lemma \ref{lem:quantum_rewinding}.
In the following, we explain that both steps can be implemented by black-box access to the verifier.

\begin{enumerate}
    \item By Lemma \ref{lem:extraction}, $\ext$ uses the unitary part of $\A_{\open,\secpar}$ and its inverse only in a black-box manner, and they can be implemented by black-box access to $U_{\ver^*_\secpar}$ and $U_{\ver^*_\secpar}^\dagger$.
    Moreover, since register $\regst$ in the notation of   Lemma \ref{lem:extraction} corresponds to the internal register of $\ver^*_\secpar$ in our context, the lemma ensures that $\ext$ does not directly act on it.
    Also, $\siml_{\nonabort}$ need not explicitly set $\ver^*_\secpar$'s internal register to $\rho_\ext$ in Step \ref{step:simna_reinitialize}   if we do the above black-box simulation since a state in the register automatically becomes $\rho_\ext$ after the execution as stated in Lemma \ref{lem:extraction}.
    Therefore, this step can be implemented by black-box access to $\ver^*_\secpar$.
    \item Given the above observation, we now know that both $\siml_\abort$ and $\siml_\nonabort$ only need black-box access to $\ver^*_\secpar$.
    This means that $\sfQ$ only needs black-box access to $\ver^*_\secpar$.
    Since $\sfR$ only uses $\sfQ$ as oracles that perform the unitary part of $\sfQ$ and its inverse as stated in  Lemma \ref{lem:quantum_rewinding} and they can be implemented by black-box access to $\ver^*_\secpar$, 
    $\sfR$ uses $U_{\ver^*_\secpar}$ and $U_{\ver^*_\secpar}^\dagger$ only as oracles.
    Moreover, since the register $\reginp$ in Lemma \ref{lem:quantum_rewinding} corresponds to the internal register of $\ver^*_\secpar$ in our context, $\sfR$ does not directly act on it.
 \end{enumerate}
 By the above observations, we can see that the simulator $\siml$ only needs black-box access to $\ver^*_\secpar$.
 
 \subsection{Instantiation from Collapsing Hash Function}\label{sec:generalization_collapsing}
 Our construction in Figure \ref{fig:ZK_from_Collapse} is based on two building blocks: a statistically hiding and strong collapse-binding commitment scheme and a delayed-witness $\Sigma$-protocol.
 Though the former can be instantiated by a collapsing hash function, we do not know how to instantiate the latter by a collapsing hash function since it needs non-interactive commitment that is not known to be implied by  collapsing hash functions.
 However, we can just use a $4$-round version of a delayed-witness $\Sigma$-protocol where the first message ``commitment" in the $\Sigma$-protocol is instantiated based on Naor's commitments \cite{JC:Naor91} instead of a non-interactive one.
 Since Naor's commitments  can be instantiated under any OWF and collapsing hash function is trivially also one-way, we can instantiate  the 4-round version of a delayed-witness $\Sigma$-protocol based on a collapsing hash function. 
We can prove security of the construction based on 4-round version of a delayed-witness $\Sigma$-protocol in essentially the same manner as the security proofs in Sec. \ref{sec:statistical_soundness} and \ref{sec:proof_ZK_collapsing}. We also note that this does not increase the number of rounds of our construction. 
Based on these observations, we obtain Theorem \ref{thm:ZK_proof_from_collapsing}. 

\section{Post-Quantum \texorpdfstring{$\epsilon$}{epsilon}-Zero-Knowledge Argument from OWF}\label{sec:ZK_from_OWF}
In this section, we construct a constant-round $\epsilon$-zero-knowledge argument from any post-quantum OWF.
\begin{theorem}\label{thm:ZK_argument_from_OWF}
If post-quantunm OWF exists, then there exists a 9-round  post-quantum black-box $\epsilon$-zero-knowledge argument for all $\NP$ languages.
\end{theorem}

\subsection{Preparation}\label{sec:preparation}
Before giving our construction, we prepare a formalization of a variant of Blum's Graph Hamiltonicity protocol \cite{Blum86} by Pass and Wee \cite{TCC:PasWee09}. 
For clarity of exposition, we describe the construction and its properties in an abstracted form that is sufficient for our purpose.
\begin{definition}[Modified Hamiltonicity Protocol]\label{def:modified_hamiltonicity}
Let $(\Sigma.\Setup,\Sigma.\Commit)$ be a statistically binding and computationally hiding commitment scheme with message space $\calM_\Sigma$  randomness space $\calR_\Sigma$, and public parameter space $\ppspace_\Sigma$.
A modified Hamiltonicity protocol for an $\NP$ language $L$ instantiated with $(\Sigma.\Setup,\allowbreak \Sigma.\Commit)$ is a 4-round interactive proof for $\NP$  with the following syntax.

\noindent\textbf{Common Input:} An instance $x\in \lang \cap \bit^{\secpar}$ for security parameter $\secpar \in \mathbb{N}$.\\
\textbf{$\pro$'s Private Input:} A classical witness $w\in \rel_\lang(x)$ for $x$. 
\begin{enumerate}
\item $\ver$ generates $\pp_\Sigma\sample \Sigma.\Setup(1^\secpar)$ and sends $\pp_\Sigma$ to $\pro$.
\item $\pro$ generates $\{m_i\}_{i\in[\secpar]}\in \calM_\Sigma^\secpar$ by using $x$ (without using $w$).
We denote this procedure by $\{m_i\}_{i\in[\secpar]}\sample \Sigma.\Samp(x)$.
Then $\pro$ picks $r_i\sample \calR_\Sigma$ and computes $\com_i:=\Sigma.\Commit(\pp_\Sigma,m_i;r_i)$ for all $i\in [\secpar]$ and sets $a:=\{\com_i\}_{i\in [\secpar]}$ and $\st\defeq \{m_i,r_i\}_{i\in[\secpar]}$. 
Then it sends $a$ to $\ver$, and keeps $\st$ as a state information.  
\item $\ver$ chooses a ``challenge" $e\sample \bit^\secpar$ and sends $e$ to $\pro$.
\item $\pro$ generates a ``response" $z$ from $\st$, witness $w$, and $e$.   
We denote this procedure by $z\sample \Sigma.\pro_{\resp}(\st,w,e)$.
Then it sends $z$ to $\ver$.
\item $\ver$ verifies the transcript $(\pp_\Sigma,a,e,z)$ and outputs $\top$ indicating acceptance or $\bot$ indicating rejection.
We denote this procedure by $\top/\bot \sample \Sigma.\ver(x,\pp_\Sigma,a,e,z)$.
\end{enumerate}
We require the protocol to satisfy the following properties (in addition to perfect completeness and statistical soundness).
\paragraph{Special Honest-Verifier Zero-Knowledge.}
There exist PPT simulators $\SimSamp$ and $\siml_\resp$  such that we have
\begin{align*}
&\left\{\left(\{\com_i\}_{i\in[\secpar]},z\right):
\begin{array}{ll}
\{m_i\}_{i\in\secpar}\sample \Sigma.\Samp(x),\\
r_i\sample \calR_\Sigma \textrm{~for~}i\in[\secpar],\\ 
\com_i\sample \Sigma.\Commit(\pp_\Sigma,m_i,r_i)\textrm{~for~}i\in[\secpar],\\ 
z\sample \Sigma.\pro_{\resp}(\{m_i,r_i\},w,e)
\end{array}
\right\}_{\secpar,x,w,e,\pp_\Sigma}\\
\compind
&
\left\{\left(\{\com_i\}_{i\in[\secpar]},z\right):
\begin{array}{ll}
\{m_i\}_{i\in\secpar}\sample \SimSamp(x,e),\\
r_i\sample \calR_\Sigma \textrm{~for~}i\in[\secpar],\\ 
\com_i\sample \Sigma.\Commit(\pp_\Sigma,m_i,r_i)\textrm{~for~}i\in[\secpar],\\ 
z\sample \siml_\resp(\{m_i,r_i\},e)
\end{array}
\right\}_{\secpar,x,w,e,\pp_\Sigma}
\end{align*}
where $x\in \lang \cap \bit^\secpar$, $w\in \rel_\lang(x)$, $e\in \bit^\secpar$, and $\pp_\Sigma\in \ppspace_\Sigma$.
\paragraph{Bad Challenge Searchability.}
Intuitively, this property requires that for any $x\notin \lang$ and  $a$, if one is given a decommitment of $a$, one can efficiently compute a ``bad challenge" $e$ for which there may exist a valid response $z$.
Note that we do not require that a valid response exists for the bad challenge.
Rather, we require that a valid response can exist only for the bad challenge if it exists at all. 
A formal definition is given below.

There exists an efficiently computable function $f_\bad:\calM_\Sigma^\secpar \rightarrow \bit^\secpar$ such that  
for any binding $\pp_\Sigma$,\footnote{See definition \ref{def:commitment} for the definition of binding public parameters.} $x\in \bit^\secpar \setminus \lang$,
$a=\{\com_i=\Sigma.\Commit(\pp_\Sigma,m_i;r_i)\}_{i\in [\secpar]}$, $e\in \bit^\secpar\setminus \{f_\bad(\{m_i\}_{i\in[\secpar]})\}$, and $z$, 
 we have $\Sigma.\ver(x,\pp_\Sigma,a,e,z)=\bot$. 
\end{definition}
\begin{remark}
 Looking ahead, bad challenge searchability is needed for the reduction from computational soundness of our protocol to computational hiding of a commitment scheme. 
\end{remark}
 \paragraph{Instantiations.}
 The above definition is an abstraction of the modified version of Blum's Graph Hamiltonicity protocol by  Pass and Wee \cite{TCC:PasWee09}, and this construction only requires the existence of OWF.
 For completeness, we briefly explain more details.

 First, we recall (unparallelized version of) Blum's Graph Hamiltonicity protocol \cite{Blum86}.
 In the protocol, a prover is going to prove that a graph $G$ has a cycle where the prover is given a cycle $w$ as a witness.  
 In the first round, the prover picks a random permutation $\pi$, and commits to $\pi(G)$.
 In the second round, the verifier returns a challenge $e\in \bit$.
 In the third round, if $e=0$, the prover opens all commitments and sends $\pi$, and if $e=1$, the prover only opens commitments corresponding to the cycle $\pi(w)$.
 The verifier verifies the response in an obvious way.
 If we just implement the parallel version of Blum's Graph Hamiltonicity protocol by using (bit-wise) statistically binding commitments (e.g., Naor's commitments \cite{JC:Naor91}), then we can see that it already satisfies the syntactic requirements and the special honest-verifier zero-knowledge property. 
 However, it does not seem to satisfy bad challenge searchability.
 The reason is that, even if one is given a decommitment $G'$ of the commitment, there is no efficient way to check if $G$ and $G'$ are isomorphic, and thus one cannot know for which challenge bit a cheating prover may answer correctly.
To resolve this issue, the idea of Pass and Wee \cite{TCC:PasWee09} is to let the prover commit not only to $\pi(G)$, but also to $\pi$. 
In this case, 
for $G$ that does not have a cycle, 
if a decommitment is of the form $(\pi(G),\pi)$ for some permutation $\pi$, there does not exist a valid response for $e=1$ since $\pi(G)$ does not have a cycle,  and otherwise 
there clearly does not exist a valid response for $e=0$.
By instantiating the parallel repetition of this construction with a Naor's commitment, we obtain a protocol that satisfies the above requirements under the existence of OWF.



\subsection{Construction}
Our construction is inspired by that of $\cite{TCC:PasWee09}$, but it is not a simple instantiation of their construction since they rely on extractable commitments whose quantum security is unclear.\footnote{We could use the extractable commitment in \cite{STOC:BitShm20}, but that construction relies on a constant-round post-quantum zero-knowledge argument, which is stronger than our goal.}
Our idea is to replace extractable commitments in their construction with usual commitment combined with witness indistinguishable proof of knowledge.

Our construction is built on the following ingredients:
\begin{itemize}
    \item A commitment scheme $(\SBSetup,\SBCommit)$ that is computationally hiding and  statistically binding with message space $\bit^\secpar$ and randomness space $\calR$.
    As noted in Sec. \ref{sec:commitment}, such a commitment scheme exists under the existence of post-quantum OWF.
    \item A $4$-round witness indistinguishable proof of knowledge $(\wipok.\pro,\wipok.\ver)$ for an $\NP$ language $\widetilde{L}$ described in Figure \ref{fig:ZK_from_OWF}.
    As noted in Sec. \ref{sec:wipok}, this exists under the existence of post-quantum OWF.
    \item Modified Hamiltonicity protocol for an $\NP$ language $\lang$ instantiated with
    a computationally hiding and  statistically binding commitment scheme
    $(\Sigma.\Setup,\Sigma.\Commit)$ as defined in Definition \ref{def:modified_hamiltonicity}. 
    As discussed in Sec. \ref{sec:preparation}, this exists under the existence of post-quantum OWF.
    We denote by $\calM_\Sigma$ and $\calR_\Sigma$ message space and randomness space of the commitment scheme.
\end{itemize}
Then our construction of post-quantum black-box $\epsilon$-zero-knowledge argument is given in Figure \ref{fig:ZK_from_OWF}.

\protocol
{Protocol \ref{fig:ZK_from_OWF}}
{Constant-Round Post-Quantum $\epsilon$-Zero-Knowledge Argument for $\lang \in \NP$}
{fig:ZK_from_OWF}
{
\textbf{Common Input:} An instance $x\in \lang \cap \bit^{\secpar}$ for security parameter $\secpar \in \mathbb{N}$.\\
\textbf{$\pro$'s Private Input:} A classical witness $w\in \rel_\lang(x)$ for $x$. 
\begin{enumerate}
\item \textbf{$\ver$'s Commitment to Challenge:}
\begin{enumerate}
    \item $\pro$ computes $\pp\sample \SBSetup(1^\secpar)$ and sends $\pp$ to $\ver$.
    \label{step:ZK_from_OWF_send_pp}
    \item $\ver$ chooses $e\sample \bit^\secpar$ and $r\sample \randspace$, computes $\com \sample \SBCommit(\pp,e;r)$, and sends $\com$ to $\pro$.
    \label{step:ZK_from_OWF_send_com}
\end{enumerate}
\item \textbf{First Half of Modified Hamiltonicity Protocol:}
\begin{enumerate}
    \item $\ver$ generates $\pp_\Sigma\sample \Sigma.\Setup(1^\secpar)$.
    \label{step:ZK_from_OWF_send_pp_sigma}
    \item $\pro$ generates 
    $\{m_i\}_{i\in[\secpar]}\sample \Sigma.\Samp(x)$ and $\com_i:=\Sigma.\Commit(\pp_\Sigma,m_i;r_i)$ 
    where $r_i\sample \calR_\Sigma$  for all $i\in [\secpar]$.
  It  sends $a:=\{\com_i\}_{i\in[\secpar]}$ to $\ver$ and keeps $\st\defeq \{m_i,r_i\}_{i\in[\secpar]}$ as a state information.
  \label{step:ZK_from_OWF_sned_a}
\end{enumerate}
\item \textbf{Proof of Knowledge of Decommitments:}
  $\pro$ and $\ver$ interactively run the protocol $\execution{\wipok.\pro(\{m_i,r_i\}_{i\in[\secpar]})}{\wipok.\ver}(\pp_\Sigma,x,a)$ where the language $\widetilde{L}$ is defined as follows:
 \begin{align*}
&(\pp_\Sigma,x,a=\{\com_i\}_{i\in[\secpar]}) \in \widetilde{L}\\  \iff &(x\in \lang)~ \lor \\
&(\exists \{m_i,r_i\}_{i\in[\secpar]}\in (\calM_\Sigma\times \calR_\Sigma)^\secpar \text{~s.t.~} \com_i=\Sigma.\Commit(\pp_\Sigma,m_i;r_i)
\text{~for~all~} i \in [\secpar])
 \end{align*}
\item \textbf{Second Half of Modified Hamiltonicity Protocol:}
\begin{enumerate}
    \item $\ver$ sends $(e,r)$ to $\pro$.
    \label{Step_protocol_ZK_argument_open}
    \item $\pro$ aborts if $\SBCommit(\pp,e;r)\neq \com$.\\ Otherwise, it generates $z\sample \Sigma.\pro_{\resp}(\st,w,e)$ 
    and sends $z$ to $\ver$.
        \label{Step_protocol_ZK_argument_check}
    \item $\ver$ outputs $\Sigma.\ver(x,\pp_\Sigma,a,e,z)$. 
\end{enumerate}
\end{enumerate}
}

The completeness of the protocol clearly follows from that of the underlying $\Sigma$-protocol.
In Sec. \ref{sec:computational_soundness} and \ref{sec:proof_ZK_OWF}, we prove that this protocol satisfies computational soundness and quantum black-box $\epsilon$-zero-knowledge.
Then we obtain Theorem \ref{thm:ZK_argument_from_OWF}.

\subsection{Computational Soundness}\label{sec:computational_soundness}
Suppose that computational soundness does not hold.
This means that there exists a non-uniform QPT adversary $\pro^*=\{\pro^*_\secpar,\rho_\secpar\}$ and a sequence of false statements $\{x\in \bit^\secpar \setminus \lang\}$ such that $\Pr[\OUT_{\ver}\execution{\pro^*_\secpar(\rho_\secpar)}{\ver}(x_\secpar)=\top]$ is non-negligible.
We denote this probability by $p_\win$.
By an averaging argument, for at least $p_\win/2$-fraction of $(\pp,e,\com,\pp_\Sigma,\{\com_i\}_{i\in[\secpar]})$ generated in Step
\ref{step:ZK_from_OWF_send_pp}, 
\ref{step:ZK_from_OWF_send_com}, \ref{step:ZK_from_OWF_send_pp_sigma}, and
\ref{step:ZK_from_OWF_sned_a} and $\pro^*$'s internal state $\rho_{\pro^*}$ after Step \ref{step:ZK_from_OWF_sned_a}, 
we have $\Pr[\OUT_{\ver}\execution{\pro^*_\secpar(\rho_\secpar)}{\ver}(x_\secpar)=\top\mid (\pp,e,\com,\pp_\Sigma,\{\com_i\}_{i\in[\secpar]},\rho_{\pro^*})]\geq p_\win/2$  where the above probability means a conditional probability that $\ver$ returns $\top$ conditioned on $(\pp,e,\com,\pp_\Sigma,\{\com_i\}_{i\in[\secpar]},\rho_{\pro^*})$.
Moreover, by the statistical binding property of the commitment scheme, $\pp_\Sigma$ is binding except for negligible probability by Lemma \ref{lem:overwhelming_fraction_is_binding}.
Therefore, if we define a set $S$ consisting of $(\pp,e,\com,\pp_\Sigma,\{\com_i\}_{i\in[\secpar]},\rho_{\pro^*})$ such that 

\begin{enumerate}
    \item $\pp_\Sigma$ is binding, and
    \item $\Pr[\OUT_{\ver}\execution{\pro^*_\secpar(\rho_\secpar)}{\ver}(x_\secpar)=\top\mid (\pp,e,\com,\pp_\Sigma,\{\com_i\}_{i\in[\secpar]},\rho_{\pro^*})]\geq p_\win/2$, 
\end{enumerate}
then the probability that $(\pp,e,\com,\pp_\Sigma,\{\com_i\}_{i\in[\secpar]},\rho_{\pro^*})$ is in $S$ is non-negligible over the randomness of the execution of $\execution{\pro^*_\secpar(\rho_\secpar)}{\ver}(x_\secpar)$.

We fix $(\pp,e,\com,\pp_\Sigma,\{\com_i\}_{i\in[\secpar]},\rho_{\pro^*})\in S$.
Since $\pp_\Sigma$ is binding, for each $i\in[\secpar]$, there is a unique $m_i\in\calM$ such that 
$\SBCommit(\pp_\Sigma,m_i;r_i)=\com_i$ for some $r_i\in \calR$.
By the bad challenge searchability, a valid response $z$ can exist only for $e=f_\bad(\{m_i\}_{i\in[\secpar]})$. 
Therefore, for letting $\ver$ accept, we must have  $e=f_\bad(\{m_i\}_{i\in[\secpar]})$. 
Since 
$\ver$ accepts with probability 
$p_\win/2>0$, we must have $e=f_\bad(\{m_i\}_{i\in[\secpar]})$.\footnote{Strictly speaking, we may have $p_\win/2=0$ for a finite number of $\secpar$. This can be easily dealt with by considering sufficiently large $\secpar$. For simplicity, we omit this.}
Moreover we can use the knowledge extractor $\mathcal{K}$ of $\wipok$ to extract $\{m_i\}_{i\in[\secpar]}$ from $\pro^*$. 
That is, since the verification of $\wipok$ accepts with probability at least $p_\win/2$ (since otherwise the overall accepting probability should be smaller than $p_\win/2$), 
 we have 
\begin{align*}
\Pr[\widetilde{w}\in \rela_{\widetilde{\lang}}(\pp_\Sigma,x_\secpar,\{\com_i\}_{i\in\secpar}):\widetilde{w}\sample \kext^{\pro^*_\secpar(\rho_{\pro^*})}(\pp_\Sigma,x_\secpar,\{\com_i\}_{i\in[\secpar]})]
    \geq \frac{1}{\poly(\secpar)}\cdot (p_\win/2)^d - \negl(\secpar)
\end{align*}
where one can see that the RHS is non-negligible. 
Since we assume $x_\secpar \notin \lang$  and $\pp_\Sigma$ is binding, when we have $\widetilde{w}\in \rela_{\widetilde{\lang}}(\pp_\Sigma,x_\secpar,\{\com_i\}_{i\in\secpar})$, we have $\widetilde{w}=\{m_i,r'_i\}_{i\in[\secpar]}$ for some $\{r'_i\}_{i\in[\secpar]}$. 
By using this extracted witness, one can compute $e=f_\bad(\{m_i\}_{i\in[\secpar]})$.
We can use this to show contradiction to the unpredictability of the commitment scheme.

Specifically, we construct an adversary $\A=\{\A_\secpar,\rho_{\A,\secpar}\}_{\secpar \in \mathbb{N}}$ that breaks the unpredictability of $\SBCom$, which contradicts the computational hiding property as noted in Lemma \ref{lem:hiding_to_unpredictable}.
\begin{description}
\item[Advice:]
$\A$ gets an advice $\rho_{\A,\secpar}=(x_\secpar,\rho_\secpar)$
\item[$\A_\secpar(\rho_{\A,\secpar})$:]
It sets $\pro^*_\secpar$'s internal register to $\rho_\secpar$, gives $x_\secpar$ as input to $\pro^*_\secpar$, receives $\pp$ from $\pro^*_\secpar$, and sends $\pp$ to its external challenger.
Let $\com$ be challenger's response. (Here, $e$ is implicitly chosen by the challenger.)
$\A$ gives $\com$ to $\pro^*_\secpar$ as a message from $\ver$, and simulates the protocol between $\pro^*_\secpar$ and $\ver$ until Step \ref{step:ZK_from_OWF_sned_a}.
Let $\rho_{\pro^*}$ be $\pro^*_\secpar$'s internal state at this point.
Then it runs $\{m^*_i,r^*_i\}_{i\in[\secpar]}\sample \kext^{\pro^*_\secpar(\rho_{\pro^*})}(\pp_\Sigma,x_\secpar,\{\com_i\}_{i\in[\secpar]})$ computes $e^*=f_\bad(\{m^*_i\}_{i\in[\secpar]})$, and outputs $e^*$.
\end{description}
We can see that $\A$ perfectly simulates the soundness game until Step \ref{step:ZK_from_OWF_sned_a}.
Therefore, we have $(\pp,e,\com,\pp_\Sigma,\{\com_i\}_{i\in[\secpar]},\rho_{\pro^*})\in S$ with non-negligible probability, and for any fixed such values, we have $e^*=e$ with non-negligible probability as observed in the previous paragraph.
Therefore, $\A$ succeeds in finding $e$ with non-negligible probability overall, and breaks the unpredictability.

Since this contradicts the unpredictability, which follows from the  assumed computational hiding property, there does not exist non-uniform QPT $\pro^*$ that breaks soundness.

\subsection{Quantum Black-Box \texorpdfstring{$\epsilon$}{epsilon}-Zero-Knowledge}\label{sec:proof_ZK_OWF}
The  proof is similar to that in Sec. \ref{sec:proof_ZK_collapsing} except that we need to deal with WIPoK.
Similarly to the proof there, we show the quantum $\epsilon$-zero-knowledge property ignoring that the simulator should be black-box for clarity of exposition. One can see that the simulator is indeed  black-box by similar observations made at the end of Sec. \ref{sec:proof_ZK_collapsing}. 

In quantum $\epsilon$-zero-knowledge, we need to show a simulator $\siml$ that takes an accuracy parameter $1^{\epsilon^{-1}}$ as part of its input.
We assume $\epsilon(\secpar)= o(1)$ without loss of generality since the other case trivially follows from this case.
Without loss of generality, we can assume that a malicious verifier $\ver^*$ does not terminate the protocol before the prover aborts since it does not gain anything by declaring the termination.
We say that $\ver^*$ aborts if it fails to provide a valid opening $(e,r)$ to $\com$ in Step \ref{Step_protocol_ZK_argument_open} (i.e., the prover aborts in Step \ref{Step_protocol_ZK_argument_check}).

First, we construct a simulator $\siml_\comb$, which returns a special symbol $\fail$ with probability roughly $1/2$ but almost correctly simulates the output of $\ver^*_\secpar$ conditioned on that it does not return $\fail$.
The simulator $\siml_\comb$ uses simulators $\siml_\abort$ and $\siml_\nonabort$ as sub-protocols:

\begin{description}
\item[$\siml_\comb(x,1^{\epsilon^{-1}},\ver^*_\secpar,\rho_\secpar)$:]~
\begin{enumerate}
    \item Choose $\mathsf{mode}\sample \{\abort,\nonabort\}$.
    \item Run $\siml_{\mathsf{mode}}(x,1^{\epsilon^{-1}},\ver^*_\secpar,\rho_\secpar)$. 
    \item Output what $\siml_{\mathsf{mode}}$ outputs.
\end{enumerate}
\item[$\siml_\abort(x,1^{\epsilon^{-1}},\ver^*_\secpar,\rho_\secpar)$:]~
\begin{enumerate}
   \item Set $\ver^*_\secpar$'s internal state to $\rho_\secpar$.
    \item Compute $\pp\sample \SBSetup(1^\secpar)$ and send $\pp$ to $\ver^*_\secpar$.
    \item $\ver^*_\secpar$ returns $\com$ and $\pp_\Sigma$.
    \item 
    Compute
       $\{m_i\}_{i\in[\secpar]}\sample \Sigma.\Samp(x)$ and $\com_i:=\Sigma.\Commit(\pp_\Sigma,m_i;r_i)$ 
    where $r_i\sample \calR_\Sigma$  for all $i\in [\secpar]$, and
  sends $a:=\{\com_i\}_{i\in[\secpar]}$ to $\ver^*_\secpar$.
  Let $\rho_{\ver^*_\secpar}$ be $\ver^*_\secpar$'s internal state at this point.
  \item Interactively execute $\execution{\wipok.\pro(\{m_i,r_i\}_{i\in[\secpar]})}{\wipok.\ver^*_\secpar(\rho_{\ver^*_\secpar})}(\pp_\Sigma,x,a)$ 
  where $\wipok.\ver^*_\secpar$ is the corresponding part of $\ver^*_\secpar$.
    \item $\ver^*_\secpar$ returns $(e,r)$.
    \item Return $\fail$ and abort if $\SBCommit(\pp,e;r)= \com$.\\
    Otherwise, let $\ver^*_\secpar$ output the final output notifying that the prover aborts.
    \item The final output of $\ver^*_\secpar$ is treated as the output $\siml_\abort$.  
\end{enumerate}
\item[$\siml_\nonabort(x,1^{\epsilon^{-1}},\ver^*_\secpar, \rho_\secpar)$:]~
\begin{enumerate}
\item Set $\ver^*_\secpar$'s internal state to $\rho_\secpar$.
   \item Compute $\pp\sample \SBSetup(1^\secpar)$ and send $\pp$ to $\ver^*_\secpar$.
    \item $\ver^*_\secpar$ returns $\com$. 
    Let $\rho_\st$ be the internal state of $\ver^*_\secpar$ at this point.\footnote{Though $\com$ and $\pp_\Sigma$ can be sent simultaneously in the real protocol, we consider that they are sent one by one, and the state $\rho_\st$  is defined to be $\ver^*_\secpar$'s internal state in between them. We stress that this is just for convenience of the proof, and the protocol satisfies the same security even if the verifier sends $\com$ and $\pp_\Sigma$ simultaneously.}
    \label{step:arg_simna_generate_com}
    \item Compute $(e_\ext,\rho_\ext)\sample \ext(1^\secpar,1^{\delta^{-1}},x,\pp,\com,\A_{\open,\secpar},\rho_{\st})$ where $\ext$ is as in Lemma \ref{lem:extraction} for the commitment scheme $\SBCom$, $\delta\defeq \frac{\epsilon^2}{3600\log^4(\secpar)}$, and $\A=(\A_{\commit,\secpar},\A_{\open,\secpar})$ is defined as follows:
    \begin{description}
    \item[$\A_{\commit,\secpar}(\pp;\rho_\secpar)$:]
    It sets  $\ver^*_\secpar$'s internal state  to $\rho_\secpar$ and sends $\pp$ to $\ver^*_\secpar$. Let $\com$ be the response by $\ver^*_\secpar$ and $\rho_\st$ be the internal state of $\ver^*_\secpar$ at this point. 
    It outputs $(\com,\rho_\st)$. 
    \item[$\A_{\open,\secpar}(\rho_\st)$:] It
    sets $\ver^*_\secpar$'s internal state to $\rho_\st$, and receives $\pp_\Sigma$ from $\ver^*_\secpar$.
    It generates 
       $\{m_i\}_{i\in[\secpar]}\sample \Sigma.\Samp(x)$ and $\com_i:=\Sigma.\Commit(\pp_\Sigma,m_i;r_i)$ 
    where $r_i\sample \calR_\Sigma$  for all $i\in [\secpar]$ and 
    sends $a:=\{\com_i\}_{i\in[\secpar]}$ to $\ver^*_\secpar$.
    Let $\rho_{\ver^*_\secpar}$ be $\ver^*_\secpar$'s internal state at this point.
    Then it executes  $\execution{\wipok.\pro(\{m_i,r_i\}_{i\in[\secpar]})}{\wipok.\ver^*_\secpar(\rho_{\ver^*_\secpar})}(\pp_\Sigma,x,a)$ where  $\wipok.\ver^*_\secpar$ is the corresponding part of $\ver^*_\secpar$. 
    After completing the execution of WIPoK, $\ver^*_\secpar$ returns $(e,r)$. Let $\rho'_{\st}$ be the internal state of $\ver^*_\secpar$ at this point. 
    It outputs $(e,r,\out:=(a,\{m_i,r_i\}_{i\in[\secpar]}),\rho'_{\st})$.
    \end{description}
   Here, we remark that 
   $\ver^*_\secpar$'s internal register corresponds to $\regst$ and 
   $e$ corresponds to $m$ in the notation of Lemma \ref{lem:extraction}.    \label{step:arg_simna_ext}
  \item Set the verifier's internal state to $\rho_\ext$.
     \label{step:arg_simna_reinitialize}
 \item $\ver^*$ returns $\pp_\Sigma$.
 \label{step:arg_simna_pp_Sigma}
    \item 
    Compute $\{m_i\}_{i\in\secpar}\sample \SimSamp(x,e_\ext)$,
$\com_i\sample \Sigma.\Commit(\pp_\Sigma,m_i,r_i)$ where $r_i\sample \calR_\Sigma$ for all $i\in[\secpar]$, and  
$z\sample \siml_\resp(\{m_i,r_i\},e_\ext)$.
   Send $a:=\{\com_i\}_{i\in[\secpar]}$ to $\ver^*_\secpar$. \label{step:arg_simna_simlate_Sigma}
    Let $\rho_{\ver^*_\secpar}$ be $\ver^*_\secpar$'s internal state at this point.
 \item Interactively execute $\execution{\wipok.\pro(\{m_i,r_i\}_{i\in[\secpar]})}{\wipok.\ver^*_\secpar(\rho_{\ver^*_\secpar})}(\pp_\Sigma,x,a)$
  where $\wipok.\ver^*_\secpar$ is the corresponding part of $\ver^*_\secpar$.
  \label{step:arg_simna_wipok}
    \item $\ver^*_\secpar$ returns $(e,r)$.
    \label{step:arg_simna_generate_er}
    \item Return $\fail$ and abort if $e\neq e_\ext$ or $\SBCommit(\pp,e;r)\neq \com$.\\
    Otherwise, send $z$ to $\ver^*_\secpar$.
    \label{step:arg_simna_send_z}
    \item The final output of $\ver^*_\secpar$ is treated as the output $\siml_\nonabort$.   \label{step:arg_simna_final}
\end{enumerate}
\end{description}

Intuitively, $\siml_\abort$ (resp. $\siml_\nonabort$) is a simulator that simulates the verifier's view in the case that verifier aborts (resp. does not abort).

More formally, we prove the following lemmas.
\begin{lemma}[$\siml_\abort$ simulates the aborting case.]\label{lem:arg_siml_abort}
For any non-uniform QPT malicious verifier $\ver^*=\{\ver^*_\secpar,\rho_\secpar\}_{\secpar\in\mathbb{N}}$, let $\OUT_{\ver^*_{\abort}}\execution{\pro(w)}{\ver^*_\secpar(\rho_\secpar)}(x)$ be the $\ver^*_\secpar$'s final output that  is replaced with $\fail$ if $\ver^*_\secpar$ does not abort.
Then we have
\begin{align*}
  \{\OUT_{\ver^*_{\abort}}\execution{\pro(w)}{\ver^*_\secpar(\rho_\secpar)}(x)\}_{\secpar,x,w}  \equiv
  \{\siml_\abort(x,1^{\epsilon^{-1}},\ver^*_\secpar,\rho_\secpar)\}_{\secpar,x,w}.
\end{align*}
where $\secpar \in \mathbb{N}$, $x\in \lang\cap \bit^\secpar$, and $w\in \rel_\lang(x)$.
\end{lemma}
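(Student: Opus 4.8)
The plan is to observe that Lemma~\ref{lem:arg_siml_abort} is the direct analogue of Lemma~\ref{lem:siml_abort} for the OWF-based protocol, and its proof should be essentially identical: $\siml_\abort$ runs the honest prover's algorithm verbatim up to the point where the verifier is supposed to open $\com$, so as long as $\ver^*_\secpar$ aborts (i.e.\ fails to supply a valid opening $(e,r)$), the transcript and internal state that $\siml_\abort$ feeds to $\ver^*_\secpar$ are distributed exactly as in the real interaction, and the final output of $\ver^*_\secpar$ is reproduced exactly. The only extra ingredient relative to Sec.~\ref{sec:proof_ZK_collapsing} is the WIPoK subprotocol sitting between the sending of $a$ and the verifier's opening, but this does not cause any difficulty: $\siml_\abort$ simply plays the honest $\wipok.\pro$ role using the genuine decommitments $\{m_i,r_i\}_{i\in[\secpar]}$ that it itself generated, so this part of the simulated interaction is literally an honest execution.

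Concretely, I would argue as follows. Fix $\secpar\in\mathbb{N}$, $x\in\lang\cap\bit^\secpar$, $w\in\rel_\lang(x)$. Compare the real execution $\execution{\pro(w)}{\ver^*_\secpar(\rho_\secpar)}(x)$ with the run of $\siml_\abort(x,1^{\epsilon^{-1}},\ver^*_\secpar,\rho_\secpar)$ step by step. Steps~1(a)--2(b) of the protocol are performed identically in both: $\pp\sample\SBSetup(1^\secpar)$ is sent, $\ver^*_\secpar$ responds with $\com,\pp_\Sigma$, the prover/$\siml_\abort$ samples $\{m_i\}_{i\in[\secpar]}\sample\Samp(x)$ and commitments $\com_i$, sends $a=\{\com_i\}_{i\in[\secpar]}$. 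The Proof-of-Knowledge phase is run by $\siml_\abort$ exactly as the honest prover would, with the same witness $\{m_i,r_i\}_{i\in[\secpar]}$, against the same verifier subroutine $\wipok.\ver^*_\secpar$ on the same state; hence the joint distribution of the WIPoK transcript together with $\ver^*_\secpar$'s post-WIPoK internal state is identical in the two experiments. Next $\ver^*_\secpar$ returns some $(e,r)$; since everything upstream is identically distributed, the pair $(e,r)$ and the accompanying internal state are identically distributed in the real run and in the $\siml_\abort$ run. Now condition on the aborting event $\SBCommit(\pp,e;r)\neq\com$: in the real execution the honest prover aborts in Step~\ref{Step_protocol_ZK_argument_check} and notifies $\ver^*_\secpar$, which then produces its final output; $\siml_\abort$ does exactly the same and outputs that final output. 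Conditioned on the non-aborting event, $\OUT_{\ver^*_\abort}$ is defined to be $\fail$ by the statement of the lemma, and $\siml_\abort$ also returns $\fail$. In both cases the outputs coincide as distributions, which gives the claimed identity $\equiv$.

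Since this is a statement about exact (not just computational) equality of output distributions, there is really no obstacle: no security reduction is invoked, and no approximation error is incurred. The one point worth stating carefully is the bookkeeping of $\ver^*_\secpar$'s internal quantum register across the WIPoK phase — one should note that $\siml_\abort$ interacts with the verifier only through the honest $\wipok.\pro$ strategy and message exchange, so the verifier evolves under exactly the same sequence of operations (its own unitaries interleaved with the honestly generated prover messages) as in the real protocol, and therefore the reduced state on $\ver^*_\secpar$'s register at the moment it outputs $(e,r)$ is identical. Given that, the conditioning argument above goes through verbatim and Lemma~\ref{lem:arg_siml_abort} follows, mirroring the one-line proof of Lemma~\ref{lem:siml_abort}.
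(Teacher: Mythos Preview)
Your proposal is correct and takes essentially the same approach as the paper: the paper's proof is the single sentence ``Since $\siml_\abort$ perfectly simulates the real execution for $\ver^*_\secpar$ when it aborts, Lemma~\ref{lem:arg_siml_abort} immediately follows,'' and your argument is a faithful unpacking of exactly that observation, including the correct point that $\siml_\abort$ runs the honest $\wipok.\pro$ with the same witness $\{m_i,r_i\}_{i\in[\secpar]}$ that the real prover uses.
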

\begin{proof}
Since $\siml_\abort$ perfectly simulates the real execution for $\ver^*_\secpar$ when it aborts, Lemma \ref{lem:arg_siml_abort} immediately follows.
\end{proof}
\begin{lemma}[$\siml_\nonabort$ simulates the non-aborting case.]\label{lem:arg_siml_nonabort}
For any non-uniform QPT malicious verifier $\ver^*=\{\ver^*_\secpar,\rho_\secpar\}_{\secpar\in\mathbb{N}}$,
let $\OUT_{\ver^*_{\nonabort}}\execution{\pro(w)}{\ver^*_\secpar(\rho_\secpar)}(x)$ be the $\ver^*_\secpar$'s final output that is replaced with $\fail$ if $\ver^*_\secpar$ aborts.
Then we have
\begin{align*}
\{\OUT_{\ver^*_{\nonabort}}\execution{\pro(w)}{\ver^*_\secpar(\rho_\secpar)}(x)\}_{\secpar,x,w}  \compind_{\delta}
  \{\siml_\nonabort(x,1^{\epsilon^{-1}},\ver^*_\secpar, \rho_\secpar)\}_{\secpar,x,w}
\end{align*}
where $\secpar \in \mathbb{N}$, $x\in \lang \cap \bit^{\secpar}$, and $w\in \rel_\lang(x)$.
\end{lemma}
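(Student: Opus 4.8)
\textbf{Proof proposal for Lemma \ref{lem:arg_siml_nonabort}.}

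The plan is to mirror the structure of the proof of Lemma \ref{lem:siml_nonabort} in Sec. \ref{sec:proof_ZK_collapsing}, but with extra hybrids inserted to deal with the WIPoK phase. I would define a sequence of hybrid simulators $\simlnai$ (each additionally given the witness $w$) that interpolates between $\siml_\nonabort$ and the honest prover's interaction, conditioned on the verifier not aborting. The key point throughout is that $\siml_\nonabort$ and all intermediate hybrids simulate $\ver^*_\secpar$ perfectly whenever $\ver^*_\secpar$ aborts (outputting $\fail$ in that case), so all the indistinguishability reasoning concerns the non-aborting branch only.

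The hybrid chain I envision is as follows. (i) $\simlnaone$: instead of running $\siml_\resp$ and $\SimSamp$ in Step \ref{step:arg_simna_simlate_Sigma}, generate the first half of the modified Hamiltonicity protocol and the response $z$ honestly using $w$ and the extracted challenge $e_\ext$, i.e., $\{m_i\}\sample\Samp(x)$, $\com_i\sample\Sigma.\Commit(\pp_\Sigma,m_i;r_i)$, and $z\sample\Sigma.\pro_{\resp}(\{m_i,r_i\},w,e_\ext)$. Indistinguishability here follows from the special honest-verifier zero-knowledge property of the modified Hamiltonicity protocol (Definition \ref{def:modified_hamiltonicity}); note that in this hybrid the WIPoK in Step \ref{step:arg_simna_wipok} is run with the genuine decommitment witness $\{m_i,r_i\}$, which matches what is used inside the honest-verifier-ZK experiment, so the reduction is clean. (ii) $\simlnatwo$: delay the generation of $z$ to Step \ref{step:arg_simna_send_z} and compute it as $z\sample\Sigma.\pro_{\resp}(\{m_i,r_i\},w,e)$ for the verifier-supplied $e$ rather than $e_\ext$; this is only a conceptual change because the simulator outputs $\fail$ whenever $e\neq e_\ext$, exactly as in the proof of Lemma \ref{lem:siml_nonabort}. (iii) $\simlnathree$: delete the extraction step (Step \ref{step:arg_simna_ext}) and the re-initialization (Step \ref{step:arg_simna_reinitialize}), and drop the check $e\neq e_\ext$ in Step \ref{step:arg_simna_send_z}; indistinguishability here is by a straightforward reduction to the extraction lemma (Lemma \ref{lem:extraction}), exactly as in Sec. \ref{sec:proof_ZK_collapsing}, using the $\A=(\A_{\commit,\secpar},\A_{\open,\secpar})$ defined in the construction of $\siml_\nonabort$ — here one checks that the entire ``tail'' of the execution (the second $a$, the WIPoK, and the final check) is a function of the output $(\pp,\com,e,r,\out=(a,\{m_i,r_i\}),\rho'_\st)$ of the real/extraction experiment. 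The only subtlety is that $\out$ must carry the decommitments $\{m_i,r_i\}$ (so that the delayed $\Sigma.\pro_{\resp}$ and the WIPoK prover can be run), which is why $\A_{\open,\secpar}$ is defined to output them. Finally one observes that $\simlnathree$ is exactly the honest prover's interaction with $\ver^*_\secpar$ restricted to the non-aborting case, giving $\{\simlnathree\}\equiv\{\OUT_{\ver^*_\nonabort}\langle\pro(w)\rangle\}$. Summing the hybrid distances (two of which are $\compind$ with parameter $0$, one with parameter $\delta$, and one identity) yields $\compind_\delta$ overall.

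The main obstacle, and the place where this proof genuinely differs from the proof-version in Sec. \ref{sec:proof_ZK_collapsing}, is handling the WIPoK correctly inside the extraction-lemma reduction. In step (iii) the extraction lemma is invoked with an adversary $\A_{\open,\secpar}$ whose ``open'' phase internally runs the honest WIPoK prover interacting with $\ver^*_\secpar$'s WIPoK part; I need that $\A_{\open,\secpar}$ is a legitimate two-stage QPT algorithm of the form required by Definition \ref{def:extraction_experiments}, that $\ext$ accesses it only in the prescribed black-box way, and that the state $\rho'_\st$ recorded at the end of $\A_{\open,\secpar}$ (after the WIPoK and after $\ver^*_\secpar$ sends $(e,r)$) is exactly the state from which the remainder of both experiments proceeds. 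Because $\A_{\open,\secpar}$ generates its own randomness (for $\Samp$, for the $r_i$'s, and for the WIPoK prover) by the standard Hadamard-then-measure trick, and because the WIPoK prover messages are classical, this goes through, but it requires care to state precisely what is ``inside'' $\A_{\open,\secpar}$ versus what is the simulator's residual computation. A secondary point to be careful about: in hybrid (i) the honest-verifier-ZK reduction must produce the WIPoK transcript, and since in both the real and simulated $\Sigma$-sides of that experiment the WIPoK is run with a valid decommitment witness, no appeal to witness indistinguishability of the WIPoK is needed at this stage — WI of the WIPoK is instead what is implicitly being used when we later (in the overall ZK proof, combining with $\siml_\abort$ and applying Watrous rewinding) argue that the honest prover could have used either the $w$-branch or the decommitment-branch; but within Lemma \ref{lem:arg_siml_nonabort} itself, as I have set up the hybrids, the WIPoK is always run honestly with the decommitment witness on both ends, so WI is not invoked here. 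I would double-check that this is consistent with how the honest prover in Protocol \ref{fig:ZK_from_OWF} runs the WIPoK (it does: the honest prover uses $\{m_i,r_i\}$), so $\simlnathree$ really is the honest interaction.

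I expect no further difficulties: the numerical bookkeeping ($\delta$ versus $\epsilon$, the constant $1/3600$, etc.) is identical to Sec. \ref{sec:proof_ZK_collapsing} and is handled in the downstream combination lemma (the analogue of Lemma \ref{lem:siml_comb}) rather than here, so this lemma's statement just needs the clean $\compind_\delta$ bound.
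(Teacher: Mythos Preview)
Your hybrid chain has a genuine gap at step (i), where you move from $\siml_\nonabort$ (which uses $\SimSamp$ and $\siml_\resp$) to $\simlnaone$ (which uses $\Samp$ and $\Sigma.\pro_\resp$) and claim this follows directly from special honest-verifier zero-knowledge. The problem is that in both of these experiments the WIPoK prover is run with the witness $\{m_i,r_i\}_{i\in[\secpar]}$, and the special HVZK challenger in Definition~\ref{def:modified_hamiltonicity} outputs only $(\{\com_i\},z)$, \emph{not} $\{m_i,r_i\}$. So a reduction to special HVZK receives $(\{\com_i\},z)$ from its challenger, forwards $a=\{\com_i\}$ to $\ver^*_\secpar$, and then has to play the WIPoK prover---but it has no witness to use. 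You cannot patch this by asking the HVZK experiment to also output $\{m_i,r_i\}$: the distributions $\Samp(x)$ and $\SimSamp(x,e)$ on the $m_i$'s are in general efficiently distinguishable (for the modified Blum protocol they encode either a permuted graph or a cycle), so revealing $\{m_i\}$ would destroy the indistinguishability. Your remark that ``WI is not invoked here'' is thus precisely where the argument breaks.

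The paper's proof resolves this by sandwiching the HVZK step between two applications of WI: first switch the WIPoK witness from $\{m_i,r_i\}$ to $w$ (hybrid $\siml_{\nonabort,1}$), then apply special HVZK---now the reduction can run the WIPoK with $w$, which it knows, and does not need $\{m_i,r_i\}$ (hybrid $\siml_{\nonabort,2}$)---then switch the WIPoK witness back to $\{m_i,r_i\}$ (hybrid $\siml_{\nonabort,3}$). After that, your steps (ii) and (iii) proceed as you describe (these are the paper's $\siml_{\nonabort,4}$ and $\siml_{\nonabort,5}$). So your overall structure is right, but you need five hybrids, not three, and the two extra ones are exactly the WI steps you dismissed.
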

\begin{proof}
Here, we analyze $\siml_\nonabort(x,1^{\epsilon^{-1}},\ver^*_\secpar, \rho_\secpar)$.
In the following, we consider hybrid simulators $\simlnai$ for $i=1,2,3,4,5$.
We remark that they also take the witness $w$ as input unlike $\siml_\nonabort$.

\begin{description}
\item[$\simlnaone$:] This simulator works similarly to $\siml_\nonabort(x,1^{\epsilon^{-1}},\ver^*_\secpar, \rho_\secpar)$ except that in the simulation of $\wipok$ in Step \ref{step:arg_simna_wipok}, it uses witness $w$ instead of $\{m_i,r_i\}_{i\in[\secpar]}$.

By witness indistinguishability of $\wipok$,  we have 
\begin{align*}
\{\siml_\nonabort(x,1^{\epsilon^{-1}},\ver^*_\secpar, \rho_\secpar)\}_{\secpar,x,w}
\compind 
\{\simlnaone\}_{\secpar,x,w}
\end{align*}
where $\secpar\in \mathbb{N}$, $x\in \lang \cap \bit^{\secpar}$, and $w\in \rel_\lang(x)$.
\item[$\simlnatwo$:] This simulator works similarly to $\simlnaone$ except that it generates $(a=\{\com_i\}_{i\in[\secpar]},z)$  as in the real protocol for the challenge $e_\ext$
instead of using the simulator in Step \ref{step:arg_simna_simlate_Sigma}.
That is, it generates $\{m_i\}_{i\in[\secpar]}\sample \Sigma.\Samp(x)$, $\com_i:=\Sigma.\Commit(\pp_\Sigma,m_i;r_i)$  where $r_i\sample \calR_\Sigma$  for all $i\in [\secpar]$, and $z\sample \Sigma.\pro_{\resp}(\st,w,e_\ext)$ where  $\st\defeq \{m_i,r_i\}_{i\in[\secpar]}$.

By the special honest-verifier zero-knowledge property of the modified Hamiltonicity protocol, we have 
\begin{align*}
\{\simlnaone\}_{\secpar,x,w}
\compind 
\{\simlnatwo\}_{\secpar,x,w}
\end{align*}
where $\secpar\in \mathbb{N}$, $x\in \lang \cap \bit^{\secpar}$, and $w\in \rel_\lang(x)$.

\item[$\simlnathree$:] This simulator works similarly to $\simlnatwo$ except that in the simulation of $\wipok$ in Step \ref{step:arg_simna_wipok}, it uses witness $\{m_i,r_i\}_{i\in[\secpar]}$ instead of $w$.

By witness indistinguishability of $\wipok$,  we have 
\begin{align*}
\{\simlnatwo\}_{\secpar,x,w}
\compind 
\{\simlnathree\}_{\secpar,x,w}
\end{align*}
where $\secpar\in \mathbb{N}$, $x\in \lang \cap \bit^{\secpar}$, and $w\in \rel_\lang(x)$.

\item[$\simlnafour$:] This simulator works similarly to $\simlnathree$ except that the generation of $z$ is delayed until Step \ref{step:arg_simna_send_z} and it is generated as $z\sample \Sigma.\pro_{\resp}(\st,w,e)$ instead of $z\sample \Sigma.\pro_{\resp}(\st,w,e_{\ext})$.

The modification does not affect the output distribution since it outputs $\fail$ if $e\neq e_\ext$ and if $e=e_\ext$, then this simulator works in exactly the same way as the previous one.
Therefore we have 
\begin{align*}
\{\simlnathree\}_{\secpar,x,w}
\equiv
\{\simlnafour\}_{\secpar,x,w}
\end{align*}
where $\secpar\in \mathbb{N}$, $x\in \lang \cap \bit^{\secpar}$, and $w\in \rel_\lang(x)$.

\item[$\simlnafive$:] This simulator works similarly to $\simlnafour$ except that Step \ref{step:arg_simna_ext} and \ref{step:arg_simna_reinitialize} are deleted and the check of $e\neq e_\ext$ in Step \ref{step:arg_simna_send_z} is omitted.
That is, it outputs $\fail$ in  Step \ref{step:arg_simna_send_z} if and only if we have $\SBCommit(\pp,e;r)\neq \com$.
We note that $e_\ext$ and $\rho_\ext$ are no longer used at all and thus need not be generated.

We can see that 
Step \ref{step:arg_simna_generate_com} is exactly the same as executing $(\com,\rho_\st)\sample \A_{\commit,\secpar}(\pp;\rho_\secpar)$  and
Step 
\ref{step:arg_simna_pp_Sigma},
\ref{step:arg_simna_simlate_Sigma},
\ref{step:arg_simna_wipok}, and 
\ref{step:arg_simna_generate_er} of previous and this simulators are exactly the same as executing 
$(e,r,\out=(a,\{m_i,r_i\}_{i\in[\secpar]}),\rho'_\st)\sample \A_{\open,\secpar}(\rho_\ext)$ and
$(e,r,\out=(a,\{m_i,r_i\}_{i\in[\secpar]}),\rho'_\st)\sample \A_{\open,\secpar}(\rho_\st)$, respectively where we define $\rho'_\st$ in simulated experiments as $\ver^*_\secpar$'s internal state after Step \ref{step:arg_simna_generate_er}. 
Moreover, the rest of execution of the simulators can be done given $(\pp,\com,e,r,\out=(a,\st),\rho'_\st)$.  
Therefore, by a straightforward reduction to Lemma \ref{lem:extraction}, we have 
\begin{align*}
\{\simlnafour\}_{\secpar,x,w}
\statind_\delta
\{\simlnafive\}_{\secpar,x,w}
\end{align*}
where $\secpar\in \mathbb{N}$, $x\in \lang \cap \bit^{\secpar}$, and $w\in \rel_\lang(x)$.
\end{description}

We can see that $\simlnafive$ perfectly simulates the real execution for $\ver^*_\secpar$ and outputs $\ver^*_\secpar$'s output conditioned on that $\ver^*_\secpar$ does not abort, and just outputs $\fail$ otherwise.
Therefore, we have 
\begin{align*}
\{\simlnafive\}_{\secpar,x,w} \equiv \{\OUT_{\ver^*_{\nonabort}}\execution{\pro(w)}{\ver^*_\secpar(\rho_\secpar)}(x)\}_{\secpar,x,w} 
\end{align*}
where $\secpar\in \mathbb{N}$, $x\in \lang \cap \bit^{\secpar}$, and $w\in \rel_\lang(x)$. 
Combining the above, Lemma \ref{lem:arg_siml_nonabort} is proven.
\end{proof}

By combining  Lemmas \ref{lem:arg_siml_abort} and \ref{lem:arg_siml_nonabort}, we can prove the following lemma.

\begin{lemma}[$\siml_\comb$ simulates $\ver^*_\secpar$'s output with probability almost $1/2$]\label{lem:arg_siml_comb}
For any non-uniform QPT malicious verifier $\ver^*=\{\ver^*_\secpar,\rho_\secpar\}_{\secpar\in\mathbb{N}}$, 
let $\pcombsuc$ be the probability that $\siml_\comb(x,1^{\epsilon^{-1}},\ver^*_\secpar, \rho_\secpar)$ does not return $\fail$ and $D_{\mathsf{sim},\comb}( x,1^{\epsilon^{-1}},\ver^*_\secpar,\rho_\secpar)$ be a conditional  distribution of $\siml_\comb(x,1^{\epsilon^{-1}},\ver^*_\secpar, \rho_\secpar)$, conditioned on that it does not return $\fail$.
There exists a negligible function $\negl$ such that for any $x=\{x_\secpar \in \lang \cap \bit^\secpar\}_{\secpar \in \mathbb{N}}$,  we have 
\begin{align}
    \left|\pcombsuc-1/2\right|\leq \delta/2+\negl(\secpar). \label{eq:arg_pcombsuc}
\end{align}
Moreover, we have 
\begin{align}
   \{\OUT_{\ver^*}\execution{\pro(w)}{\ver^*_\secpar(\rho_\secpar)}(x)\}_{\secpar,x,w}
    \compind_{4\delta} 
    \{D_{\mathsf{sim},\comb}(x,1^{\epsilon^{-1}},\ver^*_\secpar,\rho_\secpar)\}_{\secpar,x,w} \label{eq:arg_siml_comb}
\end{align}
where $\secpar\in \mathbb{N}$, $x\in \lang \cap \bit^{\secpar}$, and $w\in \rel_\lang(x)$. 
\end{lemma}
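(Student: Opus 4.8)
The plan is to reduce the claim to the two one-sided simulators of Lemma~\ref{lem:arg_siml_abort} and Lemma~\ref{lem:arg_siml_nonabort} by the same abstract ``combine an aborting-case and a non-aborting-case simulator'' argument used for Lemma~\ref{lem:siml_comb} and carried out in Appendix~\ref{sec:siml_comb}. In fact the guarantees proved there for $\siml_\abort$ and $\siml_\nonabort$ have exactly the same shape as in the collapsing-hash proof, so the calculation transfers essentially verbatim and the write-up can largely cite it; I sketch the steps nonetheless.

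First I would fix notation. Let $\alpha=\alpha(\secpar,x,w)$ be the probability that $\ver^*_\secpar$ aborts in the real interaction $\execution{\pro(w)}{\ver^*_\secpar(\rho_\secpar)}(x)$, and write the real output state as $\rho_\real=\alpha\,\rho_\abort+(1-\alpha)\,\rho_\nonabort$, where $\rho_\abort$ (resp.\ $\rho_\nonabort$) is $\OUT_{\ver^*_\secpar}$ conditioned on $\ver^*_\secpar$ aborting (resp.\ not aborting). By Lemma~\ref{lem:arg_siml_abort}, $\siml_\abort$ returns a non-$\fail$ value with probability exactly $\alpha$ and, conditioned on that, returns exactly $\rho_\abort$. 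By Lemma~\ref{lem:arg_siml_nonabort}, since the predicate ``the output equals $\fail$'' is efficiently checkable, the probability $q$ that $\siml_\nonabort$ returns non-$\fail$ satisfies $|q-(1-\alpha)|\le\delta+\negl(\secpar)$; let $\sigma_\nonabort$ denote $\siml_\nonabort$'s output conditioned on non-$\fail$. Because $\siml_\comb$ runs $\siml_\abort$ or $\siml_\nonabort$ with probability $1/2$ each, $p_\comb^{\mathsf{suc}}=\tfrac12(\alpha+q)$, hence $|p_\comb^{\mathsf{suc}}-1/2|=\tfrac12|q-(1-\alpha)|\le\delta/2+\negl(\secpar)$, which is Eq.~\ref{eq:arg_pcombsuc}.

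For Eq.~\ref{eq:arg_siml_comb} I would observe that $D_{\mathsf{sim},\comb}$, the output of $\siml_\comb$ conditioned on non-$\fail$, is the mixture $\tfrac{\alpha}{\alpha+q}\rho_\abort+\tfrac{q}{\alpha+q}\sigma_\nonabort$, and compare it with $\rho_\real=\alpha\,\rho_\abort+(1-\alpha)\,\rho_\nonabort$ by the triangle inequality, bounding two error sources: (i) the mixing weight $\tfrac{q}{\alpha+q}$ of the non-abort branch differs from $1-\alpha$ by $O(\delta)+\negl$, using $|q-(1-\alpha)|\le\delta+\negl$ together with $\alpha+q\ge1-\delta-\negl$; and (ii) conditioned on non-$\fail$, $\sigma_\nonabort$ is computationally $(\delta+\negl)$-close to $\rho_\nonabort$. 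For (ii) the point is that the $\fail$ outcome is orthogonal to, and efficiently distinguishable from, every non-$\fail$ outcome, so Lemma~\ref{lem:arg_siml_nonabort} implies that the sub-normalized ``non-$\fail$ parts'' of $\siml_\nonabort$ and of $\OUT_{\ver^*_\nonabort}\execution{\pro(w)}{\ver^*_\secpar(\rho_\secpar)}(x)$ — which have weights $q$ and $1-\alpha$ respectively, both $\approx 1-\alpha$ — are $(\delta+\negl)$-close as operators; rescaling gives the closeness of the normalized states. Summing the contributions, and using $\negl(\secpar)\ll\delta$ and $\delta=o(1)$ for sufficiently large $\secpar$, yields $\rho_\real\compind_{4\delta}D_{\mathsf{sim},\comb}$, i.e.\ Eq.~\ref{eq:arg_siml_comb}.

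The only subtle point is (ii): because the conditioning event ``$\ver^*_\secpar$ does not abort'' may have small probability $1-\alpha$, one cannot condition naively and must work with sub-normalized states, so that the small weight on the non-abort branch simultaneously damps the (a priori possibly large) conditional error and makes the reweighting from (i) small relative to $\delta$. Everything else is elementary bookkeeping with trace distance and convex combinations, identical to the proof of Lemma~\ref{lem:siml_comb}; consequently in the final write-up I would simply invoke that proof, stressing that it uses only the abstract guarantees recorded in Lemmas~\ref{lem:arg_siml_abort} and~\ref{lem:arg_siml_nonabort}.
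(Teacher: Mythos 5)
Your proposal is correct and follows essentially the same route as the paper: the paper's proof of Lemma \ref{lem:arg_siml_comb} is literally a citation of the proof of Lemma \ref{lem:siml_comb} in Appendix \ref{sec:siml_comb}, which performs exactly the computation you sketch (deriving the fail-probability bound from Lemmas \ref{lem:arg_siml_abort} and \ref{lem:arg_siml_nonabort}, then a short hybrid/reweighting argument over the two branches, handling the non-abort branch via a non-uniform reduction because the closeness there is only computational). Your identification of the sub-normalization issue in step (ii) is precisely the point the appendix addresses with its averaging argument over a fixed advice state, so nothing is missing.
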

\begin{proof}
The proof of this lemma is exactly the same as that of Lemma \ref{lem:siml_comb} in Appendix \ref{sec:siml_comb}.
\end{proof}

Finally, we convert $\siml_\comb$ to a full-fledged simulator that does not return $\fail$ by using the quantum rewinding lemma (Lemma \ref{lem:quantum_rewinding}).
This part is exactly the same as that in Sec. \ref{sec:proof_ZK_collapsing}.
Finally, we can see that our simulator is black-box similarly to the last paragraph of Sec. \ref{sec:proof_ZK_collapsing}.
This completes the proof of quantum black-box $\epsilon$-zero-knowledge property.
\else 

\fi

\section*{Acknowledgement}

NHC's research is support by
the U.S. Department of Defense and NIST through the Hartree Postdoctoral Fellowship at QuICS and by NSF through IUCRC Planning Grant Indiana University: Center for Quantum Technologies (CQT) under award number 2052730. KMC's research is partially supported by MOST, Taiwan, under Grant no. MOST 109-2223-E-001-001-MY3 and Executive Yuan Data Safety and Talent Cultivation Project (ASKPQ-109-DSTCP).

\ifnum\submission=1
\bibliographystyle{myalpha}
\else
\bibliographystyle{alpha}
\fi
\bibliography{abbrev3,crypto,reference}

\ifnum\cameraready=0
\appendix
\ifnum\submission=0
\else
\newpage
 	\setcounter{page}{1}
 	{
	\noindent
 	\begin{center}
	{\Large SUPPLEMENTAL MATERIALS}
	\end{center}
 	}
	\setcounter{tocdepth}{2}
\fi
\ifnum\submission=0 
\section{Omitted Preliminaries}\label{app:omitted_definition}
\ifnum\submission=1
\subsection{One-Way Functions and Collapsing Hash Functions}\label{app:omitted_definition_OWF_Collapsing}
\fi
We define post-quantum one-way functions and collapsing hash functions.
\begin{definition}[Post-Quantum One-Way Functions.]
We say that a function $f:\bit^*\rightarrow \bit^*$ is post-quantum one-way function if $f$ is computable in classical polynomial time and for any non-uniform QPT adversary $\A$, we have 
\[
\Pr[f(x)=f(x'):x\sample \bit^{\secpar}, x'\sample \A(f(x))]=\negl(\secpar).
\]
\end{definition}
\begin{definition}[Collapsing Hash Function.]\label{def:collapsing}
A length-decreasing function family $\mathcal{H}=\{H_k:\bit^{L}\rightarrow \bit^\ell\}_{k\in \mathcal{K}}$ for $L>\ell$ is collapsing if the following is satisfied:
\item \textbf{Collapsing.}
For an adversary $\A$, we define an experiment $\Exp_{\A}^{\mathsf{collapse}}(1^\secpar)$ as follows:
\begin{enumerate}
    \item The challenger generates $k\sample \mathcal{K}$.
    \item $\A$ is given $k$ as input and generates a hash value $y\in \bit^\ell$ and a quantum state $\sigma$ over registers $(\regX,\regA)$ where
    $\regX$ stores an element of $\bit^L$ and $\regA$ is $\A$'s internal register.
    Then it sends $y$ and register $\regX$ to the challenger, and keeps $\regA$ on its side.
    \item The challenger picks $b\sample \bit$ If $b=0$, the challenger does nothing and if $b=1$, the challenger measures register $\regX$ in the computational basis.
    The challenger returns register $\regX$ to $\A$
    \label{step:collapsing_game_measure}
    \item $\A$ outputs a bit $b'$. The experiment outputs $1$ if $b'=b$ and $0$ otherwise. 
\end{enumerate}
We say that $\A$ is a valid adversary if we we have 
\[
\Pr[H_k(x)=y:k\sample\mathcal{K},(y,\sigma)\sample \A(k),x\leftarrow M_{\regX}\circ \sigma]=1.
\]
We say that a  hash function is collapsing if for any non-uniform QPT valid adversary $\A$ we have
\[
|\Pr[1\sample \Exp_{\A}^{\mathsf{collapsing}}(1^\secpar)]-1/2|= \negl(\secpar).
\]
\end{definition}

As shown in \cite{AC:Unruh16}, a collapsing hash function with arbitrarily long (or even unbounded) input-length exists under the QLWE assumption.

\ifnum\submission=1 
\subsection{Commitments}\label{sec:appendix_commitment}
We give definitions of commitments and their security.
Though mostly standard, we introduce one new security notion which we call \emph{strong collapse-binding}, which is a stronger variant of  collapse-biding introduced by Unruh \cite{EC:Unruh16}.
As shown in Appendix \ref{sec:strong_collapse_binding}, we can see that Unruh's construction of collapse-binding commitments actually also satisfies strong collapse-binding with almost the same (or even simpler) security proof. 

\begin{definition}[Commitment.]\label{def:commitment}
A (two-message) commitment scheme   with message space $\calM$, randomness space $\calR$, commitment space $\COM$, and a public parameter space $\ppspace$  consists of two classical PPT algorithms $(\Setup,\Commit)$:
\begin{description}
\item[$\Setup(1^\secpar)$:] The setup algorithm takes the security parameter $1^\secpar$ as input and outputs a public parameter $\pp\in \ppspace$.
\item[$\Commit(\pp,m)$:] The committing algorithm takes a public parameter $\pp\in \ppspace$ and a message $m\in \calM$ as input and outputs a commitment $\com\in \COM$. 
\end{description}
We say that a commitment scheme is non-interactive if a public parameter $\pp$ generated by $\Setup(1^\secpar)$ is always just the security parameter $1^\secpar$. For such a scheme, we omit to write $\Setup$.

We define the following security notions for a commitment scheme.

\paragraph{Statistical/Computational Hiding.}
For an adversary $\A$, we consider an experiment $\Exp_{\A}^{\hiding}(1^\secpar)$ defined below:
\begin{enumerate}
    \item $\A$ is given the security parameter $1^\secpar$ and sends a (possibly malformed) public parameter $\pp\in\ppspace$ and $(m_0,m_1)\in \calM^2$ to the challenger 
    \item The challenger randomly picks $b\sample \bit$, computes $\com\sample \Commit(\pp,m_b)$, and sends $\com$ to $\A$. 
    \item $\A$ is given a commitment $\com$ and outputs $b'\in \bit$.
    The experiment outputs $1$ if $b=b'$ and $0$ otherwise.
\end{enumerate}
We say that a commitment scheme satisfies statistical (resp. computational) hiding if for any unbounded-time (resp. non-uniform QPT) adversary $\A$, we have
\begin{align*}
    |\Pr[1\sample \Exp_{\A}^{\hiding}(1^\secpar)]-1/2|=\negl(\secpar).
\end{align*}
\begin{remark}
Our definition of hiding requires that the security should hold even if $\pp$ is maliciously generated. Thus, the hiding property holds even if a receiver  runs the setup algorithm. 
\end{remark}

\paragraph{Binding.}
\begin{itemize}
\item \textbf{Perfect/Statistical/Computational Binding.}
We say that a non-interactive commitment scheme satisfies statistical (resp. computational) binding if for any unbounded-time (resp. non-uniform QPT) adversary $\A$, we have  
\ifnum\submission=0
\begin{align*}
    \Pr[\Commit(\pp,m;r)=\Commit(\pp,m';r')\land m\neq m' : \pp \sample \Setup(1^\secpar),(m,m',r,r')\sample \A(\pp)]=\negl(\secpar).
\end{align*}
\else
\begin{align*}
    \Pr\left[
    \begin{array}{ll}
     &\Commit(\pp,m;r)=\Commit(\pp,m';r')\\
     &\land~ m\neq m' 
    \end{array}
    : 
    \begin{array}{ll}
    &\pp \sample \Setup(1^\secpar),\\
    &(m,m',r,r')\sample \A(\pp)
    \end{array}
    \right]=\negl(\secpar).
\end{align*}
\fi
We say that a scheme satisfies perfect binding if the above probability is $0$ for all unbounded-time adversary $\A$.

\item \textbf{Strong Collapse-Binding.}
For an adversary $\A$, we define an experiment $\Exp_{\A}^{\clbinding}(1^\secpar)$ as follows:
\begin{enumerate}
    \item The challenger generates $\pp\sample \Setup(1^\secpar)$.
    \item $\A$ is given the public parameter $\pp$ as input and generates a commitment $\com\in\COM$ and a quantum state $\sigma$ over registers $(\regM,\regR,\regA)$ where
    $\regM$ stores an element of $\calM$, $\regR$ stores an element of $\calR$, and $\regA$ is $\A$'s internal register.
    Then it sends $\com$ and registers $(\regM,\regR)$ to the challenger, and keeps $\regA$ on its side.
    \item The challenger picks $b\sample \bit$ If $b=0$, the challenger does nothing and if $b=1$, the challenger measures registers $(\regM,\regR)$ in the computational basis.
    The challenger returns registers $(\regM,\regR)$ to $\A$
    \label{step:collapse_binding_game_measure}
    \item $\A$ outputs a bit $b'$. The experiment outputs $1$ if $b'=b$ and $0$ otherwise. 
\end{enumerate}
We say that $\A$ is a valid adversary if we we have 
\[
\Pr[\Commit(\pp,m;r)=\com:\pp\sample\setup(1^\secpar),(\com,\sigma)\sample \A(\pp),(m,r)\leftarrow M_{\regM,\regR}\circ \sigma]=1.
\]
We say that a  commitment is strongly collapse-binding if for any non-uniform QPT valid adversary $\A$,  we have
\[
|\Pr[1\sample \Exp_{\A}^{\clbinding}(1^\secpar)]-1/2|= \negl(\secpar).
\]
\end{itemize}
\begin{remark}\label{rem:statistical_to_collapse_binding}
The difference of strong collapse-binding  from the original collapse-binding  is that the challenger measures both registers $(\regM,\regR)$ in Step \ref{step:collapse_binding_game_measure} in the case of $b=1$ whereas the challenger of the original collapse-binding game only measures $\regM$.
We note that the statistical binding property immediately implies the (original) collapse-binding property, but it does not imply the strong collapse-binding property.
\end{remark}
\begin{remark}\label{rem:collapse_to_computational_binding}
One can easily see that the strong collapse-binding property implies the computational binding property. 
Indeed, if one can find $(m,r)\neq (m',r')$ such that $\Commit(\pp,m;r)=\Commit(\pp,m';r')=\com$, then we can break the strong collapse-binding property by sending 
$\com$ and
$\ket{\psi}:=\frac{1}{\sqrt{2}}(\ket{m,r}_{\regM,\regR}+\ket{m',r'}_{\regM,\regR})$ to the challenger and performing a measurement $(\ket{\psi}\bra{\psi},I-\ket{\psi}\bra{\psi})$ on the returned state to distinguish if the state is measured.
\end{remark}
\end{definition}

We introduce the following definition for convenience.
\begin{definition}[Binding Public Parameter.]\label{def:binding_pp}
We say that $\pp\in \ppspace$ is \emph{binding}
if for any commitment $\com\in \COM$, there is at most one $m\in\calM$ such that $\Commit(\pp,m;r)=\com$ for some $r\in \calR$.
\end{definition}
The following lemma is easy to see.
\begin{lemma}\label{lem:overwhelming_fraction_is_binding}
If a commitment scheme is statistically binding, then overwhelming fraction of $\pp$ generated by $\Setup(1^\secpar)$ is binding.
\end{lemma}

We also consider an additional security definition.
\begin{definition}[Unpredictability.]\label{def:unpredictability}
For an adversary $\A$, we consider an experiment $\Exp_{\A}^{\mathsf{unpre}}(1^\secpar)$ defined below:
\begin{enumerate}
    \item $\A$ is given the security parameter $1^\secpar$ and sends a (possibly malformed) public parameter $\pp\in\ppspace$ to the challenger.
    \item The challenger randomly picks $m\sample \calM$, computes $\com\sample \Commit(\pp,m)$, and sends $\com$ to $\A$. 
    \item $\A$ returns $m^*$.
    The experiment outputs $1$ if $m=m^*$ and $0$ otherwise.
\end{enumerate}
We say that a commitment scheme is  unpredictable if for any non-uniform QPT adversary $\A$, we have
\begin{align*}
    \Pr[1\sample \Exp_{\A}^{\mathsf{unpre}}(1^\secpar)]=\negl(\secpar).
\end{align*}
\end{definition}
The following lemma is a folklore, and easy to prove.
\begin{lemma}\label{lem:hiding_to_unpredictable}
If a commitment scheme is computationally hiding and $|\calM|=2^{\omega(\secpar)}$, then the scheme is unpredictable. 
\end{lemma}

\paragraph{Instantiations.}
A computationally hiding and statistically binding commitment scheme exists under the existence of OWF \cite{JC:Naor91,SICOMP:HILL99}. 
A computationally  hiding and perfectly binding non-interactive commitment scheme exists under  the QLWE assumption \cite{TCC:GHKW17,ePrint:LomSch19}.

A statistically hiding and strong collapse-binding commitment scheme exists assuming the existence of collapsing hash functions (and thus under the QLWE assumption) \cite{EC:Unruh16,AC:Unruh16}. 
This can be seen by observing that the proof of  (original) collapse-binding property from collapsing hash functions in \cite{EC:Unruh16,AC:Unruh16} already implicitly proves the strong collapse-binding property.
For completeness, we give a proof in  Appendix \ref{sec:strong_collapse_binding}.

\subsection{Witness Indistinguishable Proof of Knowledge}\label{sec:wipok}
\begin{definition}[Witness Indistinguishable Proof of Knowledge]\label{def:WIPoK}
A witness indistinguishable proof of knowledge for an $\NP$ language $\lang$ is an interactive proof for $\lang$ that satisfies the following properties (in addition to perfect completeness and statistical soundness):
\paragraph{Witness Indistinguishability.}
For any non-uniform QPT malicious verifier $\ver^*$, we have 
\begin{align*}
\{\OUT_{\ver^*}\execution{\pro(w_0)}{\ver^*}(x)\}_{\secpar,x,w_0,w_1}\compind \{\OUT_{\ver^*}\execution{\pro(w_1)}{\ver^*}(x)\}_{\secpar,x,w_0,w_1}
\end{align*}
where $\secpar\in \mathbb{N}$, $x\in \lang \cap \bit^\secpar$,  and $w_0,w_1\in \rel_\lang(x)$.
\paragraph{(Non-Adaptive) Knowledge Extractability.}
There is an oracle-aided QPT algorithm $\kext$, a polynomial $\poly$, a negligible function $\negl$, and a constant $d\in \mathbb{N}$ such that for any quantum unbounded-time  malicious prover $\pro^*=\{\pro^*_\secpar,\rho_\secpar\}_{\secpar\in \mathbb{N}}$, $\secpar \in \mathbb{N}$, and $x\in \bit^\secpar$, we have   
\begin{align*}
\Pr[w\in \rel_L(x):w\sample \kext^{\pro^*_\secpar(\rho_\secpar)}(x)]
    \geq \frac{1}{\poly(\secpar)}\cdot \Pr[\OUT_{\ver}\execution{\pro^*_\secpar(\rho_\secpar)}{\ver}(x)=\top]^d - \negl(\secpar).
\end{align*}
\end{definition}
\paragraph{Instantiations.}
We can construct a constant round (actually $4$-round) witness indistinguishable proof of knowledge for $\NP$ only assuming the existence of post-quantum OWFs by some tweak of existing works \cite{EC:Unruh12,EC:Unruh16}. We briefly explain this below.

A constant round witness indistinguishable proof of knowledge that satisfies the above requirements was first constructed by Unruh \cite{EC:Unruh12} based on \emph{strict-binding commitments}, where a commitment perfectly binds not only a message but also randomness. 
Due to the usage of strict-binding commitment, an instantiation of this protocol requires one-to-one OWF, for which there is no post-quantum candidate under standard assumptions. 
Later, Unruh \cite{EC:Unruh16} proved that the protocol in \cite{EC:Unruh12} can be instantiated using collapse-binding commitments instead of strict-binding commitments if we relax the knowledge extractability requirement to computational one.
Since the statistical binding property trivially implies the collapse-binding property as noted in Remark \ref{rem:statistical_to_collapse_binding}, we can just use statistically binding commitments as  collapse-binding commitments in the construction of \cite{EC:Unruh16}.
Moreover, since the statistically binding property can be seen as a ``statistical version" of collapse-binding, we obtain statistical knowledge extractability.\footnote{If one is not convinced by this informal explanation, one can think of our claim as the existence of a constant round witness indistinguishable \emph{argument} of knowledge for $\NP$ under the existence of OWF. Actually, this computational version of knowledge extractability suffices for the purpose of this paper.} 
In summary, the construction in \cite{EC:Unruh16} instantiated with statistically binding (and computational hiding) commitments suffices for our purpose.

We also give another more concrete explanation.
The protocol in \cite{EC:Unruh12} is a modification of Blum's Graph Hamiltonicity protocol \cite{Blum86}. For proving the knowledge extractability, Unruh introduced a rewinding technique that enables the extractor to run a prover twice for different challenges.
In his extraction strategy, the extractor records \emph{both committed messages and randomness} when it runs the prover for the first time. 
For ensuring that this does not collapse the prover's state too much, he assumed the strict-binding property.
Here, we observe that the extractor actually need not record both committed message and randomness, and it only need to record the committed message. (Indeed, the security proof in \cite{EC:Unruh16} does so).
In this case, the statistical binding property instead of the strict-binding property suffices to ensure that the prover's state is not collapsed too much since the randomness register is not measured by the extractor.
\fi
\section{Construction of Strong Collapse-Binding Commitments}\label{sec:strong_collapse_binding}
In this section, we show that (bounded-length) Halevi-Micali commitments \cite{C:HalMic96,EC:Unruh16} satisfies the strong collapse-binding property if we instantiate it based on a collapsing hash function.

\paragraph{Construction.}
In the following, we give a description of (bounded-length) Halevi-Micali commitments.
Let $\mathcal{H}=\{H_k:\bit^{L}\rightarrow \bit^\ell\}_{k\in \mathcal{K}}$ be a family of collapsing hash functions and 
$\mathcal{F}$ be a family of universal hash functions $f:\bit^L\rightarrow \bit^n$ where $L=4\ell+2n+4$.
Then Halevi-Micali commitments over message space $\bit^n$ is described as follows:
\begin{description}
\item[$\Setup(1^\secpar)$:] 
It chooses $k\sample \mathcal{K}$ and outputs $\pp:=k$.
\item[$\Commit(\pp=k,m)$:] 
It picks $f\sample \mathcal{F}$ and $r\sample \bit^L$ conditioned on that $f(r)=m$, computes $y:=H_k(r)$,  and outputs $\com:=(y,f)$.
\end{description}
This completes the description of the scheme.
In the following, we prove security.
\paragraph{Statistical Hiding.}
This proof is completely identical to that in \cite{C:HalMic96}.
\paragraph{Strong Collapse-Binding.}
Suppose that there exists a non-uniform QPT adversary $\A$ that breaks the strong collapse-binding property of the above construction.
Then we construct non-uniform QPT $\B$ that breaks the collapsing property of $\mathcal{H}$ as follows:
\begin{description}
\item[$\B(k)$:]
Given $k\in \mathcal{K}$, it sets $\pp:=k$, sends $\pp$ to $\A$ as input, and receives $(\com=(y,f),\regM,\regR)$ from $\A$.
Then $\B$ sends $(y,\regR)$ to its external challenger where $\regR$ plays the role of $\regX$ in the collapsing game.
Then it receives the register $\regR$ (which is either measured or not) returned from the challenger and sends $(\regM,\regR)$ to $\A$ as a response from the challenger.
Finally, when $\A$ outputs $b'$, then $\B$ also outputs $b'$.
\end{description}
First, if $\A$ is valid, then $\B$ is also valid since if $(m,r)$ is a valid opening for $\com=(y,f)$, then we have $H_k(r)=y$.
 Moreover, 
 since $(\regM,\regR)$ contains a superposition of valid openings to $\com$ and the value of $r$ completely determines $m$ for a valid opening $(m,r)$,  
 measuring register $\regR$ is equivalent to measuring both registers $(\regM,\regR)$.
 Based on this observation, we can see that cases of $b=0$ and $b=1$ for the strong collapse-binding and collapsing games perfectly match. Therefore $\B$ breaks the collapsing with the same advantage as $\A$ breaks the strong collapse-binding. 

\begin{remark}
By a similar proof to that in \cite{EC:Unruh16,AC:Unruh16}, we can also prove the strong collapse-binding property for the unbounded-length version.
We omit this since this is not needed for our purpose.
\end{remark}

\section{Equivalence between Definitions of Zero-Knowledge}\label{app:equivalence_ZK}
Here, we introduce a seemingly stronger definition of quantum black-box $\epsilon$-zero-knowledge than Definition \ref{def:post_quantum_ZK}, 
which captures entanglement between auxiliary input of a verifier and a distinguisher.
Then we show that they are actually equivalent. 

We define a seemingly stronger definition which we call \emph{quantum black-box $\epsilon$-zero-knowledge with entanglement}.
\begin{definition}[Post-Quantum Black-Box  $\epsilon$-Zero-Knowledge with Entanglement]\label{def:post_quantum_ZK_entanglement}
For an interactive proof or argument for $\lang$, we define the following property:
\paragraph{Quantum Black-Box $\epsilon$-Zero-Knowledge with Entanglement.}
There exists an oracle-aided QPT simulator $\siml$ that satisfies the following:
For any sequences of polynomial-size quantum circuits (referred to as malicious verifiers) $\ver^*=\{\ver^*_\secpar\}_{\secpar\in \mathbb{N}}$  that takes $x$ as input and a  state in a register $\regV$ as advice and any noticeable function $\epsilon(\secpar)$, 
we define quantum channels $\Psi_{\real,\secpar,x,w}^{\ver^*}$ and $\Psi_{\siml,\secpar,x}^{\ver^*}$ as follows:
\begin{description}
\item[$\Psi_{\real,\secpar,x,w}^{\ver^*}$:] Take a state $\sigma$ in the register $\regV$ as input and output $\OUT_{\ver^*_\secpar}\execution{\pro(w)}{\ver^*_\secpar(\sigma)}(x)$.
\item[$\Psi_{\siml,\secpar,x}^{\ver^*}$:]
Take a state $\sigma$ in the register $\regV$ as input and output
$\OUT_{\ver^*_\secpar}(\siml^{\ver^*_\secpar(\sigma)}(x,1^{\epsilon^{-1}}))$.
\end{description}
Then for any sequence of polynomial-size states  $\{\rho_\secpar\}_{\secpar\in \mathbb{N}}$ in registers $\regV$ and any additional register $\regR$, we have
\[
\{(\Psi_{\real,\secpar,x,w}^{\ver^*}\otimes I_{\regR})(\rho_\secpar)\}_{\secpar,x,w}
\compind_{\epsilon}
\{(\Psi_{\siml,\secpar,x}^{\ver^*}\otimes I_{\regR})(\rho_\secpar)\}_{\secpar,x,w}
\]
where $\secpar \in \mathbb{N}$, $x\in \lang\cap \bit^\secpar$, $w\in \rel_\lang(\secpar)$.
\end{definition}

\begin{lemma}\label{lem:equivalence_ZK}
If an interactive proof or argument satisfies
quantum black-box $\epsilon$-zero-knowledge (Definition \ref{def:post_quantum_ZK}), then it also satisfies  quantum black-box  $\epsilon$-zero-knowledge with entanglement (Definition \ref{def:post_quantum_ZK_entanglement}).
\end{lemma}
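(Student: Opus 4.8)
\textbf{Proof plan for Lemma \ref{lem:equivalence_ZK}.}

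The plan is to reduce the entangled-distinguisher setting to the unentangled one by exploiting the \emph{black-box} nature of the simulator. The key observation is that, because $\siml$ accesses $\ver^*_\secpar$ only as an oracle (applying the unitary part of $\ver^*_\secpar$ and its inverse, never touching the internal register $\regV$ directly), we can ``absorb'' the extra register $\regR$ and the entangled state $\rho_\secpar$ into a \emph{new} malicious verifier. Concretely, given the original verifier family $\ver^* = \{\ver^*_\secpar\}$ and the entangled advice $\rho_\secpar$ over $\regV \otimes \regR$, define a new verifier $\widetilde{\ver}^* = \{\widetilde{\ver}^*_\secpar\}$ whose advice is $\rho_\secpar$ viewed as a state over the single register $\widetilde{\regV} \defeq \regV \otimes \regR$, and whose unitary part acts exactly as $\ver^*_\secpar$ on the $\regV$ part of $\widetilde{\regV}$ (together with the message register) and as the identity on the $\regR$ part. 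Since $\widetilde{\ver}^*_\secpar$ is still a polynomial-size quantum circuit, it is a legitimate malicious verifier in the sense of Definition \ref{def:post_quantum_ZK}.

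The key steps, in order, would be: (1) Formally construct $\widetilde{\ver}^*_\secpar$ and its advice $\widetilde{\rho}_\secpar \defeq \rho_\secpar$ (reinterpreted over $\widetilde{\regV}$), and check it satisfies the syntactic requirements of a non-uniform QPT verifier. (2) Observe that running the honest prover against $\widetilde{\ver}^*_\secpar$ with advice $\widetilde{\rho}_\secpar$ produces exactly the state $(\Psi_{\real,\secpar,x,w}^{\ver^*} \otimes I_{\regR})(\rho_\secpar)$, because the prover's messages only interact with the message register and the $\regV$-part, leaving $\regR$ untouched and entangled as before; i.e.\ $\OUT_{\widetilde{\ver}^*_\secpar}\execution{\pro(w)}{\widetilde{\ver}^*_\secpar(\widetilde{\rho}_\secpar)}(x) = (\Psi_{\real,\secpar,x,w}^{\ver^*} \otimes I_{\regR})(\rho_\secpar)$. (3) Observe the analogous identity on the simulated side: since $\siml$ is black-box, $\siml^{\widetilde{\ver}^*_\secpar(\widetilde{\rho}_\secpar)}(x, 1^{\epsilon^{-1}})$ applies the oracle $U_{\widetilde{\ver}^*_\secpar}$ and its inverse, which by construction act as $U_{\ver^*_\secpar}$ tensored with identity on $\regR$; hence the output equals $(\Psi_{\siml,\secpar,x}^{\ver^*} \otimes I_{\regR})(\rho_\secpar)$. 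This is exactly where black-box-ness is essential: a non-black-box simulator could read the description of $\ver^*_\secpar$ but would then also have to be handed $\regR$ somehow, and the reduction would not go through cleanly. (4) Apply Definition \ref{def:post_quantum_ZK} to $\widetilde{\ver}^*$ and $\epsilon$ to conclude $\{\OUT_{\widetilde{\ver}^*_\secpar}\execution{\pro(w)}{\widetilde{\ver}^*_\secpar(\widetilde{\rho}_\secpar)}(x)\}_{\secpar,x,w} \compind_\epsilon \{\OUT_{\widetilde{\ver}^*_\secpar}(\siml^{\widetilde{\ver}^*_\secpar(\widetilde{\rho}_\secpar)}(x,1^{\epsilon^{-1}}))\}_{\secpar,x,w}$, and substitute the two identities from steps (2) and (3) to obtain the desired conclusion of Definition \ref{def:post_quantum_ZK_entanglement}.

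One subtlety to address carefully: Definition \ref{def:post_quantum_ZK} quantifies the simulator before the verifier, so we need the \emph{same} $\siml$ to work for all $\widetilde{\ver}^*$, which is exactly what Definition \ref{def:post_quantum_ZK} gives us — the $\siml$ guaranteed there is used unchanged. Also, one should note that the advice register in Definition \ref{def:post_quantum_ZK} is allowed to be an arbitrary polynomial-size state, so enlarging $\regV$ to $\regV \otimes \regR$ is permitted as long as $\regR$ is polynomial-size, which it is by hypothesis. I expect the main (minor) obstacle to be purely notational: carefully matching up the register conventions of Unruh's oracle-machine formalism \cite{EC:Unruh12} so that ``$\siml$ does not act on $\regV$'' translates to ``$\siml$ does not act on $\regV \otimes \regR$'' for the reinterpreted verifier — i.e.\ verifying that the black-box access pattern is preserved under the tensoring-with-identity construction. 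Once that bookkeeping is in place, the proof is essentially a one-line substitution argument, so there is no deep technical difficulty here; the content is entirely in recognizing that black-box simulation is ``oblivious'' to side registers.
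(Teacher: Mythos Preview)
Your proposal is correct and follows essentially the same approach as the paper: define an augmented verifier $\widetilde{\ver}^*_\secpar$ that carries $\regR$ as part of its advice but acts as the identity on it, observe that both the real and simulated executions against $\widetilde{\ver}^*_\secpar$ equal the corresponding channels tensored with $I_{\regR}$ (the latter precisely because the black-box simulator never touches registers the verifier itself does not touch), and apply Definition~\ref{def:post_quantum_ZK} to $\widetilde{\ver}^*$. The paper's proof is a terser version of exactly this argument, and its closing remark even echoes your observation that full black-box-ness is more than is strictly needed---only the property that $\siml$ does not act on untouched registers is used.
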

\begin{proof}
Suppose that we have sequences $\{\ver^*_\secpar\}_{\secpar\in \mathbb{N}}$ and $\{\rho_\secpar\}_{\secpar\in \mathbb{N}}$ as in the definition of quantum $\epsilon$-zero-knowledge with entanglement in Definition \ref{def:post_quantum_ZK_entanglement}.
For each $\secpar$, we consider a modified circuit $\widetilde{\ver}^*_\secpar$ that works similarly to $\ver^*_\secpar$ except that it also takes a state on the register $\regR$ as part of its advice but does not touch $\regR$ at all.  
Then clearly we have 
\[
(\Psi_{\real,\secpar,x,w}^{\ver^*}\otimes I_{\regR})(\rho_\secpar)
=
\OUT_{\widetilde{\ver}^*_\secpar}\execution{\pro(w)}{\widetilde{\ver}^*_\secpar(\rho_\secpar)}(x).
\]
By quantum black-box $\epsilon$-zero-knowledge (Definition
\ref{def:post_quantum_ZK}), 
there is a QPT simulator $\siml$ such that we have
\[
\{\OUT_{\widetilde{\ver}^*_\secpar}\execution{\pro(w)}{\widetilde{\ver}^*_\secpar(\rho_\secpar)}(x)\}_{\secpar,x,w}\compind_\epsilon \{\OUT_{\widetilde{\ver}^*_\secpar}(\siml^{\widetilde{\ver}^*_\secpar(\rho_\secpar)}(x,1^{\epsilon^{-1}}))\}_{\secpar,x,w}
\]
where $\secpar \in \mathbb{N}$, $x\in \lang\cap \bit^\secpar$, $w\in \rel_\lang(\secpar)$.
By the definition of $\widetilde{\ver}^*_\secpar(\rho_\secpar)$, it does not act on register $\regR$ and thus $\siml^{\widetilde{\ver}^*_\secpar(\rho_\secpar)}$ does not act on $\regR$ either.
Therefore, we have 
\[
\OUT_{\widetilde{\ver}^*_\secpar}(\siml^{\widetilde{\ver}^*_\secpar(\rho_\secpar)}(x,1^{\epsilon^{-1}}))=(\siml^{\ver^*_\secpar(\cdot)}(x,1^{\epsilon^{-1}})\otimes I_\regR)(\rho_\secpar).
\]
By combining the above, we can conclude that the protocol also satisfies the quantum $\epsilon$-zero-knowledge with entanglement  
w.r.t. the same simulator $\siml$.
\end{proof}
\begin{remark}
In the above proof, we do not use the full power of black-box simulation. Indeed, it suffices to assume a very mild condition that a simulator does not act on any register on which the verifier does not act. 
We also remark that the same proof works for equivalence between quantum black-box zero-knowledge and quantum black-box zero-knowledge with entanglement (which can be defined analogously).
\end{remark}

\section{Proof of Lemma \ref{lem:amplification}}\label{sec:proof_amplification}
For proving Lemma \ref{lem:amplification}, we first introduce the following lemma taken from \cite{NWZ09}, which is an easy consequence of Jordan's lemma.

\begin{lemma}[{\cite[Section 2.1]{NWZ09}}]\label{lem:decomposition}
Let $\Pi_0$ and $\Pi_1$ be projectors on an $N$-dimensional Hilbert space $\hil$. Then there is an orthogonal decomposition of $\hil$ into two-dimensional subspaces $S_j$ for $j\in[\ell]$ and $N-2\ell$ one-dimensional subspaces $T_j^{(bc)}$ for $b,c\in \bit$ that satisfies the following properties:
\begin{enumerate}
\item For each two-dimensional subspace $S_j$, there exist two orthonormal bases $(\ket{\alpha_j},\ket{\alpha_j^{\bot}})$ and $(\ket{\beta_j},\ket{\beta_j^{\bot}})$ of $S_j$ such that 
\begin{align*}
    \Pi_0\ket{\alpha_j}=\ket{\alpha_j},~~~ \Pi_0\ket{\alpha_j^{\bot}}=0,\\
    \Pi_1\ket{\beta_j}=\ket{\beta_j},~~~ \Pi_1\ket{\beta_j^{\bot}}=0.
\end{align*}
Moreover, if we let 
\begin{align*}
    p_j\defeq \bra{\alpha_j}\Pi_1 \ket{\alpha_j},
\end{align*}
then we have $0<p_i<1$ and
\begin{align*}
\ket{\alpha_j}=\sqrt{p_j}\ket{\beta_j}+\sqrt{1-p_j}\ket{\beta_j^{\bot}},
\ket{\beta_j}=\sqrt{p_j}\ket{\alpha_j}+\sqrt{1-p_j}\ket{\alpha_j^{\bot}}.
\end{align*}
\label{item:decomposition_two_dimensional}
\item  Each one-dimensional subspace $T_j^{(bc)}$ is spanned by a unit vector $\ket{\alpha_j^{(bc)}}$ such that $\Pi_0 \ket{\alpha_j^{(bc)}}=b \ket{\alpha_j^{(bc)}}$ and $\Pi_1 \ket{\alpha_j^{(bc)}}=c \ket{\alpha_j^{(bc)}}$.
\label{item:decomposition_one_dimensional}
\end{enumerate}
\end{lemma}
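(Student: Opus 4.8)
This is the classical fact (Jordan's lemma, going back to Jordan 1875) that two orthogonal projectors can be simultaneously block‑diagonalized into blocks of dimension one and two, and the plan is to reproduce the standard spectral‑decomposition argument. Write $\mathcal{V}_0:=\Pi_0\hil$ and $\mathcal{V}_1:=\Pi_1\hil$ for the ranges. First I would peel off the four mutually orthogonal ``degenerate'' subspaces $\mathcal{V}_0\cap\mathcal{V}_1$, $\mathcal{V}_0\cap\mathcal{V}_1^\perp$, $\mathcal{V}_0^\perp\cap\mathcal{V}_1$, $\mathcal{V}_0^\perp\cap\mathcal{V}_1^\perp$; each is invariant under both $\Pi_0$ and $\Pi_1$, which act on it as $0$ or the identity, so arbitrary orthonormal bases of these four subspaces furnish the one‑dimensional pieces $T_j^{(bc)}$ with the claimed eigenvalue behavior. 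Let $R$ be the orthogonal complement in $\hil$ of these four spaces; since each of them is invariant under the Hermitian operators $\Pi_0,\Pi_1$, so is $R$, and on $R$ the two projectors are in ``generic position'': $\mathcal{V}_0\cap R$ meets neither $\mathcal{V}_1\cap R$ nor $\mathcal{V}_1^\perp\cap R$ nontrivially, since any common vector would lie in one of the peeled‑off spaces and hence be orthogonal to $R$, and symmetrically with the roles of $\Pi_0,\Pi_1$ swapped.

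\textbf{The two‑dimensional blocks.} On $R$ I would consider the Hermitian, positive‑semidefinite operator $M:=\Pi_0\Pi_1\Pi_0$ restricted to the invariant subspace $\mathcal{V}_0\cap R$. Its spectrum lies in the open interval $(0,1)$: an eigenvector with eigenvalue $1$ would satisfy $\bra{v}\Pi_1\ket{v}=\ipro{v}{v}$, forcing $\ket{v}\in\mathcal{V}_0\cap\mathcal{V}_1$, and eigenvalue $0$ would force $\ket{v}\in\mathcal{V}_0\cap\mathcal{V}_1^\perp$ — both excluded in $R$. For each eigenvalue $\lambda$ fix an orthonormal eigenbasis $\{\ket{\alpha_j}\}$ and set $\ket{\beta_j}:=\lambda^{-1/2}\Pi_1\ket{\alpha_j}$ and $S_j:=\mathrm{span}\{\ket{\alpha_j},\ket{\beta_j}\}$. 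Using $\Pi_0\ket{\alpha_j}=\ket{\alpha_j}$ one gets the key identity $\Pi_0\Pi_1\ket{\alpha_j}=\lambda\ket{\alpha_j}$, hence $\Pi_0\ket{\beta_j}=\sqrt{\lambda}\,\ket{\alpha_j}$, $\Pi_1\ket{\alpha_j}=\sqrt{\lambda}\,\ket{\beta_j}$, $\Pi_1\ket{\beta_j}=\ket{\beta_j}$; so $\ket{\beta_j}$ is a unit vector in $\mathcal{V}_1$, $S_j$ is two‑dimensional (as $\ipro{\alpha_j}{\beta_j}=\sqrt{\lambda}<1$) and invariant under $\Pi_0$ and $\Pi_1$. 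A short computation using $\bra{\alpha_j}\Pi_1\ket{\alpha_k}=\bra{\alpha_j}M\ket{\alpha_k}$ shows the $S_j$ are pairwise orthogonal, across equal and distinct eigenvalues alike. Inside $S_j$, defining $\ket{\alpha_j^\perp},\ket{\beta_j^\perp}$ as the complementary unit vectors, invariance forces $\Pi_0\ket{\alpha_j^\perp}=0$ and $\Pi_1\ket{\beta_j^\perp}=0$, because the range of $\Pi_0$ inside $S_j$ is exactly $\mathrm{span}\{\ket{\alpha_j}\}$ — otherwise $S_j\subseteq\mathcal{V}_0$ would put $\ket{\beta_j}$ in $\mathcal{V}_0\cap\mathcal{V}_1=\{0\}$. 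Setting $p_j:=\bra{\alpha_j}\Pi_1\ket{\alpha_j}=\lambda\in(0,1)$ and fixing the phases of $\ket{\alpha_j^\perp},\ket{\beta_j^\perp}$ yields the change‑of‑basis relations $\ket{\alpha_j}=\sqrt{p_j}\ket{\beta_j}+\sqrt{1-p_j}\ket{\beta_j^\perp}$ and $\ket{\beta_j}=\sqrt{p_j}\ket{\alpha_j}+\sqrt{1-p_j}\ket{\alpha_j^\perp}$.

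\textbf{Completeness.} Finally I would check that $\bigoplus_j S_j=R$. The $\ket{\alpha_j}$ span $\mathcal{V}_0\cap R$ by construction. The $\ket{\beta_j}$ are orthonormal in $\mathcal{V}_1\cap R$, and running the symmetric argument with $\Pi_1\Pi_0\Pi_1$ on $\mathcal{V}_1\cap R$ shows that $\ket{\alpha}\mapsto\Pi_1\ket{\alpha}$ is an injection $\mathcal{V}_0\cap R\to\mathcal{V}_1\cap R$ with an injection in the reverse direction, so $\dim(\mathcal{V}_0\cap R)=\dim(\mathcal{V}_1\cap R)$ and the $\ket{\beta_j}$ in fact span $\mathcal{V}_1\cap R$. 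Any $\ket{w}\in R$ orthogonal to both $\mathcal{V}_0\cap R$ and $\mathcal{V}_1\cap R$ then satisfies $\|\Pi_0\ket{w}\|^2=\bra{w}\Pi_0\ket{w}=0$ (since $\Pi_0\ket{w}\in\mathcal{V}_0\cap R$ by invariance) and likewise $\Pi_1\ket{w}=0$, so $\ket{w}\in\mathcal{V}_0^\perp\cap\mathcal{V}_1^\perp$, a subspace orthogonal to $R$, whence $\ket{w}=0$. Thus $(\mathcal{V}_0\cap R)+(\mathcal{V}_1\cap R)=R=\bigoplus_j S_j$, and together with the orthonormal splitting of the four peeled‑off subspaces into $N-2\ell$ one‑dimensional $T_j^{(bc)}$'s this exhausts $\hil$.

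\textbf{Main obstacle.} There is no deep obstacle here — the statement is classical and is quoted from \cite{NWZ09}. The only genuinely fiddly points are the phase conventions needed to make the square roots in the change‑of‑basis relations non‑negative reals simultaneously for $\ket{\alpha_j^\perp}$ and $\ket{\beta_j^\perp}$, and the dimension count establishing that the $\ket{\beta_j}$ exhaust $\mathcal{V}_1\cap R$ (equivalently, that $R$ contains no vector in the joint kernel of $\Pi_0$ and $\Pi_1$). Both are routine, and in the write‑up one may alternatively just invoke \cite{NWZ09} and omit the argument entirely.
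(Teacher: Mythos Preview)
Your proof is correct and is the standard spectral argument for Jordan's lemma. Note, however, that the paper does not actually prove this lemma: it is stated as a quotation from \cite[Section 2.1]{NWZ09} and used as a black box, so there is no ``paper's own proof'' to compare against beyond the citation you already anticipated.
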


Then we prove Lemma \ref{lem:amplification}.
\begin{proof}(of Lemma \ref{lem:amplification}.)
We define projections $\Pi_0$ and $\Pi_1$ over $\hil=\hil_\regX\times \hil_\regY$ as
\begin{align*}
\Pi_0:=I_{\regX} \ot (\ket{0}\bra{0})_{\regY}, \Pi_1:=\Pi,
\end{align*} 
and apply Lemma \ref{lem:decomposition} for them.
In the following, we use the notations in Lemma \ref{lem:decomposition} for this particular application.
We define 
\[
S_{<t}:=\left(\bigoplus_{j:p_j<t} S_j\right) \oplus \left( \bigoplus_{j,b} T_j^{(b0)}\right)
\]
and 
\[
S_{\geq t}:=\left(\bigoplus_{j:p_j\geq t} S_j\right) \oplus \left(\bigoplus_{j,b} T_j^{(b1)}\right).
\]
Then it is easy to see that they are an orthogonal decomposition of $\hil$.
Since each subspace is invariant under
$\Pi_0=I_{\regX} \ot (\ket{0}\bra{0})_{\regY}$ and 
$\Pi_1=\Pi$ by the Item \ref{item:decomposition_two_dimensional} of Lemma \ref{lem:decomposition},  Item \ref{item:amplification_invariance_projection} of Lemma \ref{lem:amplification} immediately follows.

For any quantum state $\ket{\phi}_{\regX}$ such that $\ket{\phi}_{\regX}\ket{0}_{\regY} \in S_{<t}$, we can write 
\begin{align*}
\ket{\phi}_{\regX}\ket{0}_{\regY}=\sum_{j:p_j<t} d_j \ket{\alpha_j}+ \sum_{j}  d_j^{(10)} \ket{\alpha_{j}^{(10)}}
\end{align*}
by using $d_j \in \mathbb{C}$ and $d_j^{(10)}\in \mathbb{C}$ for each $j$ such that $\sum_{j:p_j<t} |d_j|^2+ \sum_{j}  |d_j^{(10)}|^2=1$.
 
Then we have
\begin{align*}
\bra{\phi}_{\regX}\bra{0}_{\regY} \Pi \ket{\phi}_{\regX}\ket{0}_{\regY}
&=\sum_{j:p_j<t} |d_j|^2 \bra{\alpha_j} \Pi \ket{\alpha_j}+ \sum_{j}  |d_j^{(10)}|^2 \bra{\alpha_j^{(10)}} \Pi \ket{\alpha_{j}^{(10)}}\\
&=\sum_{j:p_j<t} |d_j|^2 p_j\\
&<t.
\end{align*}
Similarly, for any $\ket{\phi}_{\regX}$ such that $\ket{\phi}_{\regX}\ket{0}_{\regY} \in S_{\geq t}$, we can show
\begin{align*}
\bra{\phi}_{\regX}\bra{0}_{\regY} \Pi \ket{\phi}_{\regX}\ket{0}_{\regY}
\geq t.
\end{align*}
This completes the proof of Item \ref{item:amplification_success_probability} of Lemma \ref{lem:amplification}.

For proving the Item \ref{item:amplification} and \ref{item:amplification_efficiency} of Lemma \ref{lem:amplification}, we first consider an algorithm $\widetilde{\Amp}$  described as follows:
\begin{description}
\item[$\widetilde{\Amp}(1^T,\ket{\psi}_{\regX,\regY})$:] 
This algorithm takes a repetition parameter $T$ and a quantum state $\ket{\psi}_{\regX,\regY}\in \hil$ as input and works as follows:\footnote{Strictly speaking, we need to consider descriptions of quantum circuits to perform measurements $\{\Pi_0, I_{\regX,\regY}-\Pi_0\}$  and $\{\Pi_1, I_{\regX,\regY}-\Pi_1\}$ as part of its input so that we can make the description of $\widetilde{\Amp}$ independent on them. (Looking ahead, this is needed for showing Item \ref{item:amplification_efficiency} in Lemma \ref{lem:amplification} where $\Amp$ is required to be a uniform QPT machine.) We omit to explicitly write them in the input of $\widetilde{\Amp}$ for notational simplicity.}
\begin{enumerate}
\item Repeat the following $T$ times:
\begin{enumerate}
\item Perform a measurement $\{\Pi_1, I_{\regX,\regY}-\Pi_1\}$ If the outcome is $1$, i.e., if $\Pi_1$ is applied, then output the state in the registers $(\regX,\regY)$ and a classical bit $b=1$ indicating a success and immediately halt.  
\item Perform a measurement $\{\Pi_0, I_{\regX,\regY}-\Pi_0\}$.
\end{enumerate}
\item Output the state in the registers $(\regX,\regY)$ and a classical bit $b=0$ indicating a failure.
\end{enumerate}
\end{description}

Then we prove the following claim:
\begin{myclaim}\label{claim:amp}
The following hold
\begin{enumerate}
\item For any quantum state $\ket{\psi}_{\regX,\regY}$, if we run $(\ket{\psi'}_{\regX,\regY},b)\sample \widetilde{\Amp}(1^T,\ket{\psi}_{\regX,\regY})$ and we have $b=1$, then $\ket{\psi'}_{\regX,\regY}$ is in the span of $\Pi_1$ with probability $1$.
\label{item:amp_map_to_Pi}
\item 
\label{item:amp_amplification}
\revise{For any noticeable function $\nu=\nu(\secpar)$, there is $T=\poly(\secpar)$ such that} 
for any quantum state $\ket{\phi}_{\regX}$ such that $\ket{\phi}_{\regX}\ket{0}_{\regY} \in S_{\geq t}$, we have 
\begin{align*}
\Pr[b=1: (\ket{\psi'}_{\regX,\regY},b)\sample \widetilde{\Amp}(1^T,\ket{\phi}_{\regX}\ket{0}_{\regY})]\geq \revise{1- \nu}.
\end{align*}
\item 
\label{item:amp_invariance}
$S_{<t}$ and $S_{\geq t}$ are invariant under $\widetilde{\Amp}(1^T,\cdot)$. 
 More precisely, 
for any  quantum state $\ket{\psi_{<t}}_{\regX,\regY}\in  S_{<t}$ if we run $(\ket{\psi'}_{\regX,\regY},b)\sample \widetilde{\Amp}(1^T,\ket{\psi_{<t}}_{\regX,\regY})$, then we have   $\ket{\psi'}_{\regX,\regY}\in S_{<t}$ with probability 1.
Similarly, for any  quantum state $\ket{\psi_{\geq t}}_{\regX,\regY}\in  S_{\geq t}$ if we run $(\ket{\psi'}_{\regX,\regY},b)\sample \widetilde{\Amp}(1^T,\ket{\psi_{\geq t}}_{\regX,\regY})$, then we have   $\ket{\psi'}_{\regX,\regY}\in S_{\geq t}$ with probability 1.
\end{enumerate}
\end{myclaim}
\begin{proof}(of Claim \ref{claim:amp}.)
Item \ref{item:amp_map_to_Pi} immediately follows from the description of $\widetilde{\Amp}$ since it returns $b=1$ only after succeeding in applying $\Pi_1$.   
Item \ref{item:amp_invariance} is easy to see noting that $\widetilde{\Amp}$ just sequentially applies measurements $\{\Pi_1, I_{\regX,\regY}-\Pi_1\}$ and $\{\Pi_0, I_{\regX,\regY}-\Pi_0\}$ over the registers $\regX$ and $\regY$ and each subspace $S_j$ or $T_j^{(bc)}$ is invariant under these measurements by Lemma \ref{lem:decomposition}.
In the following, we prove Item \ref{item:amp_amplification}.
We note that essentially the same statement was proven in \cite{TCC:ChiChuYam20}.
We include a proof for completeness.

For any quantum state $\ket{\phi}_{\regX}$ such that $\ket{\phi}_{\regX}\ket{0}_{\regY} \in S_{\geq t}$, we can write 
\begin{align*}
\ket{\phi}_{\regX}\ket{0}_{\regY}=\sum_{j:p_j\geq t} d_j \ket{\alpha_j}+ \sum_{j}  d_j^{(11)} \ket{\alpha_{j}^{(11)}}
\end{align*}
by using $d_j \in \mathbb{C}$ and $d_j^{(11)}\in \mathbb{C}$ for each $j$ such that $\sum_{j:p_j\geq t} |d_j|^2+ \sum_{j}  |d_j^{(11)}|^2=1$.
Since $\widetilde{\Amp}$ just sequentially applies measurements $\{\Pi_1, I_{\regX,\regY}-\Pi_1\}$ and $\{\Pi_0, I_{\regX,\regY}-\Pi_0\}$ over the registers $\regX$ and $\regY$ and each subspace $S_j$ or $T_j^{(bc)}$ is invariant under these measurements by Lemma \ref{lem:decomposition}, states in different subspaces do not interfere with each other. Therefore, it suffices to prove Item \ref{item:amp_amplification} of  \ref{claim:amp} assuming that 
$\ket{\phi}_{\regX}\ket{0}_{\regY}=\ket{\alpha_j}$ for some $j$ such that $p_j \geq t$ or $\ket{\phi}_{\regX}\ket{0}_{\regY}=\ket{\alpha_j^{(11)}}$ for some $j$.

The latter case is easy: If $\ket{\phi}_{\regX}\ket{0}_{\regY}=\ket{\alpha_j^{(11)}}$ for some $j$, then $\widetilde{\Amp}(1^T,\ket{\alpha_j^{(11)}})$ outputs $(\ket{\alpha_j^{(11)}},b=1)$ and halts at the very first step with probability $1$ since we have $\Pi_1\ket{\alpha_j^{(11)}}=\ket{\alpha_j^{(11)}}$ by Lemma \ref{lem:decomposition}.

In the following, we analyze the case of  $\ket{\phi}_{\regX}\ket{0}_{\regY}=\ket{\alpha_j}$ for some $j$ such that $p_j \geq t$.
For $k\in\mathbb{N}$, let  $P_k$ and $P_k^\bot$ be the probability that $\widetilde{\Amp}(1^k,\ket{\alpha_j})$ and $\widetilde{\Amp}(1^k,\ket{\alpha_j^\bot})$ succeed, respectively. (We define $P_0=P_0^\bot:=0$.)
By using 
\begin{align*}
\ket{\alpha_j}=\sqrt{p_j}\ket{\beta_j}+\sqrt{1-p_j}\ket{\beta_j^{\bot}},
\ket{\beta_j}=\sqrt{p_j}\ket{\alpha_j}+\sqrt{1-p_j}\ket{\alpha_j^{\bot}},
\end{align*}
we can see that we have
\begin{align*}
&P_{k+1}=p_j+(1-p_j)^2 P_{k}+ (1-p_j)p_j P_{k}^\bot, \\
&P_{k+1}^\bot=(1-p_j)+ p_j(1-p_j) P_{k}+ p_j^2 P_{k}^\bot.
\end{align*}

Solving this, we have 
\begin{align*}
P_T=1-(1-2p_j+2p_j^2)^{T-1}(1-p_j)  
\end{align*}
for $T\geq 1$. 
\revise{Since $p_j\ge t$ and $\nu$ are noticeable, we can take $T=\poly(\secpar)$ in such a way that $P_T\ge 1-\nu$.}
This completes the proof of Claim \ref{claim:amp}.
\end{proof}

We define an algorithm $\Amp$ as a purified version of $\widetilde{\Amp}$.
That is, $\Amp$ works similarly to $\widetilde{\Amp}$ except that intermediate measurement results are stored in designated registers in  $\reganc$ without being measured and the output $b$ is stored in register $\regB$. 
Let $U_{\amp,T}$ be the unitary part of $\Amp(1^T,\cdot)$.
Then Item \ref{item:amplification} of Lemma \ref{lem:amplification} directly follows from the corresponding statements of Claim \ref{claim:amp}.
Finally, $\Amp$ clearly runs in QPT by the definition, and thus Item \ref{item:amplification_efficiency} of Lemma \ref{lem:amplification} follows.
\end{proof}

\section{Proof of Lemma \ref{lem:siml_comb}}\label{sec:siml_comb}
Here, we give a proof of Lemma \ref{lem:siml_comb}.
\begin{proof}(of Lemma \ref{lem:siml_comb}).
We consider the following probabilities:
\begin{description}
\item[$\psuc$:] A probability that $\ver^*_\secpar$ does not abort in an execution $\execution{\pro(w)}{\ver^*_\secpar(\rho_\secpar)}(x)$.
\item[$\pasuc$:] A probability that $\siml_\abort(x,1^{\epsilon^{-1}},\ver^*_\secpar,\rho_\secpar)$ does not return $\fail$.
\item[$\pnasuc$:] A probability that $\siml_\nonabort(x,1^{\epsilon^{-1}},\ver^*_\secpar,\rho_\secpar)$ does not return $\fail$.
\end{description}
Lemma \ref{lem:siml_abort} and \ref{lem:siml_nonabort} immediately imply that there is a negligible function $\negl$ such that for all $x=\{x_\secpar\in \lang \cap \bit^\secpar\}_{\secpar \in \mathbb{N}}$ we have
\begin{align}
    \pasuc=\psuc \label{eq:pasuc}
\end{align}
and  
\begin{align}
    \left|\pnasuc-(1-\psuc)\right|\leq \delta+\negl(\secpar). \label{eq:pnasuc}
\end{align}
By the construction of $\siml_\comb(x,\ver^*_\secpar,\rho_\secpar)$, we have
\begin{align}
    \pcombsuc
    =\frac{1}{2}\left(\pasuc+\pnasuc\right). \label{eq:pcombsuc_average}
\end{align}
Combining Eq. \ref{eq:pasuc}, \ref{eq:pnasuc}, and \ref{eq:pcombsuc_average}, we obtain Eq. \ref{eq:pcombsuc}.

For proving Eq. \ref{eq:siml_comb}, We consider the following distributions:
\begin{description}
\item[$D_{\mathsf{real},\abort}( x,w,1^{\epsilon^{-1}},\ver^*_\secpar,\rho_\secpar)$]:
A conditional distribution of $\OUT_{\ver^*_\secpar}\execution{\pro(w)}{\ver^*_\secpar(\rho_\secpar)}(x)$, conditioned on that $\ver^*_\secpar$ aborts.
\item[$D_{\mathsf{real},\nonabort}( x,w,1^{\epsilon^{-1}},\ver^*_\secpar,\rho_\secpar)$]: 
A conditional distribution of $\OUT_{\ver^*_\secpar}\execution{\pro(w)}{\ver^*_\secpar(\rho_\secpar)}(x)$, conditioned on that $\ver^*_\secpar$ does not abort.
\item[$D_{\mathsf{sim},\abort}( x,1^{\epsilon^{-1}},\ver^*_\secpar,\rho_\secpar)$]: 
A conditional distribution of $\siml_\abort(x,\ver^*_\secpar,\rho_\secpar)$, conditioned on that the output is not $\fail$.
\item[$D_{\mathsf{sim},\nonabort}(x,1^{\epsilon^{-1}},\ver^*_\secpar,\rho_\secpar)$]: A conditional distribution of $\siml_\nonabort(x,1^{\epsilon^{-1}},\ver^*_\secpar,\rho_\secpar)$, conditioned on that the output is not $\fail$.
\end{description}

Then we 
consider the following sequence of distributions implicitly indexed by $\secpar \in \mathbb{N}$, $x\in\lang \cap \bit^\secpar$, and $w\in \rel_\lang(x)$.
\begin{description}
\item[$D_1$]$\defeq D_{\mathsf{sim},\comb}(x,1^{\epsilon^{-1}},\ver^*_\secpar,\rho_\secpar)$.
We note that this can be rephrased as follows:

It samples from  $D_{\mathsf{sim},\abort}( x,\ver^*_\secpar,\rho_\secpar)$ with probability 
\[
\frac{\pasuc}{\pasuc+\pnasuc}
\]
and from $D_{\mathsf{sim},\nonabort}( x,\ver^*_\secpar,\rho_\secpar)$ with probability 
\[
\frac{\pnasuc}{\pasuc+\pnasuc}.
\]
\item[$D_2$]: It samples from $D_{\mathsf{sim},\abort}( x,1^{\epsilon^{-1}},\ver^*_\secpar,\rho_\secpar)$ with probability \redunderline{$\psuc$}
and 
from \\
$D_{\mathsf{sim},\nonabort}( x,1^{\epsilon^{-1}}, \ver^*_\secpar,\rho_\secpar)$ with probability \redunderline{$1-\psuc$}.
\item[$D_3$]: It samples from \redunderline{$D_{\mathsf{real},\abort}( x,w,1^{\epsilon^{-1}},\ver^*_\secpar,\rho_\secpar)$}  with probability $\psuc$
and 
from\\
$D_{\mathsf{sim},\nonabort}( x,1^{\epsilon^{-1}}, \ver^*_\secpar,\rho_\secpar)$ with probability $1-\psuc$.
\item[$D_4$]: It samples from $D_{\mathsf{real},\abort}( x,w,1^{\epsilon^{-1}},\ver^*_\secpar,\rho_\secpar)$  with probability \redunderline{$1-\pnasuc$}
and 
from\\
\redunderline{$D_{\mathsf{real},\nonabort}( x,1^{\epsilon^{-1}}, \ver^*_\secpar,\rho_\secpar)$} with probability \redunderline{$\pnasuc$}.
\item[$D_5$]: It samples from $D_{\mathsf{real},\abort}( x,w,1^{\epsilon^{-1}},\ver^*_\secpar,\rho_\secpar)$  with probability \redunderline{$\psuc$}
and 
from\\
$D_{\mathsf{real},\nonabort}( x,1^{\epsilon^{-1}}, \ver^*_\secpar,\rho_\secpar)$ with probability \redunderline{$1-\psuc$}.

We can see that this is exactly equal to $\OUT_{\ver^*_\secpar}\execution{\pro(w)}{\ver^*_\secpar(\rho_\secpar)}(x)$
\end{description}
In the following, we give an upper bound for advantage to distinguish each neighboring distributions.
We denote by $\calD_j$ to mean $\{D_j\}_{\secpar \in \mathbb{N}, x\in \lang \cap \bit^\secpar, w\in\rel_\lang(x)}$.
\begin{itemize}
\item $\calD_1\statind_{2\delta}\calD_2$:
By Eq.  \ref{eq:pasuc} and \ref{eq:pnasuc}, 
we have 
\begin{align*}
    1-\delta-\negl(\secpar) \leq \pasuc+\pnasuc\leq  1+\delta+\negl(\secpar).  
\end{align*}
Then, by using Eq.  \ref{eq:pasuc} again, we have
\[
\left|\frac{\pasuc}{\pasuc+\pnasuc}-\psuc\right|\leq 2\delta +\negl(\secpar)
\] 
where we use  
$\delta<\epsilon=o(1)$ and in particular $\delta<1/2$ for a sufficiently large $\secpar$ and $1-z< \frac{1}{1+z}$ and $ \frac{1}{1-z}< 1+2z$ for all reals $0<z< 1/2$.
Then $\calD_1\statind_{2\delta}\calD_2$ immediately follows.\footnote{Indeed, we can prove $\calD_1\statind_{c\delta}\calD_2$ for any constant $c>1$ similarly.}
\item $\calD_2\equiv \calD_3$: This immediately follows from Lemma \ref{lem:siml_abort} since it implies  $D_{\mathsf{sim},\abort}( x,\ver^*_\secpar,\rho_\secpar)$ and $D_{\mathsf{real},\abort}( x,w,1^{\epsilon^{-1}},\ver^*_\secpar,\rho_\secpar)$ are exactly the same distributions.
\item $\calD_3\compind_\delta \calD_4$: 
Here, we denote by $D_{j,\secpar,x,w}$ to mean $D_j$ for clarifying the dependence on $\secpar,x,w$. 
We consider distributions $D_{3,\secpar,x,w,\rho^*}$ and $D_{4,\secpar,x,w,\rho^*}$ for any state $\rho^*$ in the support of $D_{\mathsf{real},\abort}( x,w,1^{\epsilon^{-1}},\ver^*_\secpar,\rho_\secpar)$ defined as follows:
\begin{description}
\item[$D_{3,\secpar,x,w,\rho^*}$]: It outputs $\rho^*$  with probability $\psuc$
and 
samples from 
$D_{\mathsf{sim},\nonabort}(x,1^{\epsilon^{-1}}, \ver^*_\secpar,\rho_\secpar)$ with probability $1-\psuc$.
\item[$D_{4,\secpar,x,w,\rho^*}$]: It outputs  $\rho^*$  with probability $1-\pnasuc$
and 
samples from 
$D_{\mathsf{real},\nonabort}( x,1^{\epsilon^{-1}}, \ver^*_\secpar,\rho_\secpar)$ with probability $\pnasuc$.
\end{description}
Suppose that $\calD_3 \compind_\delta \calD_4$ does not hold. 
This means that there exists a non-uniform PPT distinguisher $\A=\{\A_\secpar,\rho_{\A,\secpar}\}_{\secpar \in\mathbb{N}}$ and a sequence $\{(x_\secpar,w_\secpar) \in (\lang \cap \bit^\secpar)\times \rel_\lang(x)\}_{\secpar\in \mathbb{N}}$ such that
\[
|\Pr[1\sample \A_\secpar(D_{3,\secpar,x_\secpar,w_\secpar})]-\Pr[1\sample \A(D_{4,\secpar,x_\secpar,w_\secpar})]-\delta
\]
is non-negligible. 
 By an averaging argument, there exists a sequence $\{\rho^*_\secpar\}_{\secpar \in \mathbb{N}}$ such that \[
|\Pr[1\sample \A_\secpar(D_{3,\secpar,x_\secpar,w_\secpar,\rho^*_\secpar})]-\Pr[1\sample \A(D_{4,\secpar,x_\secpar,w_\secpar,\rho^*_\secpar})]-\delta
\]
is non-negligible. 
We fix such $\{\rho^*_\secpar\}_{\secpar \in \mathbb{N}}$.
By using $\A$, we construct a non-uniform QPT distinguisher $\A'$
that is given $\{\rho_{\A,\secpar},\rho^*_\secpar\}_{\secpar \in \mathbb{N}}$ as an advice and
distinguishes  $\OUT_{\ver^*_{\nonabort}}\execution{\pro(w_\secpar)}{\ver^*_\secpar(\rho_\secpar)}(x_\secpar)$ and $\siml_\nonabort(x_\secpar,1^{\epsilon^{-1}},\ver^*_\secpar,\rho_\secpar)$ as follows:
\begin{description}
\item[$\A'(\rho')$:]
It is given an input $\rho'$, which is sampled from either  $\OUT_{\ver^*_{\nonabort}}\execution{\pro(w_\secpar)}{\ver^*_\secpar(\rho_\secpar)}(x_\secpar)$ or $\siml_\nonabort(x_\secpar,1^{\epsilon^{-1}},\ver^*_\secpar,\rho_\secpar)$.
It sets $\rho:=\rho'$ if $\rho'\neq \fail$ and otherwise sets $\rho:=\rho^*_\secpar$.
Then it runs $\A$ on input $\rho$ and outputs as $\A$ outputs.
\end{description}
Clearly, if $\rho'$ is sampled from   $\OUT_{\ver^*_{\nonabort}}\execution{\pro(w_\secpar)}{\ver^*_\secpar(\rho_\secpar)}(x_\secpar)$, then $\rho$ is distributed according to $D_{3,x_\secpar,w_\secpar,\rho^*_\secpar}$ and if $\rho'$ is sampled from   $\siml_\nonabort(x_\secpar,\ver^*_\secpar,\rho_\secpar)$, then $\rho$ is distributed according to $D_{4,x_\secpar,w_\secpar,\rho^*_\secpar}$.
Therefore, 
\[
|\Pr[1\sample \A'(\OUT_{\ver^*_{\nonabort}}\execution{\pro(w_\secpar)}{\ver^*_\secpar(\rho_\secpar)}(x_\secpar))]-\Pr[1\sample \A'(\siml_\nonabort(x_\secpar,1^{\epsilon^{-1}},\ver^*_\secpar,\rho_\secpar))]|-\delta
\]
is non-negligible.
This contradicts Lemma \ref{lem:siml_nonabort}.
Therefore, we have $\calD_3\compind_\delta \calD_4$.

\item $\calD_4\compind_\delta \calD_5$:
This immediately follows from Eq. \ref{eq:pnasuc}.
\end{itemize} 
By combining the above, we obtain  Eq. \ref{eq:siml_comb}.
\end{proof}

\else  

\input{LNCS_Appendix_Extraction_Proof}

\input{LNCS_Appendix_Argument}
\fi
\newpage
  \tableofcontents
  \thispagestyle{empty}
 \fi
\end{document}